\long\def\figwindownonum[#1,#2,#3,#4] {
  \begin{window}[#1,#2,{#3},{\centering#4\par}] }
\def\endfigwindownonum{\end{window}}
\newtheorem{definition}{Definition}
\newtheorem{example}{Example}
\newtheorem{lemma}{Lemma}
\newtheorem{theorem}{Theorem}
\newtheorem{proposition}{Proposition}
\newtheorem{corollary}{Corollary}
\newcommand{\eofex}{\mbox{}\nobreak\hfill\hspace{0.5em}$\blacksquare$}
\newcommand{\bigunion}{\bigcup}
\newcommand{\eset}[2]{\set{\rg{#1}{,}{#2}}}
\newcommand{\floor}[1]{\lfloor{#1}\rfloor}
\newcommand{\given}{\;\ifnum\currentgrouptype=16 \middle\fi|\;}
\newcommand{\intersection}{\cap}
\newcommand{\pair}[2]{\tuple{{#1},{#2}}}
\newcommand{\rg}[3]{{#1}{#2}\,\ldots{#2}\,{#3}}
\newcommand{\range}[2]{\rg{#1}{,}{#2}}
\newcommand{\rangeplus}[2]{{#1} + \cdots + {#2}}
\newcommand{\sel}[2]{\{{#1}\mid{#2}\}}
\newcommand{\sequence}[1]{[#1]}
\newcommand{\set}[1]{\{{#1}\}}
\newcommand{\tuple}[1]{\langle{#1}\rangle}
\newcommand{\union}{\cup}
\newcommand{\ttb}[1]{\texttt{\textbf{#1}}} 
\newcommand{\IF}{\mathbin{\ttb{:-}}}
\newcommand{\WEAKIF}{\mathbin{\ttb{:\raisebox{-0.7ex}{\textasciitilde{}}\,}}}
\newcommand{\AND}{\ttb{,}\,}
\newcommand{\naf}{\ttb{not}\,}
\newcommand{\END}{\ttb{.}}
\newcommand{\MIN}{\ttb{\#minimize}\,}
\newcommand{\AGG}[1]{\ttb{[}#1\ttb{]}}
\newcommand{\AGGC}[1]{\bm{\{}#1\bm{\}}}
\newcommand{\AGGR}[2]{\AGGC{{#1}\ttb{;\,...;\,}{#2}}}
\newcommand{\AGGCOUNT}[1]{\bm{\#count\,\{}#1\bm{\}}}
\newcommand{\SEP}{\ttb{,}\,}
\newcommand{\MID}{\ttb{:}\,}
\newcommand{\pred}[2]{\ttb{#1}_{#2}}
\newcommand{\sig}[1]{\mathrm{At}(#1)}
\newcommand{\GLred}[2]{{#1}^{#2}}
\newcommand{\as}[1]{\mathrm{AS}(#1)}
\newcommand{\supp}[1]{\mathrm{Supp}(#1)}
\newcommand{\iwires}[1]{\mathrm{InF}(#1)}
\newcommand{\assigntrue}[1]{\textbf{T}#1}
\newcommand{\assignfalse}[1]{\textbf{F}#1}
\newcommand{\nogoodsof}[1]{\Gamma_{#1}}
\newcommand{\binomproglt}[2]{P^{#1}_{#2}}
\newcommand{\binomprogopt}[2]{O^{#1}_{#2}}
\newcommand{\classicequiv}{\equiv_{c}}
\newcommand{\nogoodsforsupp}[1]{\Gamma_{\mathrm{supp}}(#1)}
\newcommand{\head}[1]{\mathrm{head}(#1)}
\newcommand{\weightfun}[2]{w_{#1,#2}}
\newcommand{\comparator}[3]{\tuple{#1, #2, #3}}
\newcommand{\matof}[1]{(#1)}
\newcommand{\matsym}[1]{\mathbf{#1}}
\newcommand{\operatorsym}[1]{\mathcal{#1}}
\newcommand{\propfun}[1]{\operatorsym{P}_{#1}}
\newcommand{\propfunfull}{\operatorsym{P}}
\newcommand{\aspify}[1]{\mathrm{ASP}(#1)}
\newcommand{\wirevalues}[2]{\matsym{WV}_{#1}(#2)}
\newcommand{\grayout}[1]{\textcolor{lightgray}{#1}}
\newcommand{\wzero}{\phantom{0}0}
\newcommand{\system}[1]{\textsc{#1}}
\newcommand{\pipeline}[1]{\textit{#1}}
\newcommand{\ve}[1]{\textrm{#1}}
\newcommand{\wt}[1]{\ensuremath{\mathfrak{#1}}}
\newcommand{\widenodelabel}[1]{\nodelabel{}\nodelabel{#1}\nodelabel{}}
\tikzset{snpicturestyledensedense/.style={
  xscale=0.14,
  yscale=-0.35,
  every node/.style={minimum size=0pt, inner sep=1.4pt}
}}
\tikzset{snpicturestyledense/.style={
  xscale=0.16,
  yscale=-0.4,
  every node/.style={minimum size=0pt, inner sep=1.6pt}
}}
\tikzset{snpicturestylelarge/.style={
  xscale=0.20,
  yscale=-0.5,
  every node/.style={minimum size=0pt, inner sep=2pt}
}}
\tikzset{snpicturestyle/.style={snpicturestyledense}}
\newenvironment{comparatornetwork}[2]
{
  \begin{tikzpicture}[snpicturestyle]
  \setcounter{sncolumncounter}{0}
  \foreach \i in {1, ..., #1}
  {
    \draw (0,\i)--(#2-1,\i);
  }
}
{
  \end{tikzpicture}%
}
\newcommand{\singlenodeconnection}[3]{%
    \draw[snlinestyle] (\value{sncolumncounter},#1)--(\value{sncolumncounter},#2);
    \draw (\value{sncolumncounter},#1) node[#3]{};
    \draw (\value{sncolumncounter},#2) node[#3]{};
}
\newcounter{sncolumncounter}
\newcounter{snrowcounter}
\def \nodelabel#1{%
\setcounter{snrowcounter}{1}
 \foreach \i in {#1}{%
   \draw (\value{sncolumncounter},\value{snrowcounter}) node[anchor=south]{\i};
   \addtocounter{snrowcounter}{1}
 }
 \addtocounter{sncolumncounter}{1}
}
\tikzset{snlinestylethick/.style={
  line width=0.6pt,
  draw=white,
  double=black,
  double distance between line centers=2pt
}}
\tikzset{snlinestyle/.style={}}
\def \nodeconnection#1{%
  \foreach \i in {#1}{%
    \GetTokens{nodesrc}{nodedest}{\i}
    \draw[snlinestyle] (\value{sncolumncounter},\nodesrc) node[circle,fill=black]{}--(\value{sncolumncounter},\nodedest) node[circle,fill=black]{};
  }
  \addtocounter{sncolumncounter}{1}
}
\begin{document}

\title{Boosting Answer Set Optimization with \linebreak[4]
Weighted Comparator Networks}

\author[J. Bomanson and T. Janhunen]
{JORI BOMANSON${~}^{1}$ and TOMI JANHUNEN${~}^{1,2}$ \\
${}^{1)}$%
Department of Computer Science, Aalto University \\
P.O.\ Box 15400, FI-00076 AALTO, Finland \\
\email{Jori.Bomanson@aalto.fi, Tomi.Janhunen@aalto.fi} \\
~\\
${}^{2)}$%
Information Technology and Communication Sciences, Tampere University \\
FI-33014 TAMPERE UNIVERSITY, Finland \\
\email{Tomi.Janhunen@tuni.fi}}

\maketitle

\label{firstpage}

\noindent \textbf{Note:}
This article has been published in \emph{Theory and Practice of Logic Programming}, 20(4), 512-551, © The Author(s), 2020.

~

\begin{abstract}
Answer set programming (ASP) is a paradigm for modeling knowledge intensive
domains and solving challenging reasoning problems.
In ASP solving, a typical strategy is to preprocess problem instances by
rewriting complex rules into simpler ones.
Normalization is a rewriting process that removes extended rule types
altogether in favor of normal rules.
Recently, such techniques led to optimization rewriting in ASP, where the goal
is to boost answer set optimization by refactoring the optimization criteria of
interest.
In this paper, we present a novel, general, and effective technique for
optimization rewriting based on comparator networks, which are specific kinds
of circuits for reordering the elements of vectors.
The idea is to connect an ASP encoding of a comparator network to the literals
being optimized and to redistribute the weights of these literals over the
structure of the network.
The encoding captures information about the weight of an answer set in
auxiliary atoms in a structured way that is proven to yield exponential
improvements during branch-and-bound optimization on an infinite family of
example programs.
The used comparator network can be tuned freely, e.g., to find the best size
for a given benchmark class.
Experiments show accelerated optimization performance on several benchmark
problems.
\end{abstract}

\begin{keywords}
answer set programming,
comparator network,
normalization,
optimization rewriting,
translation
\end{keywords}

\section{Introduction}

Answer set programming (ASP) \cite{BET11:cacm,JN16:aimag} is a
declarative programming paradigm that offers rich rule-based languages
for modeling and solving challenging reasoning problems in knowledge
intensive domains. In ASP, various reasoning tasks reduce to the
computation of answer sets for a given input program and,
typically, the program is instantiated into a variable-free
\emph{ground program} in order to simplify the computation of answer sets.
Moreover, the actual search for answer sets may generally rely on
preprocessing steps where more complex ground rules are rewritten in
terms of simpler ones either to gain better performance or to accommodate
back-end solvers with limited language support.
Such preprocessing includes the process of \emph{normalization}, which produces
only normal rules \cite{BJ13:lpnmr,BGJ14:jelia,bomanson17}.

More recently, similar rewriting techniques were developed for
refactoring \emph{optimization statements} \cite{BGJ16:iclp},
giving rise to the concept of \emph{optimization rewriting} in ASP
with the goal of boosting search performance in answer set optimization.
Several of the explored designs for normalization and rewriting
rely on rule-based encodings of gadgets such as
\emph{binary decision diagrams} (BDDs) or
\emph{sorting networks} \cite{Batcher68:afips}.
Sorting networks form extensively studied classes of circuits with
applications to sorting on parallel computers and encoding cardinality
constraints or other pseudo-Boolean constraints in logical formalisms such as
Boolean satisfiability (SAT) and ASP.
They are used to sort vectors of elements by performing predetermined series of
\emph{compare-exchange} operations on pairs of input elements by elementary
circuits known as \emph{comparators}.
More generally, networks with such structure are known as
\emph{comparator networks} whether they guarantee to produce sorted output or
not.
It is convenient to represent comparator networks as Knuth diagrams,
as illustrated in Figure~\ref{fig:example-comparator-network}. The
input is fed to the left end of the circuit and it proceeds through
the network along the vertical lines representing the \emph{wires} of the
network. Individual
comparators are marked with bullets connected by lines and eventually
they produce the output at the right end of the circuit.

In this paper, we concentrate on rewriting optimization statements used in ASP
into modified optimization statements involving auxiliary atoms defined in terms of
newly introduced normal rules. The
motivation behind the introduction of new atoms is to offer modern
answer set solvers additional branching points as well as further
concepts to learn about.
There are theoretical proof complexity results--given in the context of ASP by
Lifschitz and Razborov \citeyear{LR06:acmtocl}, Anger et
al.~\citeyear{AGJS06:ecai}, and Gebser and Schaub
\citeyear{asp_tableau_calculi_journal}---illustrating the promise behind new
auxiliary atoms, potentially leading to exponentially smaller search spaces.

SAT encodings of sorting networks, among others, are specifically known to cut
down otherwise exponential numbers of clauses generated by a SAT Modulo
Theories solver (SMT) when used to express particularly troublesome sets of
cardinality constraints \cite{abniolrost13}.
The novel rewriting
scheme presented in this paper exploits comparator networks as the
underlying design, but in contrast to previous work by Bomanson
et al.\ \citeyear{BGJ16:iclp}
treats the weights of an optimization statement in a different way.
To formulate the essential idea of this paper independently of ASP, we
generalize comparators to accept weighted input signals and introduce
the resulting notion of
\emph{weighted comparator networks}.
We exploit these networks in deriving meaningful identities on linear
combinations of weights
and signals. The main technique redistributes weights
associated with the input signals of a comparator network over the
structure of the network. The net effect of the redistribution is that
weights get smaller and their number increases while the sum of
weights stays invariant. 
These weights are distributed such that, in general, they become increasingly
uniform towards the end of the network.
At the very end, given a sufficiently deep and well connected network, the
outputs of the comparator network will be weighted by the minimum $m$ of all
initial input weights.
In this way, the
sorted output atoms form a kind of a sliding switch using which the solver
may make assumptions on the weight of answer sets being sought for.
In
recursive designs, such as sorting networks based on mergers
\cite{Batcher68:afips}, the same line of reasoning can be applied
recursively at particular inner levels of the network.

The idea discussed above gives rise to a rewriting scheme for optimization
statements in ASP.
We analyze the scheme formally and prove that it enables an exponential
improvement in branch-and-bound solving performance on an example family of ASP
optimization programs.
Moreover, the optimization rewriting scheme is realized in a new tool called
\system{pbtranslate}.
We present an experimental study on the performance effect of the tool when used
as a preprocessor for the state-of-the-art answer set solver \system{clasp}.
Our results identify a number of benchmark problems where
the search for optimal answer sets is accelerated.

The rest of this article is organized as follows.
In Section~\ref{section:preliminaries}, we give the basic definitions
and notations related with comparator and sorting networks. Furthermore,
we review the basic notions of answer set programming to the extent
needed in this paper.
The process of weight propagation over comparator networks is
explained in Section~\ref{section:propagation} and shown to preserve
the correct interpretation of pseudo-Boolean expressions in general.
Section~\ref{section:application} concentrates on applying weight
propagation to rewriting ASP optimization statements.
In this context, a theorem is presented on the correctness of such rewritings,
when the underlying comparator networks are encoded with rules and weights are
propagated over the network according to a general scheme.
A formal analysis of the performance improving potential behind the rewritings
is also presented for an example family of answer set optimization programs.
This analysis is experimentally verified to be relevant to actual answer set
solvers on the family of programs.
Moreover, the rewritings are also evaluated in extensive experiments over a
range of relevant benchmarks from, e.g., answer set programming competitions.
An account of related work is provided Section \ref{section:related}.
Finally, the paper is concluded in Section~\ref{section:conclusion}
with a summary and a sketch of future work.

\begin{figure}
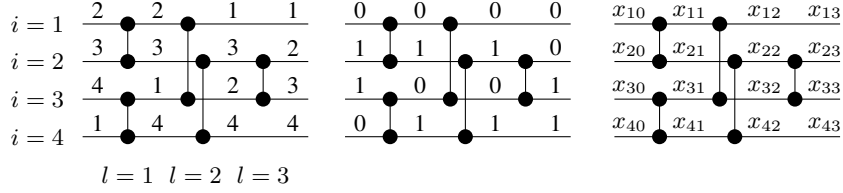

  \centering
  \tikzset{snpicturestyle/.style={snpicturestylelarge}}
  \begin{comparatornetwork}{4}{16}
    \draw (-3,1) node {$i = 1$};
    \draw (-3,2) node {$i = 2$};
    \draw (-3,3) node {$i = 3$};
    \draw (-3,4) node {$i = 4$};
    \widenodelabel{\ve{2},\ve{3},\ve{4},\ve{1}}
    \draw (\value{sncolumncounter},5) node {$l = 1$};
    \nodeconnection{{1,2},{3,4}}
    \widenodelabel{\ve{2},\ve{3},\ve{1},\ve{4}}
    \draw (\value{sncolumncounter}+0.5,5) node {$l = 2$};
    \nodeconnection{{1,3}}
    \nodeconnection{{2,4}}
    \widenodelabel{\ve{1},\ve{3},\ve{2},\ve{4}}
    \draw (\value{sncolumncounter},5) node {$l = 3$};
    \nodeconnection{{2,3}}
    \widenodelabel{\ve{1},\ve{2},\ve{3},\ve{4}}
  \end{comparatornetwork}
  ~~~
  \begin{comparatornetwork}{4}{16}
    \widenodelabel{\ve{0},\ve{1},\ve{1},\ve{0}}
    \draw (\value{sncolumncounter},5) node {\phantom{$l = 1$}};
    \nodeconnection{{1,2},{3,4}}
    \widenodelabel{\ve{0},\ve{1},\ve{0},\ve{1}}
    \draw (\value{sncolumncounter}+0.5,5) node {\phantom{$l = 2$}};
    \nodeconnection{{1,3}}
    \nodeconnection{{2,4}}
    \widenodelabel{\ve{0},\ve{1},\ve{0},\ve{1}}
    \draw (\value{sncolumncounter},5) node {\phantom{$l = 3$}};
    \nodeconnection{{2,3}}
    \widenodelabel{\ve{0},\ve{0},\ve{1},\ve{1}}
  \end{comparatornetwork}
  ~~~
  \begin{comparatornetwork}{4}{16}
    \widenodelabel{$x_{10}$,$x_{20}$,$x_{30}$,$x_{40}$}
    \draw (\value{sncolumncounter},5) node {\phantom{$l = 1$}};
    \nodeconnection{{1,2},{3,4}}
    \widenodelabel{$x_{11}$,$x_{21}$,$x_{31}$,$x_{41}$}
    \draw (\value{sncolumncounter}+0.5,5) node {\phantom{$l = 2$}};
    \nodeconnection{{1,3}}
    \nodeconnection{{2,4}}
    \widenodelabel{$x_{12}$,$x_{22}$,$x_{32}$,$x_{42}$}
    \draw (\value{sncolumncounter},5) node {\phantom{$l = 3$}};
    \nodeconnection{{2,3}}
    \widenodelabel{$x_{13}$,$x_{23}$,$x_{33}$,$x_{43}$}
  \end{comparatornetwork}
  \caption{\label{fig:example-comparator-network}%
    An example sorting network on four wires $i$ having three layers of
    comparators at levels $l$.
    On the left, the network is shown operating on the input numbers
    $\sequence{\ve{2},\ve{3},\ve{4},\ve{1}}$
    and in the middle, on the binary sequence
    $\sequence{\ve{0},\ve{1},\ve{1},\ve{0}}$.
    Each of the comparators takes the two values from its left and places them
    on its right in ascending order.
    For example, the comparator in the upper left corner keeps $[2, 3]$ as
    $[2, 3]$ whereas the one below it turns $[4, 1]$ to $[1, 4]$.
    By the properties of sorting networks, these and any other input sequences
    become sorted in the end.
    The rightmost diagram illustrates the indexing convention used in this paper
    for values associated with wires.
  }
\end{figure}

\section{Preliminaries}
\label{section:preliminaries}

In this section, we review the basic definitions of ASP as well as
comparator networks which also cover sorting networks
as their special case. To reach the goals of this paper, it is also
essential to translate comparator networks into ASP and to establish that
the resulting negation-free logic programs faithfully capture
the compare-exchange operations performed by networks.

\subsection{Answer Set Programs and Nogoods}

Below, we present (ground) answer set \emph{programs} as sets consisting of
\emph{normal rules},
which are typical primitives for modeling search problems~\cite{JN16:aimag},
and \emph{nogoods},
which are typical primitives for modeling search procedures~\cite{GKS12:aij}.
To this end, we first define concepts related to the latter.
In particular, an \emph{assignment} $A$ is a set of \emph{signed literals}
$\sigma$ of the form $\assigntrue{x}$ or $\assignfalse{x}$, each of which
expresses the assignment of an atom $x$ to true or false, respectively.
Intuitively, an assignment is a three-valued interpretation that may leave
any atoms as undefined.
A nogood $\delta$ is syntactically identical to an assignment, but a
nogood carries the meaning that all partial assignments $A \supseteq \delta$ are
forbidden.
A \emph{constraint} $\Gamma$ is a set of nogoods $\delta$.
An assignment $A$ is in \emph{conflict} with a nogood $\delta$ if $\delta
\subseteq A$, and with a set of nogoods $\Gamma$ if it is in conflict with any
$\sigma' \in \Gamma$.
Formally, an answer set program
$P$
is a set of normal rules of the form (\ref{eq:normal-rule}), shown below,
and nogoods $\delta$.
Each program is assumed to be associated with a predefined \emph{signature}
$\sig{P}$ that is a superset of the atoms occurring in the program.
Intuitively, the \emph{head} atom $a$ of a normal rule is to be derived, if the other
rules in $P$ can be used to derive all atoms $\rg{b_1}{,}{b_k}$ in the
\emph{positive body} and no atoms $\rg{c_1}{,}{c_m}$ in the
\emph{negative body} of the rule.
The set of all head atoms $a$ of rules in $P$ is denoted by $\head{P}$.
A default \emph{literal} $l$ is either an atom $a$ or its negation $\naf a$,
expressing success or failure to prove $a$, respectively.
An \emph{optimization program} $O$ is a pair $\pair{P}{e}$ where $P$
is an answer set program and $e$ is an \emph{objective function} in the form of
a pseudo-Boolean expression
$\rangeplus{w_1l_1}{w_nl_n}$ with weights $\range{w_1}{w_n}$ and literals
$\range{l_1}{l_n}$. The \emph{objective function} $e$ can be written
as a set of \emph{weak constraints} of the form (\ref{eq:weak-constraint})
in the \emph{ASP-Core-2} input language \cite{aspcore201c} or as a single
\emph{optimization statement} (\ref{eq:optimization-statement}).
For convenience, we consider certain further \emph{extensions} to answer
set programs: namely \emph{choice rules} of the form (\ref{eq:choice-rule}) and
\emph{cardinality constraints} of the form (\ref{eq:cardinality-constraint}).
Intuitively, a choice rule differs from a normal rule in that it justifies the
derivation of any subset of its head atoms $\range{a_1}{a_m}$ if its body
conditions are satisfied, and that subset is allowed to be empty.
A cardinality constraint forbids the pseudo-Boolean expression
$\rangeplus{1l_1}{1l_n}$ from taking a value less than $k$.
That is, it requires at least $k$ of the literals $\range{l_1}{l_n}$ to be true.
\begin{align}
&
\label{eq:normal-rule}
a \IF\rg{b_1}{\AND}{b_k}\AND\rg{\naf c_1}{\AND}{\naf c_m}\END
\\ 
\label{eq:weak-constraint}
&
\begin{array}{@{}ccc@{}}
\WEAKIF l_1 \END ~ \AGG{w_1\SEP 1}
&
\dots
&
\WEAKIF l_n \END ~ \AGG{w_n\SEP n}
\end{array}
\\ 
&
\label{eq:optimization-statement}
\MIN\AGGR{w_1\SEP 1\MID l_1}{w_n\SEP n\MID l_n} \END
\\ 
&
\label{eq:choice-rule}
\set{\range{a_1}{a_m}} \IF\rg{l_1}{\AND}{l_n}\END
\\ 
&
\label{eq:cardinality-constraint}
\IF \AGGCOUNT{\rg{l_1}{;}{l_n}} < k
\end{align}

An interpretation $I\subseteq\sig{P}$ of a program $P$ is an assignment that is
\emph{complete} in that it leaves no atom $a\in\sig{P}$ undefined, and which is
here represented as the set of atoms assigned true.
An interpretation $I\subseteq\sig{P}$ \emph{satisfies}
a nogood $\delta$ if there is any $\assigntrue{a} \in \delta$ such that
$a \not\in I$ or any $\assignfalse{a} \in \delta$ such that $a \in I$;
it satisfies
a rule (\ref{eq:normal-rule})
if it satisfies the nogood
$\{
   \assignfalse{a}$,
  $\assigntrue{b_1}$, $\dots$, $\assigntrue{b_k}$,
  $\assignfalse{c_1}$, $\dots$, $\assignfalse{c_m}
\}$;
and it satisfies the answer set program $P$
if it satisfies all nogoods and rules in $P$.
The \emph{reduct} $\GLred{P}{I}$ of~$P$ with respect to~$I$ contains the
rule $a\IF\rg{b_1}{\AND}{b_k}$ for each rule
(\ref{eq:normal-rule}) in $P$ with
$\eset{c_1}{c_m}\intersection I=\emptyset$.
The set $\as{P}$ of \emph{answer sets} of a program $P$ is the set of
interpretations $I\subseteq\sig{P}$ that satisfy $P$ and are
$\subseteq$-minimal among the interpretations $J$ that satisfy all rules in
$P^I$ and the condition $J \setminus \head{P} = I \setminus \head{P}$.
This last condition is an extension to the usual definition of answer
sets~\cite{BET11:cacm} that supports the convenient use of monotone constructs
in the form of nogoods and non-monotone constructs in the form of normal rules
in a single answer set program.
In particular, for programs with nogoods, the defined answer sets coincide with
the standard ones and for programs with only nogoods, they coincide with
the classical models of the program.
Regarding optimization, given a pseudo-Boolean expression
$e=\rangeplus{w_1l_1}{w_nl_n}$, the \emph{value} $e(I)$ of $e$ in an
interpretation $I\subseteq\sig{P}$ is the sum of weights $w_i$ for literals
$l_i$ satisfied by $I$.
An answer set $I$ of a program $P$
is optimal for the optimization program $\pair{P}{e}$ iff $e(I)$ equals
$\min\sel{e(J)}{J\in\as{P}}$.  In general, an answer set program $P$ has a set of
$e$-optimal answer sets, which can be enumerated by modern ASP solvers
such as \system{clasp}~\cite{gekakarosc15a}.

Regarding the semantics of choice rules and cardinality constraints, we treat
both as syntactic shortcuts.
To this end, a choice rule~(\ref{eq:choice-rule}) stands for the
set of normal rules
\begin{align*}
\set{d \IF \rg{l_1}{\AND}{l_n} \END}
\union
\sel{
a_i \IF \naf a'_i, d \END
\hspace{4pt}
a'_i \IF \naf a_i \END
}{
1 \le i \le m
},
\end{align*}
where atoms $d$ and $a'_i$ for $1 \le i \le m$ are new auxiliary atoms not
appearing elsewhere in the program.
On the other hand, a cardinality constraint~(\ref{eq:cardinality-constraint})
stands for the set of nogoods
\begin{align*}
  \sel{
    \sel{\assignfalse{a}}{\text{atom $a \in X$}}
    \union
    \sel{\assigntrue{a}}{\naf a \in X}
  }{
    X \subseteq \eset{l_1}{l_n},
    |X| = n - k + 1
  },
\end{align*}
which individually forbids each $(n - k + 1)$ subset of the literals
$\eset{l_1}{l_n}$ from being falsified.
This ensures that at least $k$ of the literals may be satisfied.

In addition to the answer sets of a program, we consider a superset
of them, namely the set of \emph{supported models} of the program
\cite{apblwa87a}.
These are important in answer set solving due to this superset relation: answer
sets can be characterized as supported models that satisfy additional
conditions.
This is the approach behind, e.g., the ASP solver
\system{clasp}~\cite{GKS12:aij}.
Formally,
the set $\supp{P}$ of supported models of a program $P$ is the set of
interpretations $I \subseteq \sig{P}$
that satisfy the program $P$ and the condition that for every atom $a \in I$,
there is some rule
(\ref{eq:normal-rule}) in $P$ with $a$ as the head and with
$\eset{b_1}{b_k} \subseteq I$
and
$\eset{c_1}{c_m}\intersection I=\emptyset$.
In order to capture this semantics in the form of a constraint,
we define
the \emph{supported model constraint}
$\nogoodsforsupp{P}$
of an answer set program $P$ to be the set of nogoods
\begin{align*}
\{
  \sel{\assigntrue{a}}{a \in I}
  \union
  \sel{\assignfalse{a}}{a \in \sig{P} \setminus I}
\given
  \text{$I \subseteq \sig{P}, I \not\in \supp{P}$}
\}
\end{align*}
that is satisfied exactly by the supported models of $P$.
The constraint $\nogoodsforsupp{P}$ defined here is naive and generally huge.
However, it is used only for theoretical considerations in this paper,
and it is therefore sufficient.
In actual implementations \cite{GKS12:aij}, it is better to approach supported
models via \emph{Clark's completion} \cite{clark78a}.

The role of nogoods in the definition of an answer set in this paper is to
reject unwanted answer sets.
In accordance with this, the addition of nogoods to an answer set program has a
\emph{monotone} impact on the answer sets of the program when the nogoods
involve no new atoms.
In particular, the set $\as{P \union \Gamma}$ of answer sets of
the union of an answer set program $P$ and a
constraint $\Gamma$ on the atoms in $\sig{P}$
can be obtained as
$
\as{P \union \Gamma}
=
\sel{M \in \as{P}}{\text{$M$ satisfies $\Gamma$}}
$.
This is in contrast with the generally \emph{non-monotone} behavior of answer
set semantics, due to which the addition of a rule, such as a fact, may
not only decrease, but also increase the number of answer sets.

We define two programs $P$ and $Q$ over the same signature $D$ to be
\emph{classically equivalent}, denoted by $P \classicequiv Q$,
if each interpretation $I \subseteq D$ satisfies either both $P$ and $Q$ or
neither $P$ nor $Q$.
Observe that this equivalence concept is preserved under the addition of
nogoods in the following way.
Given any constraint $\Lambda$, if two programs $P$ and $Q$ are classically
equivalent, then so are $P \union \Lambda$ and $Q \union \Lambda$.
In other words, $\classicequiv$ is a \emph{congruence} relation for $\union$.

\subsection{Comparator/Sorting Networks}
\label{section:comparator-networks}

Intuitively, a \emph{comparator} checks whether a predetermined pair of
input elements is ordered and, if not, changes their order. Formally,
we define a comparator to be a tuple $\comparator{i}{j}{l}$ consisting of
\emph{wires} $1 \le i < j$ and a \emph{level} $l \ge 1$.
In this notation, the comparators of the network in
Figure~\ref{fig:example-comparator-network} are
$\comparator{1}{2}{1},
\comparator{3}{4}{1},
\comparator{1}{3}{2},
\comparator{2}{4}{2},
\comparator{2}{3}{3}$.
We consider two
comparators \emph{compatible} if their sets $\set{i, j}$ of wires
are disjoint or their levels $l$ distinct.
A \emph{(comparator) network} $N$ is a set of mutually compatible
comparators. The independence of the comparators from the input beyond
the input size makes comparator networks \emph{data oblivious}.
This facilitates their implementation in hardware and representation in logical
formalisms.
We say that a network $N$ is \emph{confined} to a set $I$ of wires and
an interval $E$ of levels if every comparator
$\comparator{i}{j}{l} \in N$ satisfies $\set{i,j}\subseteq I$ and $l\in
E$. Unless stated otherwise, we assume that each network
$N$ is confined to both $I=\eset{1}{n}$ and $E=\eset{1}{d}$ where $n$
and $d$ give the width and the depth parameters of the network $N$,
respectively.
A \emph{layer} $L$ is a network of comparators confined to a single
level.  Accordingly, the wires of comparators of a layer must be
distinct.  The \emph{layer $L$ of a network $N$ at level $l$} is $L =
\set{\comparator{i}{j}{l}\in N}$.

Given an \emph{input vector} $\vec{x}$ consisting of $n$ comparable values, a
layer $L$ of comparators permutes them by swapping every wrongly-ordered pair
$x_i > x_j$ occurring on the wires $i, j$ of any single comparator
$\comparator{i}{j}{l}\in L$, while leaving all other values intact.
Furthermore, the output of a network $N$ of depth $d$ is
$f_d(\cdots(f_1(\vec{x}))\cdots)$ where each function $f_l$ gives the output of
the layer at level $l$.
Consequently, as the output of a single layer is always some permutation of its
inputs, so is the output of the entire network.
Put otherwise, the output is identical to the input when regarded as a multiset
of values.
Given an input vector $\vec{x}$, we define a network $N$ of depth $d$ to yield a
two-dimensional array of \emph{wire values}
$\wirevalues{N}{\vec{x}} = \matof{x_{il}}$
indexed by wire $i \in \eset{1}{n}$ and level $l \in \eset{0}{d}$, such that
the column of values at level $l$ is the output of the network of layers up to
$l$, i.e., the network $\set{\comparator{i}{j}{l'}\in N \given l' \le l}$.
We illustrate wire values superimposed over networks as in
Figure~\ref{fig:example-comparator-network}.

A \emph{sorting network} $N$ is a comparator network that sorts every input
$\vec{x}$ into a respective output $\vec{y}$ such that $\rg{y_1}{\le}{y_n}$.
A \emph{confined network} $C$ is a tuple $C = \tuple{N, I, E}$ where
$N$ is a network confined to the sets of wires $I$ and levels $E$.
Confined networks
$\tuple{N_1, I_1, E_1}$ and $\tuple{N_2, I_2, E_2}$
are \emph{compatible} if $I_1 \intersection I_2 = \emptyset$ or $E_1
\intersection E_2 = \emptyset$.
A \emph{decomposition} $D=\eset{S_1}{S_m}$ of a network $N$ is a set
of mutually compatible confined networks $S_i = \tuple{N_i, I_i, E_i}$
such that $\bigunion_{i = 1}^m N_i = N$.

\subsection{Capturing Comparator Networks with ASP}
\label{section:translation}

A comparator network $N$ for Boolean input vectors $\vec{x}$ can be translated
into ASP as follows.
We introduce an atom $\pred{x}{il}$
to capture the wire value $x_{il}$ for each wire $i$ and level $l$ so that
$\pred{x}{il}$ is to be true iff $x_{il}=1$ in the
matrix $\wirevalues{N}{\vec{x}}=\matof{x_{il}}$.
The effect of a comparator $\comparator{i}{j}{l}\in N$ can be captured
in terms of the following rules \cite{BGJ16:iclp} for $0<l\le n$:
\begin{center}
$\begin{array}{l@{\hspace{2em}}l@{\hspace{2em}}l}
    \pred{x}{il}\IF\pred{x}{i(l-1)}\AND\pred{x}{j(l-1)}\END &
    \pred{x}{jl}\IF\pred{x}{i(l-1)}\END &
    \pred{x}{jl}\IF\pred{x}{j(l-1)}\END
\end{array}$
\end{center}
In addition, if a wire $i$ at level $l$ is not incident with
any comparator, a rule of \emph{inertia} is introduced:
\begin{center}
$\pred{x}{il}\IF\pred{x}{i(l-1)}\END$
\end{center}
We write $\aspify{N}$ for the ASP translation of $N$ in this way
and state the following result:

\begin{lemma}
\label{lemma:correspondence}
Let $N$ be a comparator network of width $n$ and depth $d$ and
$\aspify{N}$ its translation into a negation-free answer set program.
Also, let $\vec{x}$ be any Boolean input vector for $N$,
$\wirevalues{N}{\vec{x}}=\matof{x_{il}}$ the resulting matrix of wire values,
and $\iwires{\vec{x}}=\sel{\pred{x}{i0}}{x_{i0}=1}$ an
encoding of the input vector $\vec{x}$ as facts.
Then $\aspify{N}\union\iwires{\vec{x}}$ has a unique answer set
$X\subseteq\sig{\aspify{N}}$ such that for all wires $1\leq i\leq n$
and for all levels $0\leq l\leq d$, the atom $\pred{x}{il}\in X$ iff
the wire value $x_{il}=1$.
\end{lemma}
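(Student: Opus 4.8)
The plan is to prove the existence and uniqueness of the answer set jointly with the characterization $\pred{x}{il}\in X \iff x_{il}=1$, by induction on the level $l$. Since $\aspify{N}\union\iwires{\vec{x}}$ is negation-free (a definite logic program), its answer sets coincide with its $\subseteq$-minimal models, and a definite program has a unique least model equal to the least fixpoint of the immediate-consequence operator; this already secures existence and uniqueness, so the real content is to show that this least model, restricted to the atoms $\pred{x}{il}$, encodes exactly the wire-value matrix $\wirevalues{N}{\vec{x}}=\matof{x_{il}}$. (One should also remark that $\sig{\aspify{N}}$ contains no atoms beyond the $\pred{x}{il}$ and that $\iwires{\vec{x}}$ only adds facts $\pred{x}{i0}$, so $\head{\aspify{N}\union\iwires{\vec{x}}}$ covers all relevant atoms and the side condition $J\setminus\head{P}=I\setminus\head{P}$ in the definition of answer sets is vacuous here.)

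For the base case $l=0$: the only rules with a level-$0$ atom in the head are the facts in $\iwires{\vec{x}}$, namely $\pred{x}{i0}$ for exactly those $i$ with $x_{i0}=1$. Hence in the least model, $\pred{x}{i0}$ is true iff $x_{i0}=1$, matching the $l=0$ column of $\wirevalues{N}{\vec{x}}$ by definition. For the inductive step, assume the claim holds for level $l-1$, i.e.\ $\pred{x}{i(l-1)}$ is true in the least model iff $x_{i(l-1)}=1$. The rules defining level-$l$ atoms are exactly the comparator rules for each $\comparator{i}{j}{l}\in N$ and the inertia rules for wires not touched at level $l$; crucially, these are the \emph{only} rules with a level-$l$ head, so an atom $\pred{x}{il}$ is in the least model iff it is forced by one of them from true level-$(l-1)$ atoms. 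I would then do a small case analysis: for an inertia wire $i$, $\pred{x}{il}$ holds iff $\pred{x}{i(l-1)}$ holds iff $x_{i(l-1)}=1$, and indeed $f_l$ leaves such a wire unchanged so $x_{il}=x_{i(l-1)}$; for a comparator $\comparator{i}{j}{l}$, the three rules make $\pred{x}{jl}$ true iff $\pred{x}{i(l-1)}\lor\pred{x}{j(l-1)}$ and $\pred{x}{il}$ true iff $\pred{x}{i(l-1)}\land\pred{x}{j(l-1)}$, which is precisely the Boolean compare-exchange that sends the pair $(x_{i(l-1)},x_{j(l-1)})$ to $(\min,\max)$ on wires $i<j$, i.e.\ the action of $f_l$ on those two wires. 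Combining the cases over all wires gives that the level-$l$ column of the least model equals the level-$l$ column of $\wirevalues{N}{\vec{x}}$, closing the induction.

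The one technical point to be careful about—and the closest thing to an obstacle—is the bookkeeping that guarantees \emph{no other rule} can influence a level-$l$ atom: one must use the compatibility condition in the definition of a network (a wire is incident to at most one comparator per level), so that the comparator rules and inertia rules partition the level-$l$ heads without overlap, and confirm that the translation introduces an inertia rule exactly for the wires left uncovered. Once that is pinned down, minimality of the least model does the rest: nothing spuriously becomes true, because every true atom has a supporting rule whose body is, inductively, exactly the true level-$(l-1)$ atoms. I would phrase the induction cleanly in terms of the immediate-consequence operator $T_{\aspify{N}\union\iwires{\vec{x}}}$, noting that iterating it stratum-by-stratum in increasing order of level reaches the fixpoint, which makes the level-indexed induction formally exact rather than merely intuitive.
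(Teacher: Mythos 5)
Your proposal is correct and follows essentially the same route as the paper's proof: an induction on the level $l$, with the base case settled by the facts in $\iwires{\vec{x}}$ and the inductive step split into the comparator case (conjunction/disjunction rules realizing $\min$/$\max$) and the inertia case, using minimality to rule out spurious atoms. Your explicit appeal to the least model of a definite program to secure existence and uniqueness, and your remark on compatibility guaranteeing that each level-$l$ atom is defined by exactly one group of rules, merely make explicit what the paper leaves implicit.
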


\begin{proof}
For the base case $l=0$, we note that $\pred{x}{i0}\in X$ $\iff$
$\pred{x}{i0}\in\iwires{\vec{x}}$ $\iff$ $x_{i0}=1$, since $\pred{x}{i0}$ is
defined by a fact or no rule in $\iwires{\vec{x}}$ and by no rule in
$\aspify{N}$, and $X$ is $\subseteq$-minimal.
Induction step $0<l\le d$ follows.

If $i$ and $j$ are wires incident with a comparator
$\comparator{i}{j}{l}$ at level $l$, then the rules
of $\aspify{N}$ and the $\subseteq$-minimality of $X$
guarantee that (i)
$\pred{x}{il}\in X$ $\iff$
$\pred{x}{i(l-1)}\in X$ and $\pred{x}{j(l-1)}\in X$ $\iff$
$x_{i(l-1)}=1$ and $x_{j(l-1)}=1$ (by inductive hypothesis) $\iff$
$x_{il}=\min\set{x_{i(l-1)},x_{j(l-1)}}$ $\iff$
$x_{il}=1$;
and (ii)
$\pred{x}{jl}\in X$ $\iff$
$\pred{x}{i(l-1)}\in X$ or $\pred{x}{j(l-1)}\in X$ $\iff$
$x_{i(l-1)}=1$ or $x_{j(l-1)}=1$ (by inductive hypothesis) $\iff$
$x_{jl}=\max\set{x_{i(l-1)},x_{j(l-1)}}$ $\iff$
$x_{jl}=1$.

If a wire $i$ is not incident with comparators, it follows by the
inertia rule and $\subseteq$-minimality that
$\pred{x}{il}\in X$ $\iff$ $\pred{x}{i(l-1)}\in X$ $\iff$
$x_{i(l-1)}=1$ (by inductive hypothesis) $\iff$ $x_{il}=1$.
\end{proof}

\section{Propagating Weights Over Comparator Networks}
\label{section:propagation}

In this section, we consider contexts where comparator networks are supplemented
by \emph{weight} information. Namely, we wish to model comparator networks with fixed
multipliers, or weights, applied to their input wires.
Such input can be extracted from, e.g., pseudo-Boolean constraints or
optimization statements that are the targets of optimization rewriting
techniques, to be discussed in Section~\ref{section:application}.
Our goal is to explore the performance implications of moving these weights
along the network using propagation operations that we define in this section.

We begin by introducing the concept of \emph{wire weights}
for a network on $n$ wires and $d$ layers. They are non-negative numbers
$a_{ij}$ indexed by wires $1 \le i \le n$ and levels $0 \le j \le d$
in the same way as wire values.
A network with wire weights relates to a linear function as follows.

\begin{definition}
For a comparator network $N$
with wire weights $\matsym{A} = \matof{a_{ij}}$,
the \emph{weight function} $\weightfun{N}{\matsym{A}}$
is defined on input $\vec{x}$ yielding the wire values
$\wirevalues{N}{\vec{x}}=\matof{x_{ij}}$ by
\begin{align}
\label{eq:weight-function}
\weightfun{N}{\matsym{A}}(\vec{x})
= \sum_{i = 1}^n \sum_{j = 0}^d a_{ij} x_{ij} .
\end{align}
\end{definition}
\tikzset{snpicturestyle/.style={snpicturestylelarge}}

\newcommand{\exampletriplenetwork}[3]{%
  \begin{comparatornetwork}{3}{11}
    \widenodelabel{#1}
    \nodeconnection{{1,2}}
    \widenodelabel{#2}
    \nodeconnection{{2,3}}
    \widenodelabel{#3}
  \end{comparatornetwork}
}

\newcommand{\exampletriplenothing}[3]{%
  \begin{emptynetwork}{3}{11}
    \widenodelabel{#1}
    \nodeconnection{}
    \widenodelabel{#2}
    \nodeconnection{}
    \widenodelabel{#3}
  \end{emptynetwork}
}

\newenvironment{exampletripleblock}
{
  \begin{minipage}{0.16\textwidth}
  \centering
}
{\end{minipage}
}

\newenvironment{exampleblock}
{
  \begin{minipage}{0.10\textwidth}
  \centering
}
{\end{minipage}
}

\newcommand{\wpair}[2]{\ensuremath{#1 \cdot #2}}

\makeatletter
\newlength{\mylength}
\setlength{\mylength}{0pt}
\newcommand{\mathleft}{\@fleqntrue\@mathmargin\mylength}
\newcommand{\mathcenter}{\@fleqnfalse}
\makeatother

\begin{example}
  \label{example:wire-weights}
  In the following, we have a network with wire weights, wire values based on
  an input vector $\vec{x} = \sequence{\ve{1}, \ve{2}, \ve{0}}$, and a
  calculation that yields the respective weight function value $130$.
  This is an example on how a network combined with wire weights relates to a
  linear function on input vectors such as $\vec{x}$.
  Here and in the sequel, we emphasize wire weights with a distinct font.
  \tikzset{snpicturestyle/.style={snpicturestylelarge}}
  \begin{center}
    \begin{exampletripleblock}\exampletriplenetwork{\wt{\wzero},\wt{10},\wt{\wzero}}{\wt{\wzero},\wt{\wzero},\wt{20}}{\wt{30},\wt{40},\wt{40}}\end{exampletripleblock}
    \hspace{30pt}
    \begin{exampletripleblock}\exampletriplenetwork{\ve{1},\ve{2},\ve{0}}{\ve{1},\ve{2},\ve{0}}{\ve{1},\ve{0},\ve{2}}\end{exampletripleblock}
    \hspace{30pt}
    \begin{minipage}{0.30\textwidth}
      \mathleft
      \begin{align*}
        \mathrel{\phantom{+}}
          \wpair{\wt{\wzero}}{\ve{1}} + \wpair{\wt{\wzero}}{\ve{1}} + \wpair{\wt{30}}{\ve{1}} \\[-2pt]
        + \wpair{\wt{10}}{\ve{2}}     + \wpair{\wt{\wzero}}{\ve{2}} + \wpair{\wt{40}}{\ve{0}} \\[-2pt]
        + \wpair{\wt{\wzero}}{\ve{0}} + \wpair{\wt{20}}{\ve{0}}     + \wpair{\wt{40}}{\ve{2}} \mathrlap{~ = 130}                                    \\[-7pt]
      \end{align*}
      \mathcenter
    \end{minipage}
  \end{center}

  \eofex
\end{example}

As can be seen, the nonzero wire weights in Example~\ref{example:wire-weights}
are already scattered around the comparator network, occupying all the layers.
This state represents the goal that we want to achieve from a starting point,
where only the leftmost input weights of comparators are nonzero.
Indeed, given a comparator network $N$ and wire weights $\matsym{A}$,
we are interested in modifying the weights by propagating as much of them
as deep inside the network as possible.
To this end, we develop a \emph{propagation function} $\propfunfull$
that produces new weights $\matsym{U} = \propfunfull(\matsym{A})$
so that the respective weight function stays the same,
i.e.,
$
\weightfun{N}{\matsym{A}}(\vec{x})
=
\weightfun{N}{\matsym{U}}(\vec{x})
$
for all input vectors $\vec{x}$.
To obtain an idea of how this can be achieved in practice, let us study a simple
example.

\newcommand{\examplesingletonnetwork}[2]{%
  \begin{comparatornetwork}{2}{7}
    \widenodelabel{#1}
    \nodeconnection{{1,2}}
    \widenodelabel{#2}
  \end{comparatornetwork}
}

\begin{example}
\label{example:singleton}
\tikzset{snpicturestyle/.style={snpicturestylelarge}}
\begin{figwindownonum}[0,r,{\mbox{%
        \begin{exampleblock}\examplesingletonnetwork{\wt{40},\wt{50}}{\wt{\wzero},\wt{\wzero}}
        \end{exampleblock}
        \begin{exampleblock}\examplesingletonnetwork{\wt{\wzero},\wt{10}}{\wt{40},\wt{40}}
        \end{exampleblock}}},{Propagation over a single comparator}]
Consider a network $N = \set{\comparator{1}{2}{1}}$ with a single comparator.
Example initial and propagated weights $\matsym{A}$ and $\matsym{U}$ for $N$,
respectively, are shown on the right.
The difference between these weights is that at first, all weight
is on the input, whereas afterwards, a weight amount of $40$ has
been propagated from the input to the output on both wires.  By
comparing the weights, we may observe that the weights~$\matsym{A}$
yield the weight function
$\weightfun{N}{\matsym{A}}(\vec{x}) = 40 x_1 + 50 x_2$,
whereas the weights~$\matsym{U}$ yield the weight function
$\weightfun{N}{\matsym{U}}(\vec{x})
 = 10 x_2 + 40 (\min \set{x_1, x_2} + \max \set{x_1, x_2})
 = 10 x_2 + 40 (x_1 + x_2)$.
Namely,
$\weightfun{N}{\matsym{A}}(\vec{x})=\weightfun{N}{\matsym{U}}(\vec{x})$
for every input $\vec{x}$. Therefore, the change of weights from
$\matsym{A}$ to $\matsym{U}$ preserves the semantics of the
network as a linear function.
\eofex
\end{figwindownonum}
\end{example}

The idea behind the preceding example generalizes to larger networks.
The result is a weight propagation function that can be applied to wire weights
of comparator networks without altering the values of any weight functions
associated with them.
Namely, considering an arbitrary comparator network as a \emph{black-box}, one
may move a constant amount of weight from each of its inputs to each of its
outputs, while keeping all other weights inside the network intact.
The propagated weight will contribute the same total weight to the value of the
weight function (\ref{eq:weight-function}) before and after the move on any
input $\vec{x}$.
Therefore, the move preserves the semantics of the network as a linear function,
in the same way as the propagation step in Example~\ref{example:singleton} does.
To see this, one may consider that if the input to a single comparator is known,
then the function of the comparator can be represented as a permutation.
Either the permutation swaps the input pair, or keeps it as it is.
Furthermore, this inductively holds for any comparator network:
given any input, the output is a permutation of it,
although the permutation is generally more complex.
Consequently, the cardinality of Boolean input and output pairs are always
equal.
This preservation of cardinality guarantees the preservation of weight
functions under this propagation step.
As for the choice of the constant amount of weight that is moved, one can pick
the minimum of the input weights.
This will maximize the moved weight without producing any negative weights.
In the following, the resulting weight propagation function is defined formally,
its weight function preservation property is captured in a theorem, and a proof
for the theorem is provided following the strategy sketched above.

\begin{definition}
\label{def:propfunfull}
Given wire weights $\matsym{A} = \matof{a_{ij}}$ for a comparator network $N$
on $n$ wires and $d$ layers, and $c = \min \sel{a_{i0}}{1 \le i \le n}$,
the \emph{weight propagation function} $\propfunfull$ maps $\matsym{A}$ to
the wire weights $\propfunfull(\matsym{A}) = \matsym{U} = \matof{u_{ij}}$
of $N$ where
\begin{align*}
  u_{ij} =
  \begin{cases}
    a_{ij} - c , & \text{if $j = 0$,} \\
    a_{ij}     , & \text{if $0 < j < d$,} \\
    a_{ij} + c , & \text{if $j = d$}.
  \end{cases}
\end{align*}
\end{definition}

\begin{theorem}
\label{theorem:propfunfull}
Given wire weights $\matsym{A}$ and $\matsym{U} = \propfunfull(\matsym{A})$
for a comparator network $N$, for any input vector $\vec{x}$, it holds that
$\weightfun{N}{\matsym{A}}(\vec{x}) = \weightfun{N}{\matsym{U}}(\vec{x})$.
\end{theorem}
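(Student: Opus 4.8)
The plan is to compute the difference $\weightfun{N}{\matsym{U}}(\vec{x}) - \weightfun{N}{\matsym{A}}(\vec{x})$ directly from the definitions and to show that it vanishes. Expanding both weight functions via~(\ref{eq:weight-function}) and substituting Definition~\ref{def:propfunfull}, the summands at levels $0 < j < d$ cancel because $u_{ij} = a_{ij}$ there, so only the contributions at $j = 0$ and $j = d$ survive. Using $u_{i0} = a_{i0} - c$ and $u_{id} = a_{id} + c$ for every wire $i$, where $c = \min\sel{a_{i0}}{1 \le i \le n}$ and $\wirevalues{N}{\vec{x}} = \matof{x_{ij}}$, the difference collapses to $c\bigl(\sum_{i=1}^n x_{id} - \sum_{i=1}^n x_{i0}\bigr)$.

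It then remains to argue that $\sum_{i=1}^n x_{id} = \sum_{i=1}^n x_{i0}$, i.e., that the column of wire values leaving the network equals, as a multiset, the column entering it. This is exactly the permutation invariant already recorded in Section~\ref{section:comparator-networks}: each layer function $f_l$ merely permutes its inputs, swapping or keeping each comparator's pair, and hence the full output vector at level $d$ is obtained from $\vec{x}$ by some permutation of its coordinates; if one prefers, this follows by a one-line induction on $d$ since permutations are closed under composition. In particular the two column sums agree, the displayed difference is $c \cdot 0 = 0$, and therefore $\weightfun{N}{\matsym{A}}(\vec{x}) = \weightfun{N}{\matsym{U}}(\vec{x})$.

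There is essentially no hard step: the argument is the formal counterpart of the black-box intuition stated before Definition~\ref{def:propfunfull}, namely that relocating a fixed amount $c$ of weight from every input wire to every output wire is weight-neutral because the Boolean inputs and outputs of a comparator network are in bijection. The only point that deserves a remark is the degenerate case $d = 0$ of a network without comparators, where the $j = 0$ and $j = d$ branches of Definition~\ref{def:propfunfull} overlap; this is either handled trivially, as there is then nothing to propagate, or excluded by the standing assumption $d \ge 1$. I would also observe in passing that the nonnegativity of $\matsym{U}$, although not needed for this theorem, is precisely what forces the particular choice $c = \min\sel{a_{i0}}{1 \le i \le n}$.
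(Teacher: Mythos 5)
Your argument is correct and is essentially identical to the paper's own proof: both compute the difference of the two weight functions, observe that only the $j=0$ and $j=d$ terms survive, and reduce the claim to the equality of the input and output column sums via the permutation invariant of comparator networks. The extra remarks on the degenerate case $d=0$ and on why $c$ is chosen as the minimum are harmless additions not present in the paper.
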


\begin{proof}
Let $\matsym{A} = \matof{a_{ij}}$, $c=\min\sel{a_{i0}}{1 \le i \le n}$, and
$\matsym{U} = \matof{u_{ij}} = \propfunfull(\matsym{A})$.
For any input vector $\vec{x}$,
\begin{align}
\label{eq:network-weight-difference}
\begin{split}
\weightfun{N}{\matsym{A}}(\vec{x})-\weightfun{N}{\matsym{U}}(\vec{x})
  &
  =
    \sum_{i = 1}^n \sum_{j = 0}^d (a_{ij} - u_{ij}) x_{ij}
  \\ &
  =
    \sum_{i = 1}^n \bigg[
      (a_{i0} - u_{i0}) x_{i0}
      + \sum_{j = 1}^{d - 1} (a_{ij} - u_{ij}) x_{ij}
      + (a_{id} - u_{id}) x_{id}
    \bigg]
  \\ &
  =
      c \sum_{i = 1}^n x_{i0}
    - c \sum_{i = 1}^n x_{id} .
\end{split}
\end{align}
Let $\sigma$ be the permutation capturing
the effect of $N$ on the input vector $\vec{x}$
(cf.~Section \ref{section:comparator-networks}).
Then, we have $x_{i0} = x_{\sigma(i)d}$ for every $1 \le i \le n$.
Since $\sigma$ is a permutation, we thus obtain
$\sum_{i = 1}^n x_{i0} = \sum_{i = 1}^n x_{\sigma(i)d}
= \sum_{i = 1}^n x_{id}$
and therefore \eqref{eq:network-weight-difference} evaluates to $0$.
\end{proof}

\newenvironment{examplequintupleblock}
{
  \begin{minipage}{0.32\textwidth}
  \centering
}
{\end{minipage}
}

\newcommand{\examplequintuplenetwork}[2]{%
  \begin{comparatornetwork}{5}{24}
    \widenodelabel{#1}
    \nodeconnection{{1,2},{3,4}}
    \widenodelabel{\grayout{\wt{\wzero}},\grayout{\wt{\wzero}},\grayout{\wt{\wzero}},\grayout{\wt{\wzero}},\grayout{\wt{\wzero}}}
    \nodeconnection{{1,3}}
    \nodeconnection{{2,5}}
    \widenodelabel{\grayout{\wt{\wzero}},\grayout{\wt{\wzero}},\grayout{\wt{\wzero}},\grayout{\wt{\wzero}},\grayout{\wt{\wzero}}}
    \nodeconnection{{2,3},{4,5}}
    \widenodelabel{\grayout{\wt{\wzero}},\grayout{\wt{\wzero}},\grayout{\wt{\wzero}},\grayout{\wt{\wzero}},\grayout{\wt{\wzero}}}
    \nodeconnection{{1,2},{3,4}}
    \widenodelabel{\grayout{\wt{\wzero}},\grayout{\wt{\wzero}},\grayout{\wt{\wzero}},\grayout{\wt{\wzero}},\grayout{\wt{\wzero}}}
    \nodeconnection{{2,3}}
    \widenodelabel{#2}
  \end{comparatornetwork}
}

As a special case, Definition~\ref{def:propfunfull} and
Theorem~\ref{theorem:propfunfull} are applicable to a network consisting of a
single comparator.
In fact, the preceding example, Example~\ref{example:singleton}, illustrates
this case, since the weights there are chosen so that that
$\matsym{U} = \propfunfull(\matsym{A})$.
Next, we show a larger example.

\begin{example}
  \label{example:black-box-quintuple}
  The weight propagation function $\propfunfull$ for the weights of any network
  $N$ is rather humble.
  Yet, it manages to push all the weight to the output in the special case that
  only the initial input weights are nonzero and they are all equal.
  Therefore, in this case of \emph{uniform} input weights, $\propfunfull$ is
  optimal in terms of moving weights forward.
  The following sorting network on five wires
  with initial and final weights shown on the left and right, respectively,
  illustrates this case.
  Weights kept intact are shown in gray.
  \tikzset{snpicturestyle/.style={snpicturestyledense}}
  \begin{center}
    \begin{examplequintupleblock}\examplequintuplenetwork{\wt{10},\wt{10},\wt{10},\wt{10},\wt{10}}{\wt{\wzero},\wt{\wzero},\wt{\wzero},\wt{\wzero},\wt{\wzero}}\end{examplequintupleblock}
    \hspace{40pt}
    \begin{examplequintupleblock}\examplequintuplenetwork{\wt{\wzero},\wt{\wzero},\wt{\wzero},\wt{\wzero},\wt{\wzero}}{\wt{10},\wt{10},\wt{10},\wt{10},\wt{10}}\end{examplequintupleblock}
  \end{center}
  These kinds of wire weights arise in practice in the context of ASP
  optimization statements with uniform weights.
  We will address the connection between weight propagation and ASP more
  thoroughly in Section~\ref{section:application}, however, we note here that
  our focus therein lies particularly in handling optimization statements
  with non-uniform weights.
  To this end, in the following we extend the usefulness of
  weight propagation to settings with more varied input weights.
  \eofex
\end{example}

To improve upon the lacking granularity in the discussed weight propagation
technique, we wish to propagate weights in smaller steps, spanning parts of
networks at a time.
We formulate these steps by constructing a weight propagation function
$\propfun{D}$ parameterized by a decomposition $D$ of the comparator network
$N$ at hand.
The role of the decomposition parameter $D$ is to determine components over
which weight propagation can be carried out gradually.
The intended design of the function $\propfun{D}$ is such that for example,
given a decomposition $D = \set{\tuple{N, \eset{1}{n}, \eset{1}{d}}}$ of $N$
where the entire network is treated as a single component, we replicate the
black-box behavior of $\propfunfull$.
For another example, given a decomposition $D = \eset{S_1}{S_m}$ in
which every comparator $C_k = \tuple{i, j, l} \in N$ is placed in a
separate component $S_k = \tuple{C_k, \set{i, j}, \set{l}}$,
the function will propagate a maximal amount of weight forward
over individual comparators at a time.
We call these two types of decompositions trivial and refer to
the end of this section for more complex, non-trivial ones that represent
intermediate decompositions between them. However,
before stating the formal definition of $\propfun{D}$, we lay out an
example of its intended outcome based on a trivial,
fine-grained decomposition $D$.

\begin{example}
  The following illustrates weight propagation steps over the comparators of
  a sorting network $N$ on four wires starting with the initial wire weights
  $\matsym{A}$ on the very left and ending in the fully propagated wire
  weights $\propfunfull(\matsym{A})$ on the very right. In each transition
  between a pair of diagrams, the comparators of a single layer are used
  independently as the basis of propagation.
  \tikzset{snpicturestyle/.style={snpicturestylelarge}}
  \begin{center}
    \begin{comparatornetwork}{4}{16}
      \widenodelabel{\wt{40},\wt{50},\wt{90},\wt{70}}
      \nodeconnection{{1,2},{3,4}}
      \widenodelabel{\wt{0},\wt{0},\wt{0},\wt{0}}
      \nodeconnection{{1,3}}
      \nodeconnection{{2,4}}
      \widenodelabel{\wt{0},\wt{0},\wt{0},\wt{0}}
      \nodeconnection{{2,3}}
      \widenodelabel{\wt{0},\wt{0},\wt{0},\wt{0}}
    \end{comparatornetwork}
    \hfill
    \begin{comparatornetwork}{4}{16}
      \widenodelabel{\wt{\wzero},\wt{10},\wt{20},\wt{\wzero}}
      \nodeconnection{{1,2},{3,4}}
      \widenodelabel{\wt{40},\wt{40},\wt{70},\wt{70}}
      \nodeconnection{{1,3}}
      \nodeconnection{{2,4}}
      \widenodelabel{\wt{0},\wt{0},\wt{0},\wt{0}}
      \nodeconnection{{2,3}}
      \widenodelabel{\wt{0},\wt{0},\wt{0},\wt{0}}
    \end{comparatornetwork}
    \hfill
    \begin{comparatornetwork}{4}{16}
      \widenodelabel{\wt{\wzero},\wt{10},\wt{\wzero},\wt{20}}
      \nodeconnection{{1,2},{3,4}}
      \widenodelabel{\wt{\wzero},\wt{\wzero},\wt{30},\wt{30}}
      \nodeconnection{{1,3}}
      \nodeconnection{{2,4}}
      \widenodelabel{\wt{40},\wt{40},\wt{40},\wt{40}}
      \nodeconnection{{2,3}}
      \widenodelabel{\wt{0},\wt{0},\wt{0},\wt{0}}
    \end{comparatornetwork}
    \hfill
    \begin{comparatornetwork}{4}{16}
      \widenodelabel{\wt{\wzero},\wt{10},\wt{\wzero},\wt{20}}
      \nodeconnection{{1,2},{3,4}}
      \widenodelabel{\wt{\wzero},\wt{\wzero},\wt{30},\wt{30}}
      \nodeconnection{{1,3}}
      \nodeconnection{{2,4}}
      \widenodelabel{\wt{0},\wt{0},\wt{0},\wt{0}}
      \nodeconnection{{2,3}}
      \widenodelabel{\wt{40},\wt{40},\wt{40},\wt{40}}
    \end{comparatornetwork}
  \end{center}
  To understand the above, let us focus on the comparator on the
  top left with input weights $40$ and $50$ at the beginning. Going from
  the first to the second diagram, an amount of $40$ is extracted from both of
  these weights and pushed over the comparator to its immediate output.
  In fact, this is precisely the same step as carried out in isolation in
  Example~\ref{example:singleton}.
  Moreover, the entire weight propagation process depicted here consists
  of repetitions of similar steps performed separately.  In this way, the
  network is taken as a white box with structure that guides the weight
  propagation process in fine detail.
  This is in contrast to the black-box treatment of the network in
  Example~\ref{example:black-box-quintuple}.
  \eofex
\end{example}

To ease the formal definition of the weight propagation function
$\propfun{D}$ for decompositions $D$, we first define a version,
$\propfun{C}$, for confined networks $C$, in order to express
individual propagations. 

\begin{definition}
Given wire weights $\matsym{A} = \matof{a_{ij}}$
for a network $N$ on $n$ wires and $d$ layers,
a confined comparator network $C = \tuple{N, I, E}$,
and $c = \min \sel{a_{ij}}{1 \le i \le n, j = \min E - 1}$,
the \emph{weight propagation function} $\propfun{C}$ maps $\matsym{A}$ to
the wire weights $\propfun{C}(\matsym{A}) = \matsym{U} = \matof{u_{ij}}$
for $N$ defined by
\begin{align*}
  u_{ij} =
  \begin{cases}
    a_{ij} - c , & \text{if $i \in I$ and $j = \min E - 1$,} \\
    a_{ij}     , & \text{if $\tuple{i, j} \in (\eset{1}{n} \times \eset{1}{d}) \setminus (I \times \set{\min E - 1, \max E})$,} \\
    a_{ij} + c , & \text{if $i \in I$ and $j = \max E$}.
  \end{cases}
\end{align*}
\end{definition}

\begin{lemma}
\label{lemma:propfun-confined}
Given wire weights $\matsym{A}$ and $\matsym{U} = \propfun{C}(\matsym{A})$
for a network $N$ of a confined comparator network $C$,
for any input vector $\vec{x}$, it holds that
$\weightfun{N}{\matsym{A}}(\vec{x}) = \weightfun{N}{\matsym{U}}(\vec{x})$.
\end{lemma}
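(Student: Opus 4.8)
The statement of Lemma~\ref{lemma:propfun-confined} is the ``confined'' analogue of Theorem~\ref{theorem:propfunfull}, so the natural approach is to mimic that proof, but localized to the set of wires $I$ and the level window bounded by $\min E - 1$ and $\max E$. First I would write out the difference $\weightfun{N}{\matsym{A}}(\vec{x}) - \weightfun{N}{\matsym{U}}(\vec{x})$ as $\sum_{i,j}(a_{ij} - u_{ij}) x_{ij}$ and observe that, by the definition of $\propfun{C}$, the summand $a_{ij} - u_{ij}$ vanishes except when $i \in I$ and $j \in \set{\min E - 1, \max E}$, where it equals $+c$ at level $\min E - 1$ and $-c$ at level $\max E$. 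So the whole difference collapses to
\[
c \sum_{i \in I} x_{i(\min E - 1)} \;-\; c \sum_{i \in I} x_{i(\max E)}.
\]

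**Key step.** The crux is then to show that $\sum_{i \in I} x_{i(\min E - 1)} = \sum_{i \in I} x_{i(\max E)}$, i.e.\ that the multiset of wire values carried by the wires in $I$ is unchanged between level $\min E - 1$ and level $\max E$. This is where the confinement hypothesis does the work: since $N$ is confined to $\tuple{I, E}$ (as part of the confined network $C = \tuple{N, I, E}$), every comparator of $N$ has both its wires in $I$ and its level in $E$. Hence, restricted to levels $\min E - 1$ through $\max E$, the network only ever permutes values among the wires of $I$ and never touches wires outside $I$ or levels outside this window. Concretely, I would argue that the composition of the layers of $N$ at levels $\min E, \dots, \max E$, acting on the column of wire values at level $\min E - 1$, yields the column at level $\max E$ and is a permutation that fixes every coordinate outside $I$; therefore it restricts to a permutation of the coordinates indexed by $I$. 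Summing the (binary, hence permutation-invariant) wire values over $I$ then gives the desired equality, exactly as in the proof of Theorem~\ref{theorem:propfunfull} where the global permutation $\sigma$ was used.

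**Main obstacle.** The only real subtlety—and the step I would be most careful about—is justifying that the relevant permutation fixes all coordinates outside $I$, and in particular that the values on wires not in $I$ are genuinely irrelevant to the identity above (they are, since their contributions to $a_{ij} - u_{ij}$ are zero by construction of $\propfun{C}$). One must also make sure the bookkeeping with the half-open window is right: the weights being \emph{removed} sit at level $\min E - 1$ (the input side of the confined block) and the weights being \emph{added} sit at level $\max E$ (its output side), and all intermediate wire weights, as well as all weights on wires outside $I$, are left untouched, so they cancel pairwise in the difference. Once these observations are in place, the computation is routine and parallels Theorem~\ref{theorem:propfunfull} line for line.
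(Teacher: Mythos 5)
Your proposal is correct and is exactly the argument the paper has in mind: the paper omits this proof, stating only that it is analogous to that of Theorem~\ref{theorem:propfunfull}, and your write-up supplies precisely that localized analogue, with the difference collapsing to $c\sum_{i\in I} x_{i(\min E-1)} - c\sum_{i\in I} x_{i(\max E)}$ and the confinement hypothesis guaranteeing that the values on the wires of $I$ are merely permuted between levels $\min E - 1$ and $\max E$. The one point you flag as the main subtlety --- that the permutation fixes coordinates outside $I$ --- is handled correctly, so nothing is missing.
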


The proof of Lemma \ref{lemma:propfun-confined} is analogous to the proof
of Theorem \ref{theorem:propfunfull} and is thus omitted.
One may think of $\propfun{C}$ as the function $\propfunfull$ affecting
only the inputs and outputs of a particular component $C$.

\newcommand{\exampleconfinednetwork}[2]{%
  \begin{comparatornetwork}{5}{20}
    \widenodelabel{\grayout{\wt{80}},\grayout{\wt{20}},\grayout{\wt{\wzero}},\grayout{\wt{70}},\grayout{\wt{30}}}
    \nodeconnection{}
    \widenodelabel{#1}
    \nodeconnection{{1,4}}
    \nodeconnection{{3,5}}
    \widenodelabel{\wt{50},\grayout{\wt{\wzero}},\wt{90},\wt{90},\wt{80}}
    \nodeconnection{{4,5}}
    \widenodelabel{#2}
    \nodeconnection{}
    \widenodelabel{\grayout{\wt{30}},\grayout{\wt{70}},\grayout{\wt{20}},\grayout{\wt{50}},\grayout{\wt{20}}}
  \end{comparatornetwork}
}

\newenvironment{examplemediumblock}
{
  \begin{minipage}{0.25\textwidth}
}
{\end{minipage}
}

\begin{example}
  Consider a weight propagation step over a confined network
  $C = \tuple{N, I, E}$ where the allowed wires are $I = \set{1, 3, 4, 5}$ and
  levels $E = \set{2, 3}$. The gray numbers indicate weights out of the
  scope of $C$. Only the leftmost and rightmost weights
  are modified, the middle ones stay intact.
  The specific comparators in the network do not matter, as long as
  they are confined to $I$ and $E$.
  \tikzset{snpicturestyle/.style={snpicturestyledense}}
  \begin{center}
    \begin{examplemediumblock}\exampleconfinednetwork{\wt{90},\grayout{\wt{40}},\wt{10},\wt{20},\wt{50}}{\wt{60},\grayout{\wt{50}},\wt{\wzero},\wt{10},\wt{30}}\end{examplemediumblock}
    \hspace{40pt}
    \begin{examplemediumblock}\exampleconfinednetwork{\wt{80},\grayout{\wt{40}},\wt{\wzero},\wt{10},\wt{40}}{\wt{70},\grayout{\wt{50}},\wt{10},\wt{20},\wt{40}}\end{examplemediumblock}
  \end{center}
  \eofex
\end{example}

We want to order confined networks in such a way that when propagating weights
over them, each propagation step picks up from where the previous step left
off, pushing weights forward naturally.
To this end, we write $C \le C'$ for pairs of mutually compatible confined
networks $C = \tuple{N, I, E}$ and $C' = \tuple{N', I', E'}$ that satisfy
$\min E \le \max E'$.
The intuition behind $C \le C'$ is that $C$ cannot possibly depend on the output
of $C'$ and can thus be propagated over first.
The weight propagation function $\propfun{D}$ for decompositions
$D$ based on compatible components is defined in the following, where we follow
the convention for function composition by which $(f \circ g)(x) = f(g(x))$.

\begin{definition}
Given a decomposition $D$ consisting of confined networks
$S_1 \le \cdots \le S_m$,
the \emph{weight propagation function} $\propfun{D}$
is defined as
$\propfun{D} = \propfun{S_m} \circ \cdots \circ \propfun{S_1}$.
\end{definition}

\begin{theorem}
\label{theorem:propfun-decomposition}
Given wire weights $\matsym{A}$ and $\matsym{U} = \propfun{D}(\matsym{A})$
for a comparator network $N$ and a decomposition $D$ of $N$,
for any input vector $\vec{x}$, it holds that
$\weightfun{N}{\matsym{A}}(\vec{x}) = \weightfun{N}{\matsym{U}}(\vec{x})$.
\end{theorem}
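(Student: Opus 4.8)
The statement is a straightforward lifting of Lemma~\ref{lemma:propfun-confined} through the composition that defines $\propfun{D}$, so the plan is to peel the composition $\propfun{D} = \propfun{S_m} \circ \cdots \circ \propfun{S_1}$ apart one confined network at a time and chain the resulting equalities. Concretely, I would fix an arbitrary input vector $\vec{x}$, set $\matsym{A}^{(0)} = \matsym{A}$ and $\matsym{A}^{(k)} = \propfun{S_k}(\matsym{A}^{(k-1)})$ for $1 \le k \le m$, so that $\matsym{A}^{(m)} = \propfun{D}(\matsym{A}) = \matsym{U}$, and then prove by induction on $k$ that $\weightfun{N}{\matsym{A}}(\vec{x}) = \weightfun{N}{\matsym{A}^{(k)}}(\vec{x})$.

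The base case $k = 0$ is trivial. For the inductive step, I would invoke Lemma~\ref{lemma:propfun-confined} for the confined network $S_k$ with the wire weights $\matsym{A}^{(k-1)}$ as input; since that lemma is stated for arbitrary wire weights, it applies verbatim to the intermediate matrix $\matsym{A}^{(k-1)}$ and yields $\weightfun{N}{\matsym{A}^{(k-1)}}(\vec{x}) = \weightfun{N}{\matsym{A}^{(k)}}(\vec{x})$. Combining this with the induction hypothesis $\weightfun{N}{\matsym{A}}(\vec{x}) = \weightfun{N}{\matsym{A}^{(k-1)}}(\vec{x})$ closes the step, and taking $k = m$ gives the claim. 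I would also note in passing that the ordering $S_1 \le \cdots \le S_m$ is irrelevant to this correctness argument—each factor $\propfun{S_k}$ preserves the weight function independently—so the ordering is needed only for the heuristic effect of pushing weights as far forward as possible.

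The one point that deserves care, and which I expect to be the main (though modest) obstacle, is making sure that every application of Lemma~\ref{lemma:propfun-confined} is taken with respect to the weight function $\weightfun{N}{\cdot}$ of the \emph{full} network $N$, rather than that of the isolated sub-network carried by $S_k$. Here the compatibility requirement in the definition of a decomposition is exactly what is needed: because $S_1, \ldots, S_m$ are mutually compatible and their sub-networks cover $N$, no comparator of $N$ can touch a wire of $I_k$ at a level in $E_k$ unless it already lies in the sub-network of $S_k$ (a comparator doing so would have to belong to some other component, contradicting compatibility). Consequently, between columns $\min E_k - 1$ and $\max E_k$ the wire values of $N$ restricted to $I_k$ are only permuted among themselves, which is precisely the cancellation $\sum_{i \in I_k} x_{i,\min E_k - 1} = \sum_{i \in I_k} x_{i,\max E_k}$ that underlies the proof of Lemma~\ref{lemma:propfun-confined} (the analogue, for a sub-block, of the permutation argument in the proof of Theorem~\ref{theorem:propfunfull}). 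Once this is observed, the induction above goes through without further calculation.
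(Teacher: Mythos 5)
Your proposal is correct and follows essentially the same route as the paper: the paper likewise defines the intermediate weight matrices $\matsym{U}_0 = \matsym{A}, \ldots, \matsym{U}_m = \matsym{U}$ obtained by applying one $\propfun{S_i}$ at a time and chains the equalities supplied by Lemma~\ref{lemma:propfun-confined}. Your additional remarks on why the lemma applies relative to the full network $N$ (via the compatibility of the components) and on the ordering being immaterial to correctness are sound elaborations of details the paper leaves implicit.
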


\begin{proof}
Let $S_1 \le \cdots \le S_m$ be the confined networks in $D$
and write $\matsym{U}_i = \propfun{\eset{S_1}{S_{i}}}(\matsym{A})$
for every $1 \le i \le m$
so that $\matsym{U}_0 = \matsym{A}$,
$\matsym{U}_i = \propfun{\set{S_i}}(\matsym{U}_{i - 1})$
and $\matsym{U}_m = \matsym{U}$.
Lemma~\ref{lemma:propfun-confined} proves each of the equalities
$
\weightfun{N}{\matsym{U}_0}(\vec{x})
= \cdots =
\weightfun{N}{\matsym{U}_m}(\vec{x})
$
and thus
$
\weightfun{N}{\matsym{A}}(\vec{x})
=
\weightfun{N}{\matsym{U}_0}(\vec{x})
=
\weightfun{N}{\matsym{U}_m}(\vec{x})
$.
\end{proof}

We end this section by detailing a family of \emph{sparse} decompositions
$D_k(N)$ for use with any network $N$ and the weight propagation function
$\propfun{D_k(N)}$.
The decompositions are parameterized by a sparseness factor $k$, which controls
the rough fraction $1 / k$ of nonzero weights remaining after weight
propagation.
These decompositions represent hybrids between the trivial ones
in terms of numbers of nonzero weights remaining after propagation.
In this way, they enable to experiment with the effectiveness of weight
propagation in more detail, which we do later in
Section~\ref{section:application}.
For context, recall the trivial decompositions in which all comparators are
either placed in a single component or separate components.
Propagation based on these decompositions results in either minimally or
maximally many weights being propagated.
In particular, in the expected setting where the initial weights are zero for
all but the input, this difference is reflected in the numbers of nonzero
weights that remain after propagation as follows.
When all comparators are in a single component, the number of remaining nonzero
weights is at most $2n$, and when all comparators are
in separate components, it is at most $(0.5 n + 1) \times d$.
As an alternative, the decomposition can be designed to provide a balance
between these two extremes.

The sparseness factor $k$ is a positive integer that reduces the number of
weights remaining after propagation by roughly a factor of $k$.
This is done by placing propagated weights only on levels that are multiples of
$k$, in addition to the last level.
We first define it formally and then show examples of how to create and use it.
To form the decomposition, the comparators in $N$ are first partitioned
based on which of the following ranges their levels $l$ fall into:
$E_1 = \eset{1}{k}$,
$E_2 = \eset{k + 1}{2k}$,
$\dots$,
$E_{p} = \eset{k p + 1}{d}$
where $p = \floor{d / k}$.
That is, the first $k$ layers are in one component, the next $k$ layers in
another, and so on.
Then the components are refined individually.
More specifically, for each $1 \le j \le p$, the wires $\eset{1}{n}$ are partitioned
into a number $n_j$ of minimal sets $I_{1j}, \dots, I_{n_{j}j}$ such that
for each comparator associated with $E_j$, its two wires fall into the same set.
Moreover, any wires not adjacent to those comparators, if any, form one of the
sets.
This amounts to a partition of the comparators into connected
components described indirectly in terms of wires.
The final decomposition is then obtained as
$
D_k(N) = \sel{
  \tuple{N_{ij}, I_{ij}, \sel{l \in E_j}{l \le d}}
}{
  1 \le j \le p,
  1 \le i \le n_j
}
$
where each network $N_{ij}$ is 
$N_{ij} = \sel{\tuple{i', j', l} \in N}{i', j' \in I_{ij}, l \in E_j}$.
One may observe that this construction places all comparators in separate
components when $k = 1$, and in the same component when $k \ge d$ and the network
is connected.
Therefore, for connected networks, $D_k(N)$ generalizes the trivial
decompositions.

\newcommand{\emphconnection}[3]{
\draw[snlinestylethick] (#1,#3) -- (#1+#2+4,#3);
}

\newcommand{\emphconnectionhorizontal}[2]{
  \emphconnection{\value{sncolumncounter}}{#2}{#1}%
}

\newcommand{\exampledecompositioncreation}[1]{%
  \begin{minipage}{0.30\textwidth}
  \centering
  \begin{comparatornetwork}{10}{29}
    #1

    \widenodelabel{}
    \nodeconnection{{1,2},{4,5},{6,7},{8,9}}

    \widenodelabel{}
    \nodeconnection{{2,3},{6,8}}
    \nodeconnection{{7,10}}

    \widenodelabel{}
    \nodeconnection{{1,4},{7,8},{9,10}}
    \nodeconnection{{3,5}}

    \widenodelabel{}
    \nodeconnection{{2,4},{6,7},{8,9}}

    \widenodelabel{}
    \nodeconnection{{1,2},{3,4},{7,8}}

    \widenodelabel{}
    \nodeconnection{{3,7}}

    \widenodelabel{}
  \end{comparatornetwork}
  \end{minipage}
}

\begin{example}
  The decomposition $D_2(N)$ can be formed in two steps for the network $N$ on
  $10$ wires shown below on the left.
  First, the layers of the network are partitioned and then the wires within
  those partitions are further partitioned by identifying connected components.
  \label{example:decomposition-creation}
  \tikzset{snpicturestyle/.style={snpicturestyledensedense}}
  \begin{center}
    \tikzset{snlinestyle/.style={}}
    \exampledecompositioncreation{
    }
    \exampledecompositioncreation{
      \draw[dashed] (10,0)--(10,11);
      \draw[dashed] (19,0)--(19,11);
    }
    \tikzset{snlinestyle/.style={snlinestylethick}}
    \exampledecompositioncreation{
      \draw[draw=white, line width=4pt] (10,0)--(10,11);
      \draw[draw=white, line width=4pt] (19,0)--(19,11);

      \emphconnection{3}{0}{2}
      \emphconnection{3}{0}{6}
      \emphconnection{3}{1}{7}
      \emphconnection{3}{0}{8}

      \emphconnection{12}{1}{4}
      \emphconnection{12}{1}{7}
      \emphconnection{12}{1}{8}
      \emphconnection{12}{1}{9}

      \emphconnection{21}{0}{3}
      \emphconnection{21}{0}{7}
    }
  \end{center}
  The transition from the first to the second diagram illustrates the partition
  of layer levels into the sets
  $E_1 = \set{1, 2}$, $E_2 = \set{3, 4}$, and $E_3 = \set{5, 6}$.
  In the second transition, these parts are further refined
  by partitioning wires in the context of $E_1$ to
  $I_{11} = \set{1, 2, 3}$, $I_{21} = \set{4, 5}$, $I_{31} = \eset{6}{10}$,
  in the context of $E_2$ to
  $I_{12} = \set{1, 2, 4}$, $I_{22} = \set{3, 5}$, $I_{32} = \eset{6}{10}$,
  and in the context of $E_3$ to
  $I_{13} = \set{1, 2}$, $I_{23} = \set{3, 4, 7, 8}$, $I_{33} = \set{9, 10}$.
  Note that the parts $I_{12}$ and $I_{22}$ are indeed distinct, despite their
  seeming overlap in the diagram.
  \eofex
\end{example}

\newcommand{\exampledecompositionpropagation}[3]{%
  \begin{minipage}{0.30\textwidth}
    \centering
    \begin{comparatornetwork}{5}{20}
      \widenodelabel{#1}
      \emphconnectionhorizontal{2}{0}
      \singlenodeconnection{1}{2}{circle,fill=black}
      \singlenodeconnection{4}{5}{circle,fill=white,draw}
      \addtocounter{sncolumncounter}{1}
      \widenodelabel{\wt{\wzero},\wt{\wzero},\wt{\wzero},\wt{\wzero},\wt{\wzero}}
      \singlenodeconnection{2}{3}{circle,fill=black}
      \addtocounter{sncolumncounter}{1}
      \widenodelabel{#2}
      \emphconnectionhorizontal{4}{1}
      \singlenodeconnection{1}{4}{rectangle,inner sep=2.7pt,fill=white,draw}
      \addtocounter{sncolumncounter}{1}
      \singlenodeconnection{3}{5}{rectangle,inner sep=2.7pt,fill=black}
      \addtocounter{sncolumncounter}{1} 
      \widenodelabel{\wt{\wzero},\wt{\wzero},\wt{\wzero},\wt{\wzero},\wt{\wzero}}
      \singlenodeconnection{2}{4}{rectangle,inner sep=2.7pt,fill=white,draw}
      \addtocounter{sncolumncounter}{1}
      \widenodelabel{#3}
    \end{comparatornetwork}
  \end{minipage}
}

\tikzset{snlinestyle/.style={snlinestylethick}}

\begin{example}
  The following shows propagation over a network $N$ taken from the
  top left of the network in Example~\ref{example:decomposition-creation}.
  The network is decomposed into the four confined networks in $D_2(N)$
  highlighted with thick lines and distinct markers.
  The separation to circles and squares stems from levels,
  and to black and white from wires.
  The transitions illustrate propagation over the two components with circles in
  any order, followed by the two components with squares in any order of
  colors.
  Observe that all nonzero weights are on levels $0$, $2$ and $4$ in the end.
  The fact that these are multiples of two stems from the choice of $k = 2$ in
  $D_k(N)$.
  \label{example:black-box-decomposition}
  \tikzset{snpicturestyle/.style={snpicturestyledense}}
  \begin{center}
    \exampledecompositionpropagation{\wt{20},\wt{90},\wt{80},\wt{30},\wt{70}}{\wt{\wzero},\wt{\wzero},\wt{\wzero},\wt{\wzero},\wt{\wzero}}{\wt{\wzero},\wt{\wzero},\wt{\wzero},\wt{\wzero},\wt{\wzero}}
    \exampledecompositionpropagation{\wt{\wzero},\wt{70},\wt{60},\wt{\wzero},\wt{40}}{\wt{20},\wt{20},\wt{20},\wt{30},\wt{30}}{\wt{\wzero},\wt{\wzero},\wt{\wzero},\wt{\wzero},\wt{\wzero}}
    \exampledecompositionpropagation{\wt{\wzero},\wt{70},\wt{60},\wt{\wzero},\wt{40}}{\wt{\wzero},\wt{\wzero},\wt{\wzero},\wt{10},\wt{10}}{\wt{20},\wt{20},\wt{20},\wt{20},\wt{20}}
  \end{center}
  \eofex
\end{example}

\section{Application of Weights and Sorting Networks to Answer Set Programming}
\label{section:application}

In this section, we focus on the application of comparator networks and
propagated weights to solving optimization problems expressed in ASP.
Specifically, we present a novel approach to optimization rewriting and prove
the correctness of the approach in Section~\ref{section:optimization-rewriting}.
Then, we give formal and experimental results proving the potential for
exponential improvements in time consumption when solving an example
family of programs in Section~\ref{section:formal}.
These promising performance indicators are complemented in
Section~\ref{section:challenges} with a discussion on potential drawbacks of
the approach concerning the impact of weight propagation on unit propagation.
Finally, a thorough experimental evaluation is given in
Section~\ref{section:experimental} in order to asses practical performance on
an extensive set of benchmarks stemming from prior ASP competitions.
The benchmarks are augmented with some newly generated instances to better match
the state-of-the-art performance level of contemporary ASP optimization.

\subsection{Optimization Rewriting using Sorting Networks}
\label{section:optimization-rewriting}

As demonstrated in Section~\ref{section:translation}, a comparator
network $N$ can be translated into an answer set program $\aspify{N}$ that captures
the wire values of $N$ when the input is encoded in atoms.
This can be used to translate any given optimization program
$\tuple{P,e}$ into another one $\tuple{P \union \aspify{N}, e'}$ that
yields the same answer sets with some added atoms and
unchanged optimization values.
The key observation relevant to this paper is that this presents an
opportunity to craft the new pseudo-Boolean expression $e'$ in terms of atoms
and weights that express the wire values and wire weights of an appropriately
chosen network $N$.
The techniques from Section~\ref{section:propagation} are applicable to
determining those wire weights: we may calculate them by propagating weights
taken from the original pseudo-Boolean expression $e$ across the network $N$.
A key benefit of this is that the fresh atoms in $\aspify{N}$ can help
tremendously in branch-and-bound optimization.
Namely, as will be demonstrated formally in Section~\ref{section:formal},
optimization rewriting using specifically sorting networks can yield even
exponential savings in terms of the numbers of learned nogoods that stem from
optimization statements.
Moreover, sorting networks can be generated efficiently with well-known schemes,
such as Batcher's odd-even merge sorting networks \cite{Batcher68:afips}.
Hence, sorting networks are a good starting point for $N$.
However, currently known practically feasible sorting networks are $O(n (\log
n)^2)$ in size, which is a problem when rewriting large optimization
statements.
Thus it can pay off to sacrifice some of the benefits of the fresh atoms by
using smaller networks that sort only some input sequences or only some
subsequences of inputs.
This question of which network to use is further addressed in experiments on
various networks in Section~\ref{section:experimental}.

Before formal results, we recall the \emph{splitting set theorem}
\cite{liftur94a} formulated for the respective \emph{bottom} and
\emph{top} programs $B$ and $T$ such that the rule bodies of $T$
may refer to atoms defined by the rule heads of $B$, but not vice versa:

\begin{proposition}
\label{prop:bottom-and-top}
An interpretation $I\subseteq\sig{B\union T}$ is an answer set of $B\union T$
iff
(i)
$I_B=I\intersection\sig{B}$ is an answer set of $B$ and
(ii)
$I_T=I\intersection\sig{T}$ is an answer set of
$T\union\sel{a}{a\in I_B \intersection\sig{T}}$.
\end{proposition}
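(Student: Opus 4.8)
The plan is to treat Proposition~\ref{prop:bottom-and-top} as the Lifschitz--Turner splitting set theorem \cite{liftur94a} and re-derive it for the notion of answer set used here, which differs from the classical one in two harmless respects: nogoods may be present, and atoms outside $\head{P}$ are \emph{fixed} to their value in the candidate interpretation rather than forced false. I would first record the structural consequences of the splitting hypothesis that drive everything: (a) every rule and every nogood of $B\union T$ mentions only atoms of $\sig{B}$ or only atoms of $\sig{T}$, so restricting an interpretation to the relevant signature preserves satisfaction; (b) $\head{B\union T}=\head{B}\union\head{T}$ with $\head{T}$ disjoint from $\sig{B}$, the reduct splits as $(B\union T)^{I}=B^{I}\union T^{I}$, and $B^{I}$ depends only on $I\intersection\sig{B}$; and (c) adjoining the facts $\sel{a}{a\in I_B\intersection\sig{T}}$ to $T$ has, on the minimal models that matter, exactly the effect of partially evaluating the $\sig{B}$-atoms out of the bodies of $T$ with respect to $I_B$, since each such atom is either made true as a fact or is left without any defining rule. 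Throughout, write $I_B=I\intersection\sig{B}$, $I_T=I\intersection\sig{T}$, and $T'=T\union\sel{a}{a\in I_B\intersection\sig{T}}$.

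For the forward direction, assume $I\in\as{B\union T}$. That $I_B$ satisfies $B$ and $I_T$ satisfies $T'$ is immediate from (a), the adjoined facts being members of $I_B\intersection\sig{T}\subseteq I_T$. For the $\subseteq$-minimality of $I_B$ relative to $B^{I_B}=B^{I}$: if some $K\subsetneq I_B$ satisfied $B^{I_B}$ with $K\setminus\head{B}=I_B\setminus\head{B}$, then $K\union(I\setminus\sig{B})$ would satisfy $(B\union T)^{I}$---the $B$-rules because $K$ does within $\sig{B}$, the $T$-rules because any body true under $K\union(I\setminus\sig{B})$ is already true under $I$ and forces its head, which lies in $\sig{T}\setminus\sig{B}$, to be supplied by $I\setminus\sig{B}$---while still agreeing with $I$ off $\head{B\union T}$ and being a proper subset, contradicting the minimality of $I$. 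The minimality of $I_T$ relative to $(T')^{I_T}$ is symmetric: a proper submodel $K$ of $(T')^{I_T}$ agreeing with $I_T$ off $\head{T'}$ contains the interface facts $I_B\intersection\sig{T}$, so $K\union I_B$ again agrees with $I$ on $\sig{B}$, satisfies $(B\union T)^{I}$ because $I_B$ satisfies $B^{I}$ and $K$ satisfies $(T')^{I_T}$, and is a proper submodel of $I$, again contradicting minimality.

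For the converse, assume the two restrictions $I_B$ and $I_T$ of the \emph{same} given $I$ are answer sets of $B$ and of $T'$, respectively; note that this makes $I_B\intersection\sig{T}=I_T\intersection\sig{B}$ hold automatically, so no separate compatibility argument is needed. Satisfaction of $B\union T$ by $I$ follows from (a). For the $\subseteq$-minimality of $I$ relative to $(B\union T)^{I}$, take $J\subseteq I$ satisfying $(B\union T)^{I}$ with $J\setminus\head{B\union T}=I\setminus\head{B\union T}$. Then $J\intersection\sig{B}$ satisfies $B^{I}=B^{I_B}$ and, because $\head{B\union T}\intersection\sig{B}=\head{B}$, has the correct restriction off $\head{B}$; minimality of $I_B$ gives $J\intersection\sig{B}=I_B$. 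Hence $J$ contains the interface facts $I_B\intersection\sig{T}$, $J\intersection\sig{T}$ satisfies $(T')^{I_T}$, and---checking separately the interface atoms, where $J$ now agrees with $I_B$ and hence with $I_T$, and the atoms of $\sig{T}\setminus\sig{B}$, which lie outside $\head{B\union T}$ whenever they lie outside $\head{T}$---it has the correct restriction off $\head{T'}$; minimality of $I_T$ gives $J\intersection\sig{T}=I_T$, so $J=I$.

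The step I expect to be the main obstacle is exactly this last bookkeeping around the interface atoms $\sig{B}\intersection\sig{T}$ under the nonstandard semantics: because atoms outside $\head{B\union T}$ are merely fixed rather than falsified, and because $\head{T'}$, $\head{T}$, and $\head{B}\intersection\sig{T}$ need not coincide, one has to align three different ``off-head'' conditions and confirm that the ``add the facts $I_B\intersection\sig{T}$'' operation genuinely reproduces the partial-evaluation step of the classical proof. Once those points are pinned down, what remains is the routine split-and-recombine argument.
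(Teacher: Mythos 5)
Your proof is correct. The paper itself offers no proof of Proposition~\ref{prop:bottom-and-top}: it is explicitly \emph{recalled} as the splitting set theorem with a citation to Lifschitz and Turner (1994), and the surrounding proofs (e.g.\ of Theorem~\ref{theorem:answer-set-bijection}) simply invoke its two directions. So where the paper's route is ``cite the classical result and trust that it transfers,'' yours is a direct re-derivation for the paper's own, nonstandard semantics — answer sets of programs containing nogoods, with atoms outside $\head{P}$ fixed by the side condition $J \setminus \head{P} = I \setminus \head{P}$ rather than minimized away. That is a genuine difference in route, and your version buys something real: the classical theorem does not literally apply to this semantics, so the transfer has to be checked somewhere, and the delicate points are exactly the ones you isolate — that $\head{T'} = \head{T} \union (I_B \intersection \sig{T})$ differs from $\head{T}$ only by atoms of $\sig{B}$, so the three ``off-head'' conditions for $B$, $T'$, and $B \union T$ align once $\head{T} \intersection \sig{B} = \emptyset$ is in force; that the added facts are forced into any competitor $K$ for $T'$ via the reduct; and that the reducts factor as $(B\union T)^I = B^{I_B} \union T^{I_T}$. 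All four minimality arguments (two counterexample constructions $K \union (I \setminus \sig{B})$ and $K \union I_B$ for the forward direction, and the two successive applications of minimality of $I_B$ then $I_T$ for the converse) go through as you describe. One presentational caveat: the paper states the splitting hypothesis only informally (``bodies of $T$ may refer to heads of $B$, but not vice versa''), and your argument uses the stronger, standard reading $\head{T} \intersection \sig{B} = \emptyset$; that is surely what is intended, but it is worth flagging as an assumption you are making explicit rather than one the paper states.
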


Now we are ready to present the main formal result of this paper.
The result ensures that the answer sets of a program and the respective
optimization values are principally unchanged when its optimization statement
is rewritten based on a network whose translation is added to the answer set program.
The rewritten optimization statement contains the atoms of the translation
weighted by the original weights after propagating them over the network.
For convenience, we consider cases where the original pseudo-Boolean expression
$e$ being minimized is given in terms of literally the same atoms $\pred{x}{i0}$
that are used to encode the input vector in the translation $\aspify{N}$.

\begin{theorem}
\label{theorem:answer-set-bijection}
Let $N$ be a comparator network on $n$ wires,
$D$ a decomposition of $N$,
$\matsym{A} = \matof{a_{ij}}$ and
$\matsym{U} = \matof{u_{ij}} = \propfun{D}(\matsym{A})$ wire weights for $N$
where $a_{ij} = 0$ for every level $j>0$,
$\aspify{N}$ the translation of $N$ into an answer set program,
$P$ an answer set program such that $j = 0$ for every $\pred{x}{ij} \in \sig{P}$,
and
$e=\sum_{i = 1}^n a_{i0} \pred{x}{i0}$ and
$e'=\sum_{i = 1}^n \sum_{j = 0}^d u_{ij} \pred{x}{ij}$
pseudo-Boolean expressions.
Then there is a bijection $f : \as{P} \to \as{P \union \aspify{N}}$ such that
$e(M) = e'(f(M))$ for every $M \in \as{P}$.
\end{theorem}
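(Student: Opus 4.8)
The plan is to construct the bijection $f$ explicitly using Lemma~\ref{lemma:correspondence} and then verify the weight identity using Theorem~\ref{theorem:propfun-decomposition}. First I would observe that $\aspify{N}$ is a negation-free program whose only ``input'' atoms are the $\pred{x}{i0}$, and that every other atom $\pred{x}{ij}$ with $j>0$ appears only as a rule head. Thus, relative to $P$, the program $\aspify{N}$ sits ``on top'': the rule bodies of $\aspify{N}$ refer to the atoms $\pred{x}{i0} \in \sig{P}$ but $P$ never refers to any $\pred{x}{ij}$ with $j>0$. This is exactly the setup of Proposition~\ref{prop:bottom-and-top}, with $B = P$ and $T = \aspify{N}$. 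Applying it, an interpretation $I \subseteq \sig{P \union \aspify{N}}$ is an answer set of $P \union \aspify{N}$ iff $I_B = I \cap \sig{P}$ is an answer set of $P$ and $I_T = I \cap \sig{\aspify{N}}$ is an answer set of $\aspify{N} \union \sel{\pred{x}{i0}}{\pred{x}{i0} \in I_B}$.

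The key point is then that $\sel{\pred{x}{i0}}{\pred{x}{i0} \in I_B}$ is precisely $\iwires{\vec{x}}$ for the Boolean input vector $\vec{x}$ determined by $x_{i0} = 1 \iff \pred{x}{i0} \in I_B$ (with $x_{i0}=0$ for wires $i$ whose atom $\pred{x}{i0}$ is not in $\sig{P}$, hence absent from every answer set of $P$). By Lemma~\ref{lemma:correspondence}, the program $\aspify{N} \union \iwires{\vec{x}}$ has a \emph{unique} answer set $X$, which moreover satisfies $\pred{x}{il} \in X \iff x_{il} = 1$ where $\matof{x_{il}} = \wirevalues{N}{\vec{x}}$. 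Consequently, for each $M \in \as{P}$ there is exactly one answer set of $P \union \aspify{N}$ whose restriction to $\sig{P}$ equals $M$, and conversely every answer set of $P \union \aspify{N}$ restricts to an answer set of $P$; this establishes the bijection $f$, with $f(M)$ the unique extension of $M$ by the wire-value atoms $\pred{x}{il}$ determined by feeding $\vec{x}(M)$ through $N$. I would need a small care point here: answer sets of $P$ are complete interpretations over $\sig{P}$ and $f(M)$ must be complete over $\sig{P \union \aspify{N}}$, so I should spell out that the unique answer set $X$ of $\aspify{N}\union\iwires{\vec{x}}$ together with $M$ yields a complete, $\subseteq$-minimal interpretation.

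Finally I would verify $e(M) = e'(f(M))$. Writing $\vec{x} = \vec{x}(M)$ and $\matof{x_{ij}} = \wirevalues{N}{\vec{x}}$, by the previous paragraph $\pred{x}{ij} \in f(M) \iff x_{ij} = 1$ for all $i,j$, so $e'(f(M)) = \sum_{i=1}^n \sum_{j=0}^d u_{ij} x_{ij} = \weightfun{N}{\matsym{U}}(\vec{x})$, and since $a_{ij} = 0$ for $j > 0$ we likewise get $e(M) = \sum_{i=1}^n a_{i0} x_{i0} = \weightfun{N}{\matsym{A}}(\vec{x})$. Because $\matsym{U} = \propfun{D}(\matsym{A})$, Theorem~\ref{theorem:propfun-decomposition} gives $\weightfun{N}{\matsym{A}}(\vec{x}) = \weightfun{N}{\matsym{U}}(\vec{x})$, and the claim follows. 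The main obstacle I anticipate is not any single deep step but the bookkeeping around the splitting-set application and the interpretation-vs-assignment distinction: making precise that the $\pred{x}{i0}$ atoms present in $\sig{P}$ correspond faithfully to a Boolean vector, that wires whose input atom is absent from $\sig{P}$ are handled consistently (always $0$), and that combining $M$ with the unique $X$ from Lemma~\ref{lemma:correspondence} indeed yields exactly the answer sets of the combined program. Everything else is a direct chaining of the already-established lemmas.
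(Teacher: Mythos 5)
Your proposal is correct and follows essentially the same route as the paper's proof: both construct $f(M)$ by extending $M$ with the wire-value atoms determined by feeding $\vec{x}(M)$ through $N$, establish bijectivity via the two directions of the splitting set theorem (Proposition~\ref{prop:bottom-and-top}) together with the uniqueness part of Lemma~\ref{lemma:correspondence}, and verify $e(M)=e'(f(M))$ by reducing both sides to weight functions and invoking Theorem~\ref{theorem:propfun-decomposition}. The only cosmetic difference is that you obtain injectivity and surjectivity in one stroke from the iff-form of the splitting theorem, whereas the paper argues the three properties (well-definedness, injectivity, surjectivity) separately.
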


\begin{proof}
For an interpretation $M \subseteq \sig{P}$, define a vector
$\vec{x}_M$ that has at index $1 \le i \le n$ value $1$, if
$\pred{x}{i0} \in M$, and $0$, otherwise.
Let $X_M$ be the set $X_M = \sel{\pred{x}{ij}}{x_{ij} = 1}$ where
$x_{ij}$ refer to the wire values $\wirevalues{N}{\vec{x}_M}$
of the network $N$ given the input $\vec{x}_M$.
In the following, we prove that $f(M)=M \union X_M$ defined for
$M\subseteq\sig{P}$ provides the bijection of interest. First, let us
establish that $f$ maps an answer set $M\in\as{P}$ to an answer set
$f(M) \in \as{P \union \aspify{N}}$.
This holds because $M \union X_M \in \as{P \union \aspify{N}}$, which
follows from the ``if'' direction of Proposition~\ref{prop:bottom-and-top}.
The first requirement in the proposition is satisfied by the
assumption $M \in \as{P}$, and the second by the fact that
$X_M\in\as{\aspify{N} \union \sel{\pred{x}{i0}}{\pred{x}{i0} \in M}}$,
which follows from Lemma~\ref{lemma:correspondence}.

Second, the function $f$ is an injection, i.e., it maps all inputs $M
\in \as{P}$ to distinct outputs $f(M)$.  This holds because the input
$M$ can be recovered from the output $f(M)$.  Namely,
$f(M) \intersection \sig{P} = (M \union X_M) \intersection \sig{P} = M$,
since
$X_M \intersection \sig{P} = \set{\pred{x}{i0} \in X_M} \subseteq M$.

Third, the function $f$ is a surjection, i.e., for every output $I \in
\as{P \union \aspify{N}}$ there is an input $M \in \as{P}$ such that
$f(M)=I$. Indeed, by the ``only if'' direction of
Proposition~\ref{prop:bottom-and-top}, every answer set
$I\in\as{P\union\aspify{N}}$
can be split into the answer sets
$M = I \intersection \sig{P} \in \as{P}$
and
$X = I \intersection \sig{\aspify{N}} \in
\as{\aspify{N} \union \sel{\pred{x}{i0}}{\pred{x}{i0} \in M}}$.
By Lemma~\ref{lemma:correspondence}, the latter must be $X = X_M$.
Therefore, it follows that $I = M \union X = M \union X_M = f(M)$.
Finally, for every input $M \in \as{P}$,
we have
$
e'(f(M))
= e'(X_M)
= \sum_{\pred{x}{ij} \in X} u_{ij}
= \sum_{i = 1}^n \sum_{j = 0}^d u_{ij} x_{ij}
= \weightfun{N}{\matsym{U}}(\vec{x}_M)
= \weightfun{N}{\matsym{A}}(\vec{x}_M)
= \sum_{i = 1}^n a_{i0} x_{i0}
= \sum_{\pred{x}{i0} \in X} a_{i0}
= e(X)
= e(M)$.
The third and the third to last equalities are due to
Lemma~\ref{lemma:correspondence} while the fifth is due to
Lemma~\ref{theorem:propfun-decomposition}.
\end{proof}

It immediately follows that optimal answer sets of the program are preserved.

\begin{corollary}
\label{corollary:optimal-answer-set-bijection}
Let $N$, $n$, $D$, $\matsym{A}$, $\matsym{U}$, $\aspify{N}$, $e$, and $e'$ be
defined as in Theorem~\ref{theorem:answer-set-bijection}.
Then there is a bijection $f : \as{P} \to \as{P \union \aspify{N}}$ such that
each $M \in \as{P}$ is optimal for $\pair{P}{e}$
iff
$f(M)$ is optimal for $\pair{P \union \aspify{N}}{e'}$.
\end{corollary}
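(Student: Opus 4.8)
The plan is to reuse verbatim the bijection $f : \as{P} \to \as{P \union \aspify{N}}$ supplied by Theorem~\ref{theorem:answer-set-bijection}, which already satisfies $e(M) = e'(f(M))$ for every $M \in \as{P}$. Since $f$ is a bijection that preserves objective values pointwise, the value sets coincide: every $I \in \as{P \union \aspify{N}}$ equals $f(M)$ for exactly one $M \in \as{P}$, so $\sel{e(M)}{M \in \as{P}} = \sel{e'(I)}{I \in \as{P \union \aspify{N}}}$. If $\as{P}$ is empty, then so is $\as{P \union \aspify{N}}$ (again by the bijection) and the claim holds vacuously; otherwise both value sets are nonempty and share the same minimum.

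First I would fix this common minimum $v = \min \sel{e(M)}{M \in \as{P}} = \min \sel{e'(I)}{I \in \as{P \union \aspify{N}}}$. Then, unfolding the definition of optimality: $M \in \as{P}$ is optimal for $\pair{P}{e}$ iff $e(M) = v$; applying the identity $e(M) = e'(f(M))$, this is equivalent to $e'(f(M)) = v$; and since $v$ is the minimum of $e'$ over all of $\as{P \union \aspify{N}}$ while $f(M)$ belongs to that set, this last equation is precisely the statement that $f(M)$ is optimal for $\pair{P \union \aspify{N}}{e'}$. Chaining these equivalences yields the desired biconditional for every $M \in \as{P}$, with $f$ the same bijection as in the theorem.

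I do not expect a genuine obstacle here; the only points worth a word of care are the degenerate case of no answer sets (handled above) and the appeal to surjectivity of $f$, which guarantees that the minimum of $e'$ over the whole target set agrees with its minimum over the image $f(\as{P})$ — both facts being immediate consequences of Theorem~\ref{theorem:answer-set-bijection}. Hence the corollary follows directly.
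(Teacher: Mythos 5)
Your proof is correct and matches the paper's intent exactly: the paper offers no separate argument, presenting the corollary as an immediate consequence of Theorem~\ref{theorem:answer-set-bijection}, and your chain of equivalences (value sets coincide via the value-preserving bijection, hence the minima coincide, hence optimality transfers) is precisely the routine verification being elided. No gaps.
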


\subsection{Formal Performance Analysis}
\label{section:formal}

In this section, we formally analyze the rewriting techniques from
Section~\ref{section:optimization-rewriting}.
Our focus is on the performance of an optimizing ASP solver without
and with optimization rewriting.
We obtain a result showcasing an exponential improvement in favor of
optimization rewriting based on sorting networks and weight propagation on an
example family of optimization programs.
The programs in the family are designed to select subsets of size at least $k$
atoms from among $n$ atoms and to minimize the number of picked atoms.
The subsets with precisely $k$ atoms are then optimal, and the role of
optimization is to rule out all the subsets larger than that.

The result applies in principle to any ASP solver that performs optimization via
\emph{branch-and-bound} search on the optimization value and operates with
nogoods and \emph{propagators} in a manner we detail in the analysis.
These background assumptions reflect existing
conflict driven nogood learning (CDNL) solving techniques on a
simplified level and with the additional assumption that learned
nogoods are kept in memory indefinitely without deleting them.

We begin the rest of this section by briefly discussing the relevant mechanics
of optimizing ASP solvers and necessary formal preliminaries.
Then we prove statements concerning solving difficulty
without and with optimization rewriting.
Finally, we investigate the behavior of actual ASP solvers on sample programs
from the family.
These experimental results likewise show a significant improvement in favor
of optimization rewriting.
This confirmation is meaningful since actual ``black-box'' ASP solvers
generally carry intricate features beyond those of any formal model of a solver.
These experiments that are linked with the formal analysis are later
complemented in Section~\ref{section:experimental} by a broader evaluation on
an extensive set of heterogeneous benchmarks of greater practical relevance.

Regarding ASP optimization, as discussed above, we concentrate on minimization
using the branch-and-bound optimization strategy.
A solver employing this strategy on an optimization program $\pair{P}{e}$
implements a recursive procedure in which it
\begin{enumerate}
  \item
    takes as input a range of integers known to contain the optimal value of the
    pseudo-Boolean expression $e$, and which is initially huge,
  \item
    partitions the range into two nonempty ranges by some heuristic procedure,
    or returns a value in the range if the range contains only a single
    value,
  \item
    searches for an answer set $M$ of $P$ with a value $e(M)$ within the
    lower range, and
  \item
    recursively calls the procedure on either the lower range adjusted to end in
    $e(M)$ or the upper range, depending on whether an answer set was found or
    not, respectively.
\end{enumerate}
Given an optimization program with at least some answer set,
this procedure will eventually find one of the $e$-optimal ones.
The requirement of bounding $e(M)$ to a low range can be represented as a
set of nogoods, i.e., a constraint.
However, because such a set of nogoods is generally prohibitively large, it is
typically represented indirectly by a propagator.
A propagator is essentially a procedure for determining whether an assignment
conflicts with a specific constraint, or is close to conflicting with it, and
which can \emph{explain} such conflicts in terms of nogoods.
Propagators fit into a solving process that implements
\emph{lazy generation} of constraints as follows.
To begin with, an input answer set program is split into two parts:
a regular part and a part with
constraints 
that are initially abstracted away from the solver.
Then, the solver begins a search for an answer set of the regular part only.
In order to adhere to the abstracted constraints,
the solver consults propagators specific to the constraints at various points in
the search process on whether the current assignment satisfies all of them.
As long as all constraints are satisfied, the solver proceeds as usual.
However, in the event that a propagator reports a conflict between its current
assignment and a constraint, the solver learns the nogoods given by the
respective propagator as an explanation for the conflict.
The solver then resolves the conflict, which is now reflected in the nogoods
that the solver is aware of, and continues the search.

This optimization process has the following key properties that we make use of.
The \emph{first} key property is that exactly one of the searches in Step 3\
discovers an answer set $M$ with an optimal value $v$ for $e(M)$, and another
one of the searches imposes a bound $v - 1$ for $e$, and which proves the
optimality of $v$ by yielding no answer sets.
In a hypothetical, ideal scenario,
the search for $M$ proceeds without conflicts and
no searches beyond these two need to be done.
Even in such a best-case scenario, the challenge in the optimization task
includes the inescapable difficulty of proving the optimality of $v$.
That difficulty, however, can be considerable even in the best-case scenario.
In our analysis, we focus on this case for simplicity of analysis, and due to
its computationally challenging and integral role in the optimization task.
That is, we consider the difficulty of searching with a bound $v - 1$
right below the optimal value $v$.
The \emph{second} key property is that the described lazy generation of constraints
brings a variable fraction of nogoods from constraints to the knowledge of the solver
during search. 
This fraction can range from zero to one, even in practice.
In some searches, an answer set is found before
a constraint leads to a significant number of conflicts,
in which case the fraction is low.
In some other searches, a constraint is central in rejecting a large number of
candidate answer sets, and perhaps all otherwise feasible answer sets,
in which case the fraction can be high.
The family of optimization programs we present represents an extreme case,
with a fraction of exactly one.
The family builds on certain ``bottle-neck'' constraints that have been used to
illustrate differences between SMT decision solvers that use either propagators
or encodings \cite{abniolrost13}.
The analysis required here is complicated by both a shift to the context of ASP
from SMT and particularly the consideration of optimization
instances instead of decision instances.
The ASP optimization programs considered here
are parameterized by non-negative integers $n \le k$,
they have $\binom{n}{k}$ optimal answer sets,
and we accordingly name them \emph{binomial optimization programs}.
The answer sets of these programs consist of all subsets of at least $k$
atoms selected from $\range{\pred{x}{1}}{\pred{x}{n}}$.
From those answer sets, the ones with precisely $k$ atoms are optimal.
\begin{definition}
The \emph{binomial program}
$\binomproglt{n}{k}$
consists of the rules
$\set{\rg{\pred{x}{1}}{;}{\pred{x}{n}}}$
and
$\IF \AGGCOUNT{\rg{\pred{x}{1}}{;}{\pred{x}{n}}} < k$.
\end{definition}
\begin{definition}
The \emph{binomial optimization program}
$\binomprogopt{n}{k}$
is the optimization program
$\pair{\binomproglt{n}{k}}{\rangeplus{1 \pred{x}{1}}{1 \pred{x}{n}}}$.
\end{definition}
The optimal value for the objective function
$\rangeplus{1 \pred{x}{1}}{1 \pred{x}{n}}$ is $k$.
Therefore, applying branch-and-bound optimization to $\binomprogopt{n}{k}$
entails, as per the earlier discussion on optimization,
for the binomial program $\binomproglt{n}{k}$
to be solved once under the constraint that the objective function
$\rangeplus{1 \pred{x}{1}}{1 \pred{x}{n}}$ takes a value less than $k$.
Observe that this constraint is impossible to satisfy,
and that it can be represented by the set of nogoods
$
  \sel{
    \sel{\assigntrue{x}}{x \in X}
  }{
    X \subseteq \eset{\pred{x}{1}}{\pred{x}{n}},
    |X| = k
  }
$.
This infeasible decision problem is the target of our subsequent analysis.

We model propagators as simple functions that, in the event of a conflict,
designate a single violated nogood to be the explanation reported back to the
solver.
This is a streamlined definition in comparison to, e.g., the definition given by
Drescher and Walsh \citeyear{drewal12a} according to which propagators take as
arguments partial assignments that may or may not conflict with the constraint
and return sets of nogoods including at least one violated nogood on conflicts.
\begin{definition}
A \emph{propagator} $\pi$ for a constraint $\Gamma$ is a function
from partial assignments $A$ in conflict with $\Gamma$
to nogoods $\delta' \in \Gamma$ in conflict with $A$.
The constraint $\Gamma$ of a propagator $\pi$ is denoted by $\nogoodsof{\pi}$.
\end{definition}
In order to reason about the set of nogoods accumulated by calling a propagator during
search, we below formulate the concept of a history of partial assignments
provided as input to a propagator. Based on such a history, the propagator
generates explanatory nogoods.
\begin{definition}
A \emph{propagator call history} (PCH) for an answer set program $P$ and a propagator $\pi$
is a sequence $\range{A_1}{A_m}$ of partial assignments
such that for all $1 \le i \le m$, the partial assignment $A_i$ satisfies
$\nogoodsforsupp{P}$, $\range{\pi(A_1)}{\pi(A_{i-1})}$
and conflicts with $\nogoodsof{\pi}$.
\end{definition}
Intuitively, a PCH is a record of all the calls a solver makes to a propagator
before the first answer set is found, or before the search space is exhausted
while searching for one.
This definition reflects a number of assumptions we make in modeling the ASP
solving process.
For one, we assume that propagator-produced nogoods are never deleted and that
propagators are called only on partial assignments that satisfy all nogoods
known to the solver, including the nogoods previously generated by the
propagators themselves.
This assumption is behind the requirement in the definition for each assignment
$A_i$ to satisfy all nogoods generated in response to the earlier partial assignments
$\range{A_1}{A_{i-1}}$.
This makes our formal analysis feasible, but
technically demands an ASP solver with infinite memory.
In reality, ASP solvers manage memory by deleting some nogoods
periodically and possibly re-learning them later~\cite{GKS12:aij},
and this includes propagator-produced nogoods~\cite{drewal12a}.
The requirement that $A_i$ also satisfies $\nogoodsforsupp{P}$ reflects another
assumption: the solver makes sure that partial assignments are viable
supported model candidates of $P$ before calling propagators on them.
Enforcing consistency with the supported model semantics like here
is a well established method in ASP solving
\cite{GKS12:aij,DBLP:conf/lpnmr/AlvianoDLR15},
and therefore this assumption maintains
practical relevance of our results.

As mentioned, we take interest in programs that have no answer sets, since they
are important in optimality proofs.
When solving such answer set programs, any used propagators will need to be queried
sufficiently many times, so that the answer set program that is revealed to the solver has
no answer sets either.
We formalize this condition as a property of a PCH.
\begin{definition}
Let $P$ be an answer set program and $\pi$ a propagator such that $P \union \nogoodsof{\pi}$
has no answer sets.
A PCH $\range{A_1}{A_m}$ for $P$ and $\pi$ is \emph{complete} if $P \union
\eset{\pi(A_1)}{\pi(A_m)}$ has no answer sets.
\end{definition}

Given these notions, we are equipped to present a proposition on the
significant difficulty of solving a binomial program 
combined with a constraint that rejects all of its answer sets.
Here we use the length of a PCH as an abstract measure of that solving
difficulty, and in particular, the difficulty due to nogoods generated
by a propagator in order to represent the added constraint.
The length turns out to be exponential even in this simple case.
The result concerns a situation where no optimization rewriting takes place.
The proposition essentially states that a propagator that is responsible for the
optimization statement of a binomial optimization program has to generate an
exponential number of nogoods for the final unsatisfiability proof stage.
Afterwards, we give a result that instead concerns the case where sorting
network based optimization rewriting is used.
An exponential difference in outcomes will be apparent between these two
results.

\begin{proposition}
\label{prop:opposing-card}
Let $n$ and $k$ be non-negative integers
such that $k \le n$,
$\pi$ a propagator for the constraint
\begin{align*}
  \sel{
    \sel{\assigntrue{x}}{x \in X}
  }{
    X \subseteq \eset{\pred{x}{1}}{\pred{x}{n}},
    |X| = k
  },
\end{align*}
and let $\rg{A_1}{,}{A_m}$ be a complete PCH
for the answer set program $\binomproglt{n}{k}$ and $\pi$.
Then $m = \binom{n}{k}$.
\end{proposition}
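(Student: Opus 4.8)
The plan is to prove the bounds $m \le \binom{n}{k}$ and $m \ge \binom{n}{k}$ separately and combine them. The key structural fact is that $\nogoodsof{\pi}$ consists of exactly $\binom{n}{k}$ nogoods $\delta_X = \sel{\assigntrue{x}}{x \in X}$, one for each $k$-element subset $X \subseteq \eset{\pred{x}{1}}{\pred{x}{n}}$; moreover, an assignment $A$ is in conflict with $\delta_X$ iff $\assigntrue{x} \in A$ for every $x \in X$, and $A$ satisfies a set of nogoods exactly when it conflicts with none of them.

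For the upper bound, I would show that $\pi(A_1), \dots, \pi(A_m)$ are pairwise distinct. By the definition of a propagator, each $\pi(A_i)$ lies in $\nogoodsof{\pi}$, hence equals $\delta_{X_i}$ for a unique $k$-subset $X_i$, and is in conflict with $A_i$. If $\pi(A_i) = \pi(A_j)$ for some $i < j$, then the PCH definition forces $A_j$ to satisfy $\pi(A_1), \dots, \pi(A_{j-1})$, and in particular the nogood $\pi(A_i) = \pi(A_j)$; but $\pi(A_j)$ is in conflict with $A_j$ by the propagator definition, and an assignment cannot both satisfy and conflict with the same nogood. Hence $X_1, \dots, X_m$ are $m$ distinct $k$-subsets, so $m \le \binom{n}{k}$.

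For the lower bound, I would first determine $\as{\binomproglt{n}{k}}$. Treating the choice rule $\set{\rg{\pred{x}{1}}{;}{\pred{x}{n}}}$ and the cardinality constraint $\IF \AGGCOUNT{\rg{\pred{x}{1}}{;}{\pred{x}{n}}} < k$ as the shorthands defined in Section~\ref{section:preliminaries}, a direct check shows that the answer sets of $\binomproglt{n}{k}$ are, modulo the fixed values of the auxiliary atoms introduced by the choice-rule expansion, exactly the subsets $M \subseteq \eset{\pred{x}{1}}{\pred{x}{n}}$ with $\size{M} \ge k$. Since $\Gamma' = \eset{\pi(A_1)}{\pi(A_m)}$ is a constraint over $\sig{\binomproglt{n}{k}}$, the monotone-addition identity $\as{P \union \Gamma} = \sel{M \in \as{P}}{M \text{ satisfies } \Gamma}$ from Section~\ref{section:preliminaries} then shows that the answer sets of $\binomproglt{n}{k} \union \Gamma'$ are (again modulo the auxiliary atoms) exactly those $M$ with $\size{M} \ge k$ and $X_i \not\subseteq M$ for every $1 \le i \le m$, using that $M$ satisfies $\delta_{X_i}$ iff $X_i \not\subseteq M$. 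Now assume $m < \binom{n}{k}$. Then some $k$-subset $X^*$ is not among $X_1, \dots, X_m$, and $M = X^*$ satisfies $\size{M} = k$ and $X_i \not\subseteq X^*$ for every $i$, because $X_i$ and $X^*$ are distinct subsets of the same size $k$. Hence $\binomproglt{n}{k} \union \Gamma'$ has an answer set, contradicting the completeness of the PCH, which asserts that it has none. Therefore $m \ge \binom{n}{k}$, and with the upper bound this gives $m = \binom{n}{k}$.

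I expect the step requiring the most care to be the exact determination of $\as{\binomproglt{n}{k}}$, including the bookkeeping of the auxiliary atoms introduced by unfolding the choice-rule shorthand, since everything afterwards rests on identifying ``answer set of $\binomproglt{n}{k}$'' with ``subset of $\eset{\pred{x}{1}}{\pred{x}{n}}$ of size at least $k$''. The distinctness argument for the upper bound is short but hinges on using precisely the PCH clause that each $A_j$ satisfies all earlier $\pi(A_1), \dots, \pi(A_{j-1})$; the requirement in the PCH definition that each $A_i$ also satisfy $\nogoodsforsupp{\binomproglt{n}{k}}$ plays no role in the counting.
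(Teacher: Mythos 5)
Your proof is correct and follows essentially the same route as the paper's: the upper bound comes from the pairwise distinctness of the produced nogoods (which the paper compresses into the assertion $m = |\Delta| \le \binom{n}{k}$), and the lower bound comes from the observation that each $k$-subset is an answer set of $\binomproglt{n}{k}$ violating only its own nogood, so completeness forces all $\binom{n}{k}$ nogoods to appear. You merely state the lower bound contrapositively and spell out the distinctness and answer-set bookkeeping that the paper leaves implicit.
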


\begin{proof}
Let $\Delta = \pi(A_1) \union \cdots \union \pi(A_m)$
be the set of nogoods produced by the propagator $\pi$
in response to the PCH $\rg{A_1}{,}{A_m}$.
On the one hand,
each nogood $\delta \in \nogoodsof{\pi}$
corresponds to an answer set
$\sel{x}{\assigntrue{x} \in \delta}$
of the answer set program $\binomproglt{n}{k}$
that also satisfies all the other nogoods, i.e., those in
$\nogoodsof{\pi} \setminus \set{\delta}$.
On the other hand, the clear unsatisfiability of
$\binomproglt{n}{k} \union \nogoodsof{\pi}$
and the completeness of $\rg{A_1}{,}{A_m}$ imply unsatisfiability of
$P \union \Delta$.
No $\delta \in \nogoodsof{\pi}$ can be excluded from $\Delta$
without giving up the unsatisfiability of $P \union \Delta$,
and therefore we must have $\nogoodsof{\pi} \subseteq \Delta$.
Hence $m \ge \binom{n}{k}$.
Also,
certainly $m = |\Delta| \le \binom{n}{k}$,
and therefore $m = \binom{n}{k}$.
\end{proof}

The following lemma is integral in proving our next result.
The lemma states that all nonempty nogoods over the output atoms of a sorting
network can be simplified into singleton nogoods.

\begin{lemma}
\label{lemma:opposing-nogood-equivalence}
Let $n$, $k$, and $d$ be non-negative integers
such that $k \le n$,
$N$ a sorting network of width $n$ and depth $d$,
$\aspify{N}$ the translation of $N$ into an answer set program,
$
\Lambda
=
\nogoodsforsupp{\aspify{N} \union \set{\eset{\pred{x}{10}}{\pred{x}{n0}}}}
$
the supported model constraint of that translation
combined with a choice rule on the input atoms of the network,
and
$\delta$ a nonempty nogood of positive signed literals over
$\range{\pred{x}{1d}}{\pred{x}{nd}}$.
Then
$
\Lambda \union \set{\delta}
\classicequiv
\Lambda \union \set{\set{\assigntrue{\pred{x}{id}}}}
$
where $i = \min \sel{j}{\assigntrue{\pred{x}{jd}} \in \delta}$.
\end{lemma}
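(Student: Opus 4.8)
The plan is to reduce the claimed classical equivalence to a combinatorial fact about the supported models that both programs have in common, and then to establish that fact using the $0/1$ sorting principle for $N$. Write $Q = \aspify{N} \union \set{\eset{\pred{x}{10}}{\pred{x}{n0}}}$ for the program whose supported model constraint is $\Lambda$, and let $D = \sig{Q}$ be the signature shared by $\Lambda \union \set{\delta}$ and $\Lambda \union \set{\set{\assigntrue{\pred{x}{id}}}}$. By the defining property of $\nogoodsforsupp{\cdot}$, an interpretation $I \subseteq D$ satisfies $\Lambda$ exactly when $I \in \supp{Q}$. Since $\Lambda$ is contained in both enlarged constraints, any $I \notin \supp{Q}$ falsifies $\Lambda$ and hence falsifies both of them; so it suffices to show that every $I \in \supp{Q}$ satisfies $\delta$ if and only if it satisfies $\set{\assigntrue{\pred{x}{id}}}$. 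As both $\delta$ and $\set{\assigntrue{\pred{x}{id}}}$ consist of positive signed literals only, and $I$ is complete, this amounts to showing $\delta \subseteq I \iff \pred{x}{id} \in I$.

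Next I would pin down $\supp{Q}$ exactly. The choice fact $\set{\eset{\pred{x}{10}}{\pred{x}{n0}}}$ leaves the level-$0$ atoms $\pred{x}{10}, \dots, \pred{x}{n0}$ freely assignable, and for any such assignment an induction on the level $l = 1, \dots, d$, running as in the proof of Lemma~\ref{lemma:correspondence} but carried out for supported models of $Q$ instead of answer sets of $\aspify{N} \union \iwires{\vec{x}}$, shows that the compare-exchange and inertia rules of $\aspify{N}$ force the truth value of each $\pred{x}{il}$ to agree with the wire value $x_{il}$ of $\wirevalues{N}{\vec{x}}$, where $\vec{x}$ is the input vector read off from the level-$0$ atoms of $I$. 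Hence $\supp{Q}$ consists of precisely the interpretations $I_{\vec{x}}$, one for each $\vec{x} \in \set{0,1}^n$, whose wire atoms encode $\wirevalues{N}{\vec{x}}$.

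Then I would use that $N$ is a sorting network: for a Boolean input $\vec{x}$ with $t$ ones, its level-$d$ output column is sorted ascending, hence equals $\sequence{0, \dots, 0, 1, \dots, 1}$ with the ones occupying positions $n - t + 1, \dots, n$. Thus in every $I = I_{\vec{x}} \in \supp{Q}$ the set of true output atoms is upward closed in the wire index: if $\pred{x}{jd} \in I$ and $j' > j$, then $\pred{x}{j'd} \in I$. Now put $S = \sel{j}{\assigntrue{\pred{x}{jd}} \in \delta}$; since $\delta$ is nonempty, $S \neq \emptyset$ and $i = \min S$ is well defined. If $\delta \subseteq I$, then $\pred{x}{id} \in I$ because $\assigntrue{\pred{x}{id}} \in \delta$; conversely, if $\pred{x}{id} \in I$, then since every $j \in S$ satisfies $j \geq i$, upward closure gives $\pred{x}{jd} \in I$ for all $j \in S$, i.e.\ $\delta \subseteq I$. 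This establishes the equivalence $\delta \subseteq I \iff \pred{x}{id} \in I$ needed above, so the two enlarged constraints are satisfied by exactly the same interpretations, which is the claim.

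The main obstacle is the step of pinning down $\supp{Q}$ precisely: one has to verify that the auxiliary atoms and the negation introduced by expanding the choice fact create no spurious supported models and do not interfere with the negation-free network translation. This is routine --- it is essentially the argument of Lemma~\ref{lemma:correspondence} replayed for supported models rather than answer sets --- but it is the place where care is needed; everything afterwards is a short deduction from the $0/1$ sorting principle.
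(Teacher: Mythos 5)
Your proposal is correct and follows essentially the same route as the paper's own proof: establish via the reasoning of Lemma~\ref{lemma:correspondence} that every supported model of the network translation (with free inputs) has sorted output atoms, and then check, one supported model at a time, that $\delta \subseteq I \iff \pred{x}{id} \in I$ using the upward-closure of the true output atoms and the fact that $\assigntrue{\pred{x}{id}} \in \delta$. Your explicit flagging of the auxiliary atoms from the choice-rule expansion is a point the paper glosses over, but it does not change the argument.
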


\begin{proof}
Let $n$, $k$, $d$, $N$, $\aspify{N}$, $\Lambda$, $\delta$, and $i$ be as above.
Using similar reasoning as in the proof of Lemma~\ref{lemma:correspondence},
it can be shown that in each supported model $M \models \nogoodsforsupp{\aspify{N}}$,
the outputs
$\rg{\pred{x}{1d}}{,}{\pred{x}{nd}}$
are sorted such that false precedes true.
That is,
for each $1 \le j < k \le n$,
if $\pred{x}{jd} \in M$ then $\pred{x}{kd} \in M$.
We will use this to prove the lemma one supported model at a time.
To this end, let us consider any supported model
$M \models \Lambda$
of the translation of the network.
On the one hand, if $\pred{x}{id} \in M$,
then the mentioned sortedness property guarantees that
also $\pred{x}{jd} \in M$ for each $j \in \eset{i}{n}$,
which particularly includes each $j$ such that $\pred{x}{jd} \in \delta$,
and therefore $\delta \subseteq M$.
On the other hand, if $\pred{x}{id} \not\in M$,
then $\delta \not\subseteq M$.
Hence, $M$ satisfies $\set{\assigntrue{\pred{x}{id}}}$ iff it satisfies $\delta$.
As this holds for any $M \models \Lambda$,
we obtain the consequent of the lemma.
\end{proof}

Now it can be shown that the addition of a sorting network
to the setting considered in Proposition~\ref{prop:opposing-card}
yields an improvement in solving difficulty, as measured by PCH length,
from exponential to linear.
This reduction stems from the fact that after the addition of the sorting
network, the constraint that bounds the optimization value can be stated in
terms of the output atoms of the network.
The benefit of this is that, in the context of that network, there is only a
linear number of logically distinct nogoods over its output atoms.
Therefore, any propagator for the constraint may only produce up to a
linear number of nogoods.

\begin{proposition}
\label{prop:opposing-sn}
Let $n$, $k$, and $d$ be non-negative integers
such that $k \le n$,
$N$ a sorting network of width $n$ and depth $d$,
$\aspify{N}$ the translation of $N$ into an answer set program,
$\binomproglt{n}{k}$ a binomial program
on atoms $\rg{\pred{x}{10}}{,}{\pred{x}{n0}}$,
$\pi$ a propagator for the constraint
\begin{align*}
  \sel{
    \sel{\assigntrue{x}}{x \in X}
  }{
    X \subseteq \eset{\pred{x}{1d}}{\pred{x}{nd}},
    |X| = k
  },
\end{align*}
and
let $\rg{A_1}{,}{A_m}$ be a complete PCH for $\aspify{N} \union \binomproglt{n}{k}$ and
$\pi$.
Then $m \le n - k + 1$.
\end{proposition}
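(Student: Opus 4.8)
The plan is to associate with each nogood $\pi(A_i)$ produced by the propagator the least wire index it mentions, call it $c_i$, and to show that the $c_i$ are pairwise distinct and all lie in $\eset{1}{n-k+1}$; this yields $m \le n-k+1$ directly. (Completeness of the PCH plays no role in the upper bound.) First I would record the shape of the produced nogoods: since $\pi$ is a propagator for the constraint in the statement, every $\pi(A_i)$ is a member of $\nogoodsof{\pi}$, hence $\pi(A_i) = \sel{\assigntrue{\pred{x}{jd}}}{j \in J_i}$ for some $J_i \subseteq \eset{1}{n}$ with $|J_i| = k$. Put $c_i = \min J_i$. A $k$-element subset of $\eset{1}{n}$ has least element at most $n-k+1$, so $c_i \le n-k+1$ for every $i$. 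Moreover $\pi(A_i)$ is a nogood in conflict with $A_i$, i.e.\ $\pi(A_i) \subseteq A_i$, so in particular $\assigntrue{\pred{x}{c_i d}} \in A_i$.

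The second ingredient is the structural consequence of the requirement that each $A_i$ satisfy $\nogoodsforsupp{P}$ for $P = \aspify{N} \union \binomproglt{n}{k}$: because $N$ is a sorting network, in every supported model the output atoms $\rg{\pred{x}{1d}}{,}{\pred{x}{nd}}$ appear in sorted order with false preceding true --- this is exactly the fact established within the proof of Lemma~\ref{lemma:opposing-nogood-equivalence} --- and hence an assignment consistent with the supported-model semantics that sets $\pred{x}{jd}$ true must also set $\pred{x}{j'd}$ true for every $j' \ge j$. Applied to $A_i$, from $\assigntrue{\pred{x}{c_i d}} \in A_i$ we obtain $\assigntrue{\pred{x}{jd}} \in A_i$ for all $j \ge c_i$. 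Now fix $a < b$; the PCH requires that $A_b$ satisfy $\pi(A_a)$, i.e.\ $\pi(A_a) \not\subseteq A_b$. If we had $c_a \ge c_b$, then $J_a \subseteq \set{c_a, \dots, n} \subseteq \set{c_b, \dots, n}$, and since $\assigntrue{\pred{x}{jd}} \in A_b$ for all $j \ge c_b$ we would get $\pi(A_a) \subseteq A_b$, a contradiction. Hence $c_a < c_b$, so $c_1 < c_2 < \cdots < c_m$ is a strictly increasing sequence inside $\eset{1}{n-k+1}$, and therefore $m \le n-k+1$.

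The step I expect to require the most care is the passage from ``$A_i$ satisfies $\nogoodsforsupp{P}$'' to the operational statement that a true output atom forces all higher-indexed output atoms to be true in $A_i$ itself, not merely in every total extension of $A_i$. This rests on reading that condition in the spirit of the paper's solver model --- the solver keeps its partial assignments consistent with and closed under the supported-model semantics (in practice, via Clark's completion together with unit propagation) before consulting a propagator --- rather than as the weaker ``$A_i$ is not in conflict with $\nogoodsforsupp{P}$''. Once this reading is fixed, Lemma~\ref{lemma:opposing-nogood-equivalence} (or just the sortedness observation inside its proof) is the only nontrivial input, and the remainder is the indexing bookkeeping above.
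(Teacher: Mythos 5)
Your proof is correct and rests on the same two pillars as the paper's: the sortedness of the output atoms $\rg{\pred{x}{1d}}{,}{\pred{x}{nd}}$ in every supported model of $\aspify{N}$ (the fact established inside the proof of Lemma~\ref{lemma:opposing-nogood-equivalence}) and a pigeonhole count over the $n-k+1$ possible minimal indices of a $k$-subset of the outputs. The presentation differs. The paper invokes Lemma~\ref{lemma:opposing-nogood-equivalence} to collapse each $\pi(A_i)$ to a singleton $\set{\assigntrue{\pred{x}{j_i d}}}$ modulo the supported-model constraint, introduces the equivalence relation $\sim$, and asserts---without spelling out the argument---that no two of the produced nogoods are $\sim$-equivalent, whence the singletons are pairwise distinct. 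You instead prove the slightly stronger statement $c_1 < \cdots < c_m$ by a direct containment argument on the partial assignments; in effect you have written out the step the paper leaves implicit behind ``Based on the definitions of a propagator and a PCH, we can prove that the relation $\sim$ holds for no pair of nogoods.'' The delicate point you flag at the end---that ``$A_i$ satisfies $\nogoodsforsupp{P}$'' must be read operationally, so that the sortedness implications are realized inside the partial assignment $A_i$ itself and not merely in its total extensions---is genuine, but it is not a weakness specific to your route: since the nogoods of $\nogoodsforsupp{P}$ are complete, a strictly partial assignment can never syntactically conflict with them, so the literal reading would make the condition vacuous and would equally undermine the paper's elided distinctness step. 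Under the solver model the paper intends, both arguments go through, and yours has the merit of making the dependence on that reading explicit.
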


\begin{proof}
Let $n$, $k$, $d$, $N$, $\aspify{N}$, $\binomproglt{n}{k}$, $\pi$, and
$\rg{A_1}{,}{A_m}$ be as above,
and define
$
\Lambda
=
\nogoodsforsupp{\aspify{N} \union \set{\eset{\pred{x}{10}}{\pred{x}{n0}}}}
$.
By Lemma~\ref{lemma:opposing-nogood-equivalence},
for each $1 \le i \le m$,
we have
$
\Lambda \union \set{\pi(A_i)}
\classicequiv
\Lambda \union \set{\set{\sigma_i}}
$
where $\sigma_i$ is the signed literal $\assigntrue{\pred{x}{jd}}$
with $j = \min \sel{k}{\pred{x}{kd} \in \pi(A_i)}$.
Also, let $\sim$ be the equivalence relation
that holds for nogoods $\sigma$ and $\sigma'$
if
$
\Delta \union \sigma
\classicequiv
\Delta \union \sigma'
$
where $\Delta = \nogoodsforsupp{\aspify{N} \union \binomproglt{n}{k}}$.
Observe that
$
\Delta =
\Lambda \union \delta
$
where $\delta$ is the constraint
$\IF \AGGCOUNT{\rg{\pred{x}{10}}{;}{\pred{x}{n0}}} < k$.
Because $\classicequiv$
is a congruence relation with respect to addition of constraints,
this implies $\pi(A_i) \sim \set{\sigma_i}$.
Based on the definitions of a propagator and a PCH, we can prove that the
relation $\sim$ holds for no pair of nogoods from $\range{\pi(A_1)}{\pi(A_m)}$.
From the transitivity of the equivalence relation $\sim$
it follows that no two of $\range{\sigma_1}{\sigma_m}$ are identical
and therefore $|\eset{\sigma_1}{\sigma_m}| = m$.
Given that $\sigma_i$ is the first signed literal in the nogood $\pi(A_i)$,
which  forbids a $k$-subset of the output atoms of $N$,
the signed literal $\sigma_i$ must be one of $\range{\assigntrue{\pred{x}{1d}}}{{\assigntrue{\pred{x}{(n - k + 1)(d)}}}}$.
Hence,
$\eset{\sigma_1}{\sigma_m} \subseteq \eset{\assigntrue{\pred{x}{1d}}}{\assigntrue{\pred{x}{(n - k + 1)(d)}}}$
and thus,
$m \le n - k + 1$.
\end{proof}

In order to study the impact of optimization rewriting on binomial optimization
programs $\binomprogopt{n}{k}$ in practice as well we ran experiments using
the preprocessing tool \system{pbtranslate}%
\footnote{
  Available at \url{https://github.com/jbomanson/pbtranslate}~.%
}
and the state-of-the-art ASP solver \system{clasp} (v.~3.3.3) \cite{gekakarosc15a}.
A part of the goal in these experiments is to investigate the difference between
an actual, off-the-shelf ASP solver and the simplified, abstract ASP solver
considered in our preceding analysis.
In particular, these experiments verify that improvements of high
magnitude as in the analysis can also be witnessed in practice.
To keep the results as relevant to practical ASP solving as possible, the solver
\system{clasp} was ran without disabling any of its sophisticated solving techniques.
Moreover, the entire optimization problem was solved, as opposed to only the final
unsatisfiability proofs that were considered in the analysis.
To keep the results consistent between runs and manageable to interpret, a single solving
configuration was fixed, namely ``tweety'', so that \system{clasp} would not
automatically pick different solving configurations between runs.

\begin{table}
  \caption{\label{table:opposing-conflicts}%
    Numbers of conflicts reported by \system{clasp} after solving binomial
    optimization programs $\binomprogopt{n}{\lfloor n/2 \rfloor}$ with varying
    numbers of atoms $n$.
  }
  \begin{tabular}{@{}l*{9}{@{\hspace{0.0pt}}r}@{}}
\toprule
$n$                                               & 5    & 6    & 7    & 8    & 9     & 10    & 15      & 20     & 25    \\
\midrule
\pipeline{norm+tweety}                            & 7    & 12   & 19   & 36   & 65    & 134   & 3.66k   & 248k   & 16.2M   \\
\pipeline{norm+rw+tweety}                         & 5    & 9    & 9    & 18   & 19    & 42    & 167     & 1.72k  & 23.6k   \\
\midrule
\pipeline{norm+tweety+usc}                        & 6    & 14   & 16   & 42   & 42    & 172   & 1.26k   & 11.3k  & 84.8k   \\
\pipeline{tweety}                                 & 4    & 10   & 15   & 35   & 56    & 126   & 3.21k   & 263k   & 17.2M   \\
\pipeline{rw+tweety}                              & 5    & 10   & 15   & 32   & 48    & 124   & 3.27k   & 234k   & 12.6M   \\
\pipeline{tweety+usc}                             & 5    & 14   & 18   & 53   & 80    & 197   & 4.81k   & 1.09M  & 60.5M   \\
\midrule
$\binom{n}{\lfloor n/2 \rfloor}$                  & 10   & 20   & 35   & 70   & 126   & 252   & 6.44k   & 185k   & 5.20M     \\
\bottomrule
\end{tabular}%

\end{table}

\begin{table}
  \caption{\label{table:opposing-cpu-time}%
    CPU times in seconds to complement the numbers of conflicts shown in
    Table~\ref{table:opposing-conflicts}.
  }
  \begin{tabular}{@{}l*{6}{@{\hspace{0.00pt}}r}@{}}
\toprule
$n$                                               & 8       & 9       & 10      & 15     & 20     & 25    \\
\midrule
\pipeline{norm+tweety}                            & 0.0159  & 0.0127  & 0.0154  & 0.048  & 4.09   & 928.7   \\
\pipeline{norm+rw+tweety}                         & 0.0151  & 0.0125  & 0.0125  & 0.0126 & 0.0352 & 0.513   \\
\midrule
\pipeline{norm+tweety+usc}                        & 0.0157  & 0.0137  & 0.0173  & 0.0235 & 0.19   & 1.29   \\
\pipeline{tweety}                                 & 0.0121  & 0.00908 & 0.00621 & 0.0205 & 2.87   & 219.8   \\
\pipeline{rw+tweety}                              & 0.0035  & 0.00586 & 0.00384 & 0.0377 & 4.42   & 790.0   \\
\pipeline{tweety+usc}                             & 0.00934 & 0.0103  & 0.0106  & 0.0529 & 14.1   & 2248.6   \\
\bottomrule
\end{tabular}%

\end{table}

The results are shown in Table~\ref{table:opposing-conflicts} in the form of
numbers of conflicts reported by \system{clasp} for increasing program size
parameters $n$.
These conflicts are of particular interest in relation to the preceding
analysis.
This is because the number of conflicts reported by the solver gives an upper
bound on the number of conflicts due to a nogood produced by a propagator for
the optimization statement.
That number, in turn, corresponds to the PCH length considered in the analysis.
Regarding the bound parameter $k$ of the binomial programs,
only the case
$k = \lfloor n/2 \rfloor$ was studied to simplify parameterization.
This choice of $k$
maximizes the number of optimal answer sets for any given
$n$.
That maximum number is given by the central binomial coefficient
$\binom{n}{\lfloor n/2 \rfloor}$.
These numbers are shown for reference in the
Table~\ref{table:opposing-conflicts}, since they also give the complete PCH
lengths predicted in Proposition~\ref{prop:opposing-card}.

The experiments were repeated with a number of solving pipelines, obtained by
composing different preprocessing and solving options into various combinations.
Initial preprocessing consisted either of sorting network based normalization of
the cardinality constraints in the instances (\pipeline{norm}) or of keeping
them as is so that \system{clasp} can handle them with its internal
propagators.
Optimization was implemented by default by \system{clasp} via the branch-and-bound strategy
and optionally via branch-and-bound after optimization rewriting (\pipeline{rw})
or via (unsatisfiable) \emph{core-guided} optimization (\pipeline{usc}).
This amounts to the $2 \times 3$ systems shown in the table, of which the
pipelines \pipeline{norm+tweety} and \pipeline{norm+rw+tweety} are the most relevant
to the preceding analysis.
In particular, pipeline \pipeline{norm+tweety}
is most representative of the setting
in Proposition~\ref{prop:opposing-card}
and pipeline \pipeline{norm+rw+tweety}
of Proposition~\ref{prop:opposing-sn}.
These pipelines are actual, complex analogues of the abstract, simplified
solving settings considered in the propositions.
Results for the remaining pipelines are provided for reference so that the significance
of the different components in the above pipelines can be evaluated in a useful context.
These reference pipelines contain the core-guided pipelines as well as
pipelines without normalization.
The reason for why we regard pipelines with normalization more relevant to the preceding
analysis is that normalization reduces the number of conflicts
due to cardinality constraints.
Therefore, the numbers of conflicts reported for the pipelines with normalization
are more closely reflective of the numbers of conflicts due to optimization statements,
although still not exact.

The results show that sorting network based optimization rewriting brings the
numbers of conflicts down to minuscule fractions of the original numbers
in pipelines that include normalization, i.e.,
in \pipeline{norm+rw+tweety} and \pipeline{norm+tweety}.
This improvement in conflicts is more significant than what is obtained
with core-guided optimization in \pipeline{norm+tweety+usc},
although both do yield improvements of comparable magnitude.
Regarding normalization, it improves each pipeline to which it is applied
and it is a strong factor in achieving best results in this comparison.
That is, it improves the performance of pipelines whether they use
branch-and-bound or core-guided optimization strategies and
whether or not they use optimization rewriting or not.
Moreover, the improvements due to optimization rewriting and normalization are of
similar magnitudes.
To see this,
one may consider the changes obtained when
adding optimization rewriting (\pipeline{rw}) to pipelines with or without normalization,
and then contrasting them with the changes obtained when
adding normalization (\pipeline{norm}) to pipelines with or without optimization rewriting.
Specifically, going from pipeline \pipeline{tweety} to \pipeline{rw+tweety}
yields a mild improvement,
and from pipeline \pipeline{norm+tweety} to \pipeline{norm+rw+tweety}
a huge improvement.
Likewise, going from pipeline \pipeline{tweety} to \pipeline{norm+tweety}
yields a mild improvement,
and from pipeline \pipeline{rw+tweety} to \pipeline{norm+rw+tweety}
a huge improvement.

Regarding the relation between this experimental evaluation and the preceding
formal analysis, both do highlight improvements due to optimization rewriting,
yet none of the statistics obtained in this evaluation precisely match the ones
predicted in the abstract formal analysis.
For example, Proposition~\ref{prop:opposing-card} predicts an exponential
number of propagator related conflicts to occur during an unsatisfiability
proof when no optimization rewriting is being used.
The precise predicted numbers are given by the central binomial coefficients
in Table~\ref{table:opposing-conflicts}.
However, this coefficient does not provide a consistent lower bound for any of
the pipelines.
Specifically, looking at the numbers of conflicts for pipeline
\pipeline{tweety}, which is the pipeline closest to the setting in the
proposition, the central binomial coefficient provides a lower bound for it
only starting at $n = 20$.
This is an indication that \system{clasp} internally improves upon the abstract
solver model we consider, and that these improvements make a difference at
least for modest program sizes $n < 20$.
On the other hand, Proposition~\ref{prop:opposing-sn} predicts at most a linear
number of propagator related conflicts to occur during an unsatisfiability
proof when optimization rewriting is being used.
That is, it predicts the optimization related propagator to produce an
insignificant number of conflicts during the final unsatisfiability proof.
Nevertheless, the numbers of conflicts for all of the tested systems run
into the thousands and higher when $n = 25$.
This has several potential reasons:
the experiments measure conflicts over the entire optimization process and not
only the final unsatisfiability proof, all conflicts are measured as opposed to
only propagator related conflicts, and that solving techniques such as nogood
deletion are used so that individual propagator related conflicts may occur more
than once.
A tighter comparison between the experiments and the analysis
could be obtained by extending the solver to
separately count the numbers of nogoods generated by different propagators.
Such a comparison is outside the scope of this experimental evaluation, however.

The CPU times required by these experiments are shown in
Table~\ref{table:opposing-cpu-time}.
In light of these CPU times, the picture is primarily similar as before:
pipeline \pipeline{norm+rw+tweety} is the overall winner
and together with \pipeline{norm+tweety+usc} they are in a class of their own
above the rest.
There is one main difference, however, which is that pipeline \pipeline{tweety}
fares relatively better than before, and at $n = 25$ it overtakes the pipelines
\pipeline{norm+tweety} and \pipeline{rw+tweety},
which add normalization or optimization rewriting only, respectively.
This is in line with the fact that both normalization and optimization rewriting
increase the program size, which generally increases the amount of work
per conflict done by the solver.

\subsection{Challenges Due to Heterogeneous Weights}
\label{section:challenges}

In this section, we describe challenges in optimization
rewriting that come with having heterogeneous and possibly large weights in
optimization functions.
This is to contrast with the positive formal results of
Section~\ref{section:formal} that concern optimization functions with only unit
weights or generally uniform weights.
The extent to which the benefits of sorting network based optimization rewriting
survive these challenges in practice is later studied experimentally in
Section~\ref{section:experimental}.

\newsavebox\nestedtikzpicturea
\begin{lrbox}{\nestedtikzpicturea}
  \examplesingletonnetwork{$\pred{x}{10}$,$\pred{x}{20}$}{$\pred{x}{11}$,$\pred{x}{21}$}
\end{lrbox}

\newsavebox\nestedtikzpictureb
\begin{lrbox}{\nestedtikzpictureb}
  \examplesingletonnetwork{\wt{40},\wt{70}}{\wt{0},\wt{0}}
\end{lrbox}

\newsavebox\nestedtikzpicturec
\begin{lrbox}{\nestedtikzpicturec}
  \examplesingletonnetwork{\wt{0},\wt{30}}{\wt{40},\wt{40}}
\end{lrbox}

\tikzset{snpicturestyle/.style={snpicturestylelarge}}
Weight propagation makes it harder to identify certain opportunities
for \emph{inference}.
For illustration, suppose we have a branch-and-bound solver that has already
found an answer set of value $70$ for an optimization program $\pair{P}{e}$
with the objective function $e = 40\pred{x}{10} + 70\pred{x}{20}$.
As per the discussion on branch-and-bound solving in
Section~\ref{section:formal}, the solver
will search for more optimal answer sets
by enforcing the upper bound $e < 70$.
From this point onward, it is reasonable to expect the solver to infer the atom
$\pred{x}{20}$ to be false given that its weight alone surpasses this bound.
However, this immediate inference becomes less immediately obvious once
optimization rewriting is applied based on sorting networks and weight
propagation.
To see this, consider rewriting the objective function $e$ using  a network
$N$ with a single comparator.
The variables and rules of the ASP translation $\aspify{N}$ of $N$ are:
\begin{center}
  \begin{tikzpicture}[xscale=7.5, yscale=-4.5]
    \node[anchor=north west] at (0.0, 0.0) {
      \usebox\nestedtikzpicturea
    };
    \node[anchor=north west] at (0.3, 0.0-0.02) {
        $\begin{array}{l}
            \pred{x}{11}\IF\pred{x}{10}\AND\pred{x}{20}\END \\
            \pred{x}{21}\IF\pred{x}{10}\END \\
            \pred{x}{21}\IF\pred{x}{20}\END
        \end{array}$
    };
  \end{tikzpicture}
\end{center}
In this case, any non-trivial
weight propagation turns the objective function $e$ into $e'$.
These expressions are shown
below as weights over $N$ and as equations:
\begin{center}
  \begin{tikzpicture}[xscale=7.5, yscale=-4.5]
    \node[anchor=north west] at (0.0, 0.0) {
      \usebox\nestedtikzpictureb
    };
    \node[anchor=north west] at (0.3, 0.0+0.10) {
      $e = 40 \pred{x}{10} + 70 \pred{x}{20}$
    };
    \node[anchor=north west] at (0.0, 0.33) {
      \usebox\nestedtikzpicturec
    };
    \node[anchor=north west] at (0.3, 0.33+0.10) {
      $e' = 30 \pred{x}{20} + 40 \pred{x}{11} + 40 \pred{x}{21}$
    };
  \end{tikzpicture}
\end{center}
After this rewriting, it is computationally straightforward again to infer that
$\pred{x}{20}$ is false.
However, the inference now requires the solver to either perform \emph{lookahead} based
on the encoding of the sorting network or to rely on a previously learned
nogood that captures the inference.
Lookahead is an inference technique in which an atom without a truth value is
temporarily and heuristically assigned one, and the logical
consequences of the assignment are explored via propagation.
In the case of $\pred{x}{20}$, if it is assigned true, then unit propagation
finds $\pred{x}{21}$ to be true as well by one of the rules in $\aspify{N}$.
As the total weight of $\pred{x}{20}$ and $\pred{x}{21}$ exceeds the upper bound,
$\pred{x}{20}$ can be inferred to be false.

In summary, and in the terminology of constraint programming and SAT, unit
propagation (UP) on rewritten optimization statements involving non-unit weights
does not maintain \emph{generalized arc consistency (GAC)}.
Regarding this terminology, given a constraint and an
assignment, GAC stands for the desirable condition that every assignment of an
individual atom that follows as their logical consequence is included in the
assignment, as in \cite{abniolrost13}.
Furthermore, for an encoding of a constraint to maintain GAC by UP, it is
required that repeated iteration of UP over the encoding always reaches a state
that satisfies GAC.
When an optimization statement is interpreted as a type of dynamic constraint,
and optimization rewriting is taken to produce an encoding of it, the above
discussed example indicates that there are inferences that are not captured by
UP after rewriting.
This is a drawback of the presented approach.
The significance of it is unclear, however.
Indeed, GAC has been routinely studied for SAT encodings of pseudo-Boolean
constraints and the studies have found both encodings that do and do not
maintain GAC to perform well in practice \cite{a_new_look_at_bdds,picat_2016}.
Hence, even though GAC is a positive feature, maximum pursuit of it has not
always proven fruitful, particularly when it has demanded larger encodings.
Nevertheless, the current lack of GAC-maintenance in optimization rewriting
leaves potential room for finding ways to recuperate the lost propagations and
to benefit even further from optimization rewriting in possible future work.

\subsection{Experimental Evaluation}
\label{section:experimental}

Next we continue the evaluation of the optimization rewriting approach from
Section~\ref{section:optimization-rewriting} by presenting extensive
experimental evaluations.
The approach is implemented in the tool \system{pbtranslate}%
\footnote{%
  Available at \url{https://github.com/jbomanson/pbtranslate} together
  with benchmarks at \url{https://research.ics.aalto.fi/software/asp/bench/}~.%
}
in the form of translations between answer set programs encoded in
\emph{ASP Intermediate Format} (aspif; Gebser et al.~2016b).
To evaluate the novel techniques presented in this paper, we composed solving
pipelines that preprocess with rewriting techniques and then search with the
state-of-the-art ASP solver \system{clasp} (v.~3.3.3) \cite{gekakarosc15a}.
These pipelines are contrasted with reference pipelines that
involve no rewriting.
The purpose of the experiments is to measure the general efficiency of the
approach as well as the impact of the types of used comparator networks and
weight propagation strategies.
A scheme of sorting networks $N$ of depth $O(\log^2 n)$ and size $O(n \log^2
n)$ is taken as a basis for the comparator networks in light of the formal
support for sorting networks established in Section~\ref{section:formal}.
The networks are constructed recursively from small fixed-size sorting networks
and Batcher's odd-even merge sorters \cite{Batcher68:afips}.
These networks are varied by creating copies $L_d$ limited to depths $d$.
This is motivated by the following factors.
For one, even the modest-appearing size growth $O(n \log^2 n)$ is large enough
to be problematic on many optimization statements considered in these
experiments.
Indeed, as discussed further later, even 80-fold instance-size blowup factors
are seen.
Second, even the depth-limited networks sort many input sequences and bring
other sequences closer to sorted states.
Hence, it is reasonable to expect depth-limited networks to retain a part of the
benefits of sorting networks while providing easily manageable rewriting sizes.

The pipelines generally operate as described below unless otherwise noted.
First, the pipelines perform optimization rewriting using sorting networks $N$.
The rewriting techniques rely on weight propagation based on decompositions
$D_1(N)$ that lead to maximally fine grained weight propagation.
Finally, \system{clasp} is ran with the branch and bound optimization strategy.
Based on these processing steps,
we formed individual solving pipelines of which
reference pipeline \pipeline{clasp} skips rewriting;
reference pipeline \pipeline{usc} skips rewriting and uses the
core-guided optimization strategy with disjoint core preprocessing;
pipeline \pipeline{F} includes rewriting;
pipelines \pipeline{$L_d$} rewrite using networks limited to a depth of $d$; and
pipelines \pipeline{$W_k$} rewrite using sparse decompositions $D_k(N)$;
pipeline \pipeline{$W_{\text{--}}$} rewrites using sorting networks without
weight propagation.
That is, pipeline \pipeline{$W_{\text{--}}$} extends the original program with
an encoding of a sorting network without altering the optimization statement.
For clarity, here and in the sequel, different fonts are used to distinguish
the system \system{clasp} and the pipeline \pipeline{clasp}.
The sparseness factor $k$ in the decomposition $D_k(N)$
controls the rough fraction $1 / k$ of nonzero weights produced by the weight
propagation function $\propfun{D_k(N)}$.
In these pipeline labels, an infinite subscript $\infty$
stands for a very large number which causes
pipeline \pipeline{$L_{\infty}$} to essentially apply no depth limit
and
pipeline \pipeline{$W_{\infty}$} to place weights on only the input layer
and the last layer.

For benchmarks, we picked a number of instance sets, each involving non-unit
weights.
Table~\ref{table:letter} includes the results for Bayesian Network Learning
\cite{Cussens11:uai,JSGM10:jmlr} with samples from three data sets.
In abstract terms, the task here is to construct an acyclic graph from certain
building blocks specified in the instance, and to optimize a sum of scores
associated with them.
Also included is Markov Network Learning \cite{jagerinypeco15a}, where the
purpose is to construct a \emph{chordal graph} under certain conditions while
again optimizing a sum of scores.
Moreover, in MaxSAT from the Sixth ASP Competition \cite{gemari15a}, the
Maximum Satisfiability problem is encoded in ASP and solved for a set of
industrial instances from the 2014 MaxSAT Evaluation \cite{maxsat_comp}.
Then there is Curriculum-based Course Timetabling
\cite{basotainsc13a,bocegasc12a}, where the goal is to assign resources in the
form of time slots and rooms to lectures while satisfying and minimizing
additional criteria.
Furthermore, Table~\ref{table:letter-continued} includes Fastfood and Traveling
Salesperson~(TSP) from the Second ASP Competition \cite{contest09a} with newly
generated instance sets that are harder and easier than in the competition,
respectively.
In Fastfood, the task is to essentially pick a subset of nodes from a one
dimensional line in order to minimize the sum of distances from each node to
the closest node in the subset.
In the well-known TSP problem, the task is to pick a subset of edges that
form a path and minimize the sum of weights associated with the chosen edges.

These benchmarks contain only optimization statements with truly heterogeneous
weights, as opposed to optimization statements with only one or a few distinct
weights.
The reason for focusing on heterogeneous weights is that,
as discussed in Section~\ref{section:challenges},
the case with non-uniform weights is particularly challenging.
Moreover, in the complementary case with few distinct weights,
weight propagation is straightforward
in the way that most if not all input weights are
simply moved directly onto output wires.
Weight propagation that moves all weights like this can be very effective.
Namely, with appropriate choices of low depth odd-even sorting networks,
this special case of weight propagation coincides to a large degree with
an optimization rewriting technique introduced
and experimentally evaluated
by Bomanson et al.~\citeyear{BGJ16:iclp} under the label ``64''.
The number 64 refers to a value for a parameter used therein.
The closest counterpart to this parameter
in the terminology and parameterization of this paper would
be the use of roughly depth-21 sorting networks
and maximally coarse grained weight propagation.
The relevant results therein are already strongly positive.
In view that, the challenge posed by uniform weights has been addressed to a
larger extent than the case of non-uniform weights, which therefore remains as
a further, greater challenge that is focused on here.

\newcommand{\lettertablea}{%
  \begin{oldtabular}{@{}r*{9}{@{\hspace{1.00pt}}c}@{}}
  \multicolumn{10}{@{}l}{\letterbenchmarkname{Bayes Alarm}}                                                                                                                                                                                                                                                                                                                                                                                                                                                                                                                                                                             \\
                                                               & \letternum{4}                & \letternum{5}           & \letternum{1}           & \letternum{1}           & \letternum{1}           & \letternum{2}           & \letternum{23}          & \lettercons & \lettermancoosi                                                                                                                                                                                                                                                                                                                                               \\
  \cmidrule{2-10}
  \lettersystem{$\text{clasp\hspace{1.00pt}usc~1}_{\zzz\zzz}$} & \lettercell{O}               & \lettercell{S}          & \lettercell{S}          & \lettercell{S}          & \lettercell{S}          & \lettercell{S}          & \lettercell{S}          & \letterreal{4.4}                                                                                                                                                                                                                                                                                                                       &   22.4                             \\
  \lettersystem{$\text{clasp}/L_{0\zzz}$}                      & \lettercell{O}               & \lettercell{O}          & \lettercell{O}          & \lettercell{S}          & \lettercell{O}          & \lettercell{S}          & \lettercell{S}          & \letterreal{4.4}                                                                                                                                                                                                                                                                                                                       &   \textbf{87.5}                    \\
  \lettersystem{$L_{4\zzz}$}                                   & \lettercell{O}               & \lettercell{O}          & \lettercell{O}          & \lettercell{O}          & \lettercell{S}          & \lettercell{S}          & \lettercell{S}          & \letterreal{4.8}                                                                                                                                                                                                                                                                                                                       &   77.2                             \\
  \lettersystem{$L_{8\zzz}$}                                   & \lettercell{\textbf{O}}      & \lettercell{\textbf{O}} & \lettercell{\textbf{O}} & \lettercell{\textbf{O}} & \lettercell{\textbf{O}} & \lettercell{\textbf{O}} & \lettercell{S}          & \letterreal{4.9}                                                                                                                                                                                                                                                                                                                       &   80.1                             \\
  \lettersystem{$L_{16}$}                                      & \lettercell{\textbf{O}}      & \lettercell{\textbf{O}} & \lettercell{\textbf{O}} & \lettercell{\textbf{O}} & \lettercell{\textbf{O}} & \lettercell{\textbf{O}} & \lettercell{S}          & \letterreal{5.1}                                                                                                                                                                                                                                                                                                                       &   81.7                             \\
  \lettersystem{$L_{32}$}                                      & \lettercell{\textbf{O}}      & \lettercell{\textbf{O}} & \lettercell{\textbf{O}} & \lettercell{\textbf{O}} & \lettercell{\textbf{O}} & \lettercell{\textbf{O}} & \lettercell{S}          & \letterreal{5.4}                                                                                                                                                                                                                                                                                                                       &   84.7                             \\
  \lettersystem{$\text{F}/L_{\infty}/W_{1\zzz}$}               & \lettercell{\textbf{O}}      & \lettercell{\textbf{O}} & \lettercell{\textbf{O}} & \lettercell{\textbf{O}} & \lettercell{\textbf{O}} & \lettercell{\textbf{O}} & \lettercell{S}          & \letterreal{5.8}                                                                                                                                                                                                                                                                                                                       &   66.4                             \\
  \lettersystem{$W_{4\zzz}$}                                   & \lettercell{\textbf{O}}      & \lettercell{\textbf{O}} & \lettercell{\textbf{O}} & \lettercell{\textbf{O}} & \lettercell{\textbf{O}} & \lettercell{\textbf{O}} & \lettercell{S}          & \letterreal{5.8}                                                                                                                                                                                                                                                                                                                       &   56.6                             \\
  \lettersystem{$W_{8\zzz}$}                                   & \lettercell{\textbf{O}}      & \lettercell{\textbf{O}} & \lettercell{\textbf{O}} & \lettercell{\textbf{O}} & \lettercell{\textbf{O}} & \lettercell{\textbf{O}} & \lettercell{S}          & \letterreal{5.8}                                                                                                                                                                                                                                                                                                                       &   64.1                             \\
  \lettersystem{$W_{16}$}                                      & \lettercell{O}               & \lettercell{O}          & \lettercell{O}          & \lettercell{O}          & \lettercell{S}          & \lettercell{O}          & \lettercell{S}          & \letterreal{5.8}                                                                                                                                                                                                                                                                                                                       &   59.7                             \\
  \lettersystem{$W_{32}$}                                      & \lettercell{O}               & \lettercell{O}          & \lettercell{S}          & \lettercell{S}          & \lettercell{S}          & \lettercell{S}          & \lettercell{S}          & \letterreal{5.8}                                                                                                                                                                                                                                                                                                                       &   44.2                             \\
  \lettersystem{$W_{\infty}$}                                  & \lettercell{O}               & \lettercell{O}          & \lettercell{S}          & \lettercell{S}          & \lettercell{S}          & \lettercell{S}          & \lettercell{S}          & \letterreal{5.8}                                                                                                                                                                                                                                                                                                                       &   44.8                             \\
  \lettersystem{$W_{\text{--}\zzz}$}                           & \lettercell{O}               & \lettercell{O}          & \lettercell{S}          & \lettercell{S}          & \lettercell{S}          & \lettercell{S}          & \lettercell{S}          & \letterreal{5.8}                                                                                                                                                                                                                                                                                                                       &   59.8                             \\
  \end{oldtabular}%
}
\newcommand{\lettertableb}{%
  \begin{oldtabular}{@{}r*{20}{@{\hspace{1.00pt}}c}@{}}
  \multicolumn{21}{@{}l}{\letterbenchmarkname{Bayes Hailfinder}}                                                                                                                                                                                                                                                                                                                                                                                                                                                                                                                                                                        \\
                                                               & \letternum{4}                & \letternum{12}          & \letternum{1}           & \letternum{2}           & \letternum{1}           & \letternum{1}           & \letternum{3}           & \letternum{1}           & \letternum{1}           & \letternum{1}           & \letternum{2}           & \letternum{2}           & \letternum{2}           & \letternum{1}           & \letternum{2}           & \letternum{1}           & \letternum{1}           & \letternum{18}        & \lettercons & \lettermancoosi                                                   \\
  \cmidrule{2-21}
  \lettersystem{$\text{clasp\hspace{1.00pt}usc~1}_{\zzz\zzz}$} & \lettercell{O}               & \lettercell{S}          & \lettercell{S}          & \lettercell{S}          & \lettercell{S}          & \lettercell{S}          & \lettercell{S}          & \lettercell{S}          & \lettercell{S}          & \lettercell{S}          & \lettercell{S}          & \lettercell{S}          & \lettercell{S}          & \lettercell{S}          & \lettercell{S}          & \lettercell{S}          & \lettercell{S}          & \lettercell{S}        & \letterreal{4.1}                          &   21.6                              \\
  \lettersystem{$\text{clasp}/L_{0\zzz}$}                      & \lettercell{O}               & \lettercell{O}          & \lettercell{O}          & \lettercell{O}          & \lettercell{O}          & \lettercell{O}          & \lettercell{O}          & \lettercell{O}          & \lettercell{O}          & \lettercell{O}          & \lettercell{O}          & \lettercell{O}          & \lettercell{O}          & \lettercell{O}          & \lettercell{O}          & \lettercell{S}          & \lettercell{S}          & \lettercell{S}        & \letterreal{4.1}                          &   93.0                              \\
  \lettersystem{$L_{4\zzz}$}                                   & \lettercell{O}               & \lettercell{O}          & \lettercell{O}          & \lettercell{O}          & \lettercell{O}          & \lettercell{O}          & \lettercell{O}          & \lettercell{O}          & \lettercell{O}          & \lettercell{O}          & \lettercell{O}          & \lettercell{O}          & \lettercell{O}          & \lettercell{O}          & \lettercell{O}          & \lettercell{O}          & \lettercell{S}          & \lettercell{S}        & \letterreal{4.6}                          &   \textbf{94.6}                     \\
  \lettersystem{$L_{8\zzz}$}                                   & \lettercell{O}               & \lettercell{O}          & \lettercell{O}          & \lettercell{O}          & \lettercell{O}          & \lettercell{O}          & \lettercell{O}          & \lettercell{O}          & \lettercell{O}          & \lettercell{O}          & \lettercell{O}          & \lettercell{O}          & \lettercell{O}          & \lettercell{O}          & \lettercell{O}          & \lettercell{O}          & \lettercell{S}          & \lettercell{S}        & \letterreal{4.8}                          &   91.8                              \\
  \lettersystem{$L_{16}$}                                      & \lettercell{O}               & \lettercell{O}          & \lettercell{O}          & \lettercell{O}          & \lettercell{O}          & \lettercell{O}          & \lettercell{O}          & \lettercell{O}          & \lettercell{O}          & \lettercell{O}          & \lettercell{O}          & \lettercell{O}          & \lettercell{O}          & \lettercell{O}          & \lettercell{O}          & \lettercell{O}          & \lettercell{S}          & \lettercell{S}        & \letterreal{5.0}                          &   89.7                              \\
  \lettersystem{$L_{32}$}                                      & \lettercell{\textbf{O}}      & \lettercell{\textbf{O}} & \lettercell{\textbf{O}} & \lettercell{\textbf{O}} & \lettercell{\textbf{O}} & \lettercell{\textbf{O}} & \lettercell{\textbf{O}} & \lettercell{\textbf{O}} & \lettercell{\textbf{O}} & \lettercell{\textbf{O}} & \lettercell{\textbf{O}} & \lettercell{\textbf{O}} & \lettercell{\textbf{O}} & \lettercell{\textbf{O}} & \lettercell{\textbf{O}} & \lettercell{\textbf{O}} & \lettercell{\textbf{O}} & \lettercell{S}        & \letterreal{5.2}                          &   92.2                              \\
  \lettersystem{$\text{F}/L_{\infty}/W_{1\zzz}$}               & \lettercell{O}               & \lettercell{O}          & \lettercell{O}          & \lettercell{O}          & \lettercell{O}          & \lettercell{O}          & \lettercell{O}          & \lettercell{O}          & \lettercell{O}          & \lettercell{O}          & \lettercell{O}          & \lettercell{O}          & \lettercell{S}          & \lettercell{S}          & \lettercell{S}          & \lettercell{S}          & \lettercell{S}          & \lettercell{S}        & \letterreal{5.6}                          &   68.5                              \\
  \lettersystem{$W_{4\zzz}$}                                   & \lettercell{O}               & \lettercell{O}          & \lettercell{O}          & \lettercell{O}          & \lettercell{O}          & \lettercell{S}          & \lettercell{O}          & \lettercell{O}          & \lettercell{O}          & \lettercell{O}          & \lettercell{O}          & \lettercell{S}          & \lettercell{S}          & \lettercell{O}          & \lettercell{S}          & \lettercell{S}          & \lettercell{S}          & \lettercell{S}        & \letterreal{5.6}                          &   65.6                              \\
  \lettersystem{$W_{8\zzz}$}                                   & \lettercell{O}               & \lettercell{O}          & \lettercell{O}          & \lettercell{O}          & \lettercell{O}          & \lettercell{O}          & \lettercell{O}          & \lettercell{O}          & \lettercell{S}          & \lettercell{O}          & \lettercell{S}          & \lettercell{S}          & \lettercell{S}          & \lettercell{S}          & \lettercell{S}          & \lettercell{S}          & \lettercell{S}          & \lettercell{S}        & \letterreal{5.6}                          &   60.6                              \\
  \lettersystem{$W_{16}$}                                      & \lettercell{O}               & \lettercell{O}          & \lettercell{O}          & \lettercell{O}          & \lettercell{S}          & \lettercell{O}          & \lettercell{O}          & \lettercell{S}          & \lettercell{O}          & \lettercell{S}          & \lettercell{S}          & \lettercell{S}          & \lettercell{S}          & \lettercell{S}          & \lettercell{S}          & \lettercell{S}          & \lettercell{S}          & \lettercell{S}        & \letterreal{5.6}                          &   57.5                              \\
  \lettersystem{$W_{32}$}                                      & \lettercell{O}               & \lettercell{O}          & \lettercell{O}          & \lettercell{S}          & \lettercell{O}          & \lettercell{S}          & \lettercell{S}          & \lettercell{S}          & \lettercell{O}          & \lettercell{S}          & \lettercell{S}          & \lettercell{S}          & \lettercell{S}          & \lettercell{S}          & \lettercell{S}          & \lettercell{S}          & \lettercell{S}          & \lettercell{S}        & \letterreal{5.6}                          &   47.6                              \\
  \lettersystem{$W_{\infty}$}                                  & \lettercell{O}               & \lettercell{O}          & \lettercell{S}          & \lettercell{O}          & \lettercell{O}          & \lettercell{O}          & \lettercell{S}          & \lettercell{O}          & \lettercell{S}          & \lettercell{S}          & \lettercell{S}          & \lettercell{S}          & \lettercell{S}          & \lettercell{S}          & \lettercell{S}          & \lettercell{S}          & \lettercell{S}          & \lettercell{S}        & \letterreal{5.6}                          &   60.6                              \\
  \lettersystem{$W_{\text{--}\zzz}$}                           & \lettercell{O}               & \lettercell{O}          & \lettercell{O}          & \lettercell{O}          & \lettercell{S}          & \lettercell{O}          & \lettercell{O}          & \lettercell{O}          & \lettercell{O}          & \lettercell{S}          & \lettercell{O}          & \lettercell{O}          & \lettercell{O}          & \lettercell{S}          & \lettercell{S}          & \lettercell{S}          & \lettercell{S}          & \lettercell{S}        & \letterreal{5.6}                          &   81.6                              \\
  \end{oldtabular}%
}
\newcommand{\lettertablec}{%
  \begin{oldtabular}{@{}r*{11}{@{\hspace{1.00pt}}c}@{}}
  \multicolumn{12}{@{}l}{\letterbenchmarkname{Bayes Water}}                                                                                                                                                                                                                                                                                                                                                                                                                                                                                                                                                                             \\
                                                               & \letternum{4}                & \letternum{12}          & \letternum{1}           & \letternum{2}           & \letternum{1}           & \letternum{2}           & \letternum{1}           & \letternum{1}           & \letternum{8}           & \lettercons & \lettermancoosi                                                                                                                                                                                                                                                                                           \\
  \cmidrule{2-12}
  \lettersystem{$\text{clasp\hspace{1.00pt}usc~1}_{\zzz\zzz}$} & \lettercell{O}               & \lettercell{S}          & \lettercell{S}          & \lettercell{S}          & \lettercell{S}          & \lettercell{S}          & \lettercell{S}          & \lettercell{S}          & \lettercell{S}          & \letterreal{2.9}                                                                                                                                                                                                                                                                  &   19.6                              \\
  \lettersystem{$\text{clasp}/L_{0\zzz}$}                      & \lettercell{O}               & \lettercell{O}          & \lettercell{S}          & \lettercell{S}          & \lettercell{S}          & \lettercell{S}          & \lettercell{S}          & \lettercell{S}          & \lettercell{S}          & \letterreal{2.9}                                                                                                                                                                                                                                                                  &   63.8                              \\
  \lettersystem{$L_{4\zzz}$}                                   & \lettercell{O}               & \lettercell{O}          & \lettercell{O}          & \lettercell{O}          & \lettercell{O}          & \lettercell{O}          & \lettercell{S}          & \lettercell{S}          & \lettercell{S}          & \letterreal{3.9}                                                                                                                                                                                                                                                                  &   83.9                              \\
  \lettersystem{$L_{8\zzz}$}                                   & \lettercell{O}               & \lettercell{O}          & \lettercell{O}          & \lettercell{O}          & \lettercell{O}          & \lettercell{O}          & \lettercell{O}          & \lettercell{S}          & \lettercell{S}          & \letterreal{4.1}                                                                                                                                                                                                                                                                  &   90.8                              \\
  \lettersystem{$L_{16}$}                                      & \lettercell{O}               & \lettercell{O}          & \lettercell{O}          & \lettercell{O}          & \lettercell{O}          & \lettercell{O}          & \lettercell{O}          & \lettercell{S}          & \lettercell{S}          & \letterreal{4.3}                                                                                                                                                                                                                                                                  &   93.1                              \\
  \lettersystem{$L_{32}$}                                      & \lettercell{\textbf{O}}      & \lettercell{\textbf{O}} & \lettercell{\textbf{O}} & \lettercell{\textbf{O}} & \lettercell{\textbf{O}} & \lettercell{\textbf{O}} & \lettercell{\textbf{O}} & \lettercell{\textbf{O}} & \lettercell{S}          & \letterreal{4.6}                                                                                                                                                                                                                                                                  &   \textbf{95.3}                     \\
  \lettersystem{$\text{F}/L_{\infty}/W_{1\zzz}$}               & \lettercell{O}               & \lettercell{O}          & \lettercell{O}          & \lettercell{O}          & \lettercell{O}          & \lettercell{O}          & \lettercell{O}          & \lettercell{S}          & \lettercell{S}          & \letterreal{4.8}                                                                                                                                                                                                                                                                  &   91.3                              \\
  \lettersystem{$W_{4\zzz}$}                                   & \lettercell{O}               & \lettercell{O}          & \lettercell{O}          & \lettercell{O}          & \lettercell{O}          & \lettercell{O}          & \lettercell{S}          & \lettercell{S}          & \lettercell{S}          & \letterreal{4.8}                                                                                                                                                                                                                                                                  &   87.9                              \\
  \lettersystem{$W_{8\zzz}$}                                   & \lettercell{O}               & \lettercell{O}          & \lettercell{O}          & \lettercell{O}          & \lettercell{O}          & \lettercell{O}          & \lettercell{O}          & \lettercell{S}          & \lettercell{S}          & \letterreal{4.8}                                                                                                                                                                                                                                                                  &   85.3                              \\
  \lettersystem{$W_{16}$}                                      & \lettercell{O}               & \lettercell{O}          & \lettercell{O}          & \lettercell{O}          & \lettercell{O}          & \lettercell{S}          & \lettercell{S}          & \lettercell{S}          & \lettercell{S}          & \letterreal{4.8}                                                                                                                                                                                                                                                                  &   75.7                              \\
  \lettersystem{$W_{32}$}                                      & \lettercell{O}               & \lettercell{O}          & \lettercell{O}          & \lettercell{O}          & \lettercell{S}          & \lettercell{S}          & \lettercell{S}          & \lettercell{S}          & \lettercell{S}          & \letterreal{4.8}                                                                                                                                                                                                                                                                  &   73.7                              \\
  \lettersystem{$W_{\infty}$}                                  & \lettercell{O}               & \lettercell{O}          & \lettercell{O}          & \lettercell{S}          & \lettercell{S}          & \lettercell{S}          & \lettercell{S}          & \lettercell{S}          & \lettercell{S}          & \letterreal{4.8}                                                                                                                                                                                                                                                                  &   68.1                              \\
  \lettersystem{$W_{\text{--}\zzz}$}                           & \lettercell{O}               & \lettercell{O}          & \lettercell{O}          & \lettercell{O}          & \lettercell{S}          & \lettercell{S}          & \lettercell{S}          & \lettercell{S}          & \lettercell{S}          & \letterreal{4.8}                                                                                                                                                                                                                                                                  &   75.0                              \\
  \end{oldtabular}%
}
\newcommand{\lettertabled}{%
  \begin{oldtabular}{@{}r*{10}{@{\hspace{1.00pt}}c}@{}}
  \multicolumn{11}{@{}l}{\letterbenchmarkname{Fastfood}}                                                                                                                                                                                                                                                                                                                                                                                                                                                                                                                                                                                \\
                                                               & \letternum{12}               & \letternum{1}           & \letternum{1}           & \letternum{4}           & \letternum{1}           & \letternum{3}           & \letternum{1}           & \letternum{7}           & \lettercons & \lettermancoosi                                                                                                                                                                                                                                                                                                                     \\
  \cmidrule{2-11}
  \lettersystem{$\text{clasp\hspace{1.00pt}usc~1}_{\zzz\zzz}$} & \lettercell{S}               & \lettercell{S}          & \lettercell{S}          & \lettercell{S}          & \lettercell{S}          & \lettercell{S}          & \lettercell{S}          & \lettercell{S}          & \letterreal{4.7}                                                                                                                                                                                                                                                                                            &  7.14                               \\
  \lettersystem{$\text{clasp}/L_{0\zzz}$}                      & \lettercell{O}               & \lettercell{O}          & \lettercell{O}          & \lettercell{S}          & \lettercell{S}          & \lettercell{S}          & \lettercell{S}          & \lettercell{S}          & \letterreal{4.7}                                                                                                                                                                                                                                                                                            &  76.0                               \\
  \lettersystem{$L_{4\zzz}$}                                   & \lettercell{O}               & \lettercell{O}          & \lettercell{O}          & \lettercell{S}          & \lettercell{S}          & \lettercell{S}          & \lettercell{S}          & \lettercell{S}          & \letterreal{5.0}                                                                                                                                                                                                                                                                                            &  71.0                               \\
  \lettersystem{$L_{8\zzz}$}                                   & \lettercell{O}               & \lettercell{O}          & \lettercell{O}          & \lettercell{O}          & \lettercell{S}          & \lettercell{S}          & \lettercell{S}          & \lettercell{S}          & \letterreal{5.1}                                                                                                                                                                                                                                                                                            &  82.4                               \\
  \lettersystem{$L_{16}$}                                      & \lettercell{\textbf{O}}      & \lettercell{\textbf{O}} & \lettercell{\textbf{O}} & \lettercell{\textbf{O}} & \lettercell{\textbf{O}} & \lettercell{\textbf{O}} & \lettercell{\textbf{O}} & \lettercell{S}          & \letterreal{5.2}                                                                                                                                                                                                                                                                                            &  95.2                               \\
  \lettersystem{$L_{32}$}                                      & \lettercell{\textbf{O}}      & \lettercell{\textbf{O}} & \lettercell{\textbf{O}} & \lettercell{\textbf{O}} & \lettercell{\textbf{O}} & \lettercell{\textbf{O}} & \lettercell{\textbf{O}} & \lettercell{S}          & \letterreal{5.5}                                                                                                                                                                                                                                                                                            &  96.0                               \\
  \lettersystem{$\text{F}/L_{\infty}/W_{1\zzz}$}               & \lettercell{O}               & \lettercell{O}          & \lettercell{O}          & \lettercell{O}          & \lettercell{O}          & \lettercell{O}          & \lettercell{S}          & \lettercell{S}          & \letterreal{5.8}                                                                                                                                                                                                                                                                                            &  \textbf{96.7}                      \\
  \lettersystem{$W_{4\zzz}$}                                   & \lettercell{O}               & \lettercell{O}          & \lettercell{O}          & \lettercell{O}          & \lettercell{O}          & \lettercell{S}          & \lettercell{S}          & \lettercell{S}          & \letterreal{5.8}                                                                                                                                                                                                                                                                                            &  91.2                               \\
  \lettersystem{$W_{8\zzz}$}                                   & \lettercell{O}               & \lettercell{O}          & \lettercell{O}          & \lettercell{O}          & \lettercell{S}          & \lettercell{S}          & \lettercell{S}          & \lettercell{S}          & \letterreal{5.8}                                                                                                                                                                                                                                                                                            &  85.5                               \\
  \lettersystem{$W_{16}$}                                      & \lettercell{O}               & \lettercell{O}          & \lettercell{S}          & \lettercell{S}          & \lettercell{S}          & \lettercell{S}          & \lettercell{S}          & \lettercell{S}          & \letterreal{5.8}                                                                                                                                                                                                                                                                                            &  62.6                               \\
  \lettersystem{$W_{32}$}                                      & \lettercell{O}               & \lettercell{S}          & \lettercell{S}          & \lettercell{S}          & \lettercell{S}          & \lettercell{S}          & \lettercell{S}          & \lettercell{S}          & \letterreal{5.8}                                                                                                                                                                                                                                                                                            &  51.7                               \\
  \lettersystem{$W_{\infty}$}                                  & \lettercell{O}               & \lettercell{S}          & \lettercell{S}          & \lettercell{S}          & \lettercell{S}          & \lettercell{S}          & \lettercell{S}          & \lettercell{S}          & \letterreal{5.8}                                                                                                                                                                                                                                                                                            &  52.1                               \\
  \lettersystem{$W_{\text{--}\zzz}$}                           & \lettercell{O}               & \lettercell{O}          & \lettercell{S}          & \lettercell{S}          & \lettercell{S}          & \lettercell{S}          & \lettercell{S}          & \lettercell{S}          & \letterreal{5.8}                                                                                                                                                                                                                                                                                            &  68.1                               \\
  \end{oldtabular}%
}
\newcommand{\lettertablee}{%
  \begin{oldtabular}{@{}r*{9}{@{\hspace{1.00pt}}c}@{}}
  \multicolumn{10}{@{}l}{\letterbenchmarkname{Markov Network}}                                                                                                                                                                                                                                                                                                                                                                                                                                                                                                                                                                          \\
                                                               & \letternum{13}               & \letternum{2}           & \letternum{1}           & \letternum{1}           & \letternum{2}           & \letternum{51}          & \letternum{2}           & \lettercons & \lettermancoosi                                                                                                                                                                                                                                                                                                                                               \\
  \cmidrule{2-10}
  \lettersystem{$\text{clasp\hspace{1.00pt}usc~1}_{\zzz\zzz}$} & \lettercell{S}               & \lettercell{S}          & \lettercell{S}          & \lettercell{S}          & \lettercell{S}          & \lettercell{S}          & \lettercell{T}          & \letterreal{4.4}                                                                                                                                                                                                                                                                                                                      &   6.94                              \\
  \lettersystem{$\text{clasp}/L_{0\zzz}$}                      & \lettercell{O}               & \lettercell{O}          & \lettercell{S}          & \lettercell{S}          & \lettercell{S}          & \lettercell{S}          & \lettercell{S}          & \letterreal{4.4}                                                                                                                                                                                                                                                                                                                      &   \textbf{81.9}                     \\
  \lettersystem{$L_{4\zzz}$}                                   & \lettercell{\textbf{O}}      & \lettercell{\textbf{O}} & \lettercell{\textbf{O}} & \lettercell{\textbf{O}} & \lettercell{\textbf{O}} & \lettercell{S}          & \lettercell{S}          & \letterreal{4.7}                                                                                                                                                                                                                                                                                                                      &   70.5                              \\
  \lettersystem{$L_{8\zzz}$}                                   & \lettercell{\textbf{O}}      & \lettercell{\textbf{O}} & \lettercell{\textbf{O}} & \lettercell{\textbf{O}} & \lettercell{\textbf{O}} & \lettercell{S}          & \lettercell{S}          & \letterreal{4.8}                                                                                                                                                                                                                                                                                                                      &   70.9                              \\
  \lettersystem{$L_{16}$}                                      & \lettercell{\textbf{O}}      & \lettercell{\textbf{O}} & \lettercell{\textbf{O}} & \lettercell{\textbf{O}} & \lettercell{\textbf{O}} & \lettercell{S}          & \lettercell{S}          & \letterreal{4.9}                                                                                                                                                                                                                                                                                                                      &   70.3                              \\
  \lettersystem{$L_{32}$}                                      & \lettercell{O}               & \lettercell{O}          & \lettercell{O}          & \lettercell{O}          & \lettercell{S}          & \lettercell{S}          & \lettercell{S}          & \letterreal{5.1}                                                                                                                                                                                                                                                                                                                      &   68.3                              \\
  \lettersystem{$\text{F}/L_{\infty}/W_{1\zzz}$}               & \lettercell{O}               & \lettercell{O}          & \lettercell{O}          & \lettercell{O}          & \lettercell{S}          & \lettercell{S}          & \lettercell{S}          & \letterreal{5.5}                                                                                                                                                                                                                                                                                                                      &   65.1                              \\
  \lettersystem{$W_{4\zzz}$}                                   & \lettercell{O}               & \lettercell{O}          & \lettercell{O}          & \lettercell{O}          & \lettercell{S}          & \lettercell{S}          & \lettercell{S}          & \letterreal{5.5}                                                                                                                                                                                                                                                                                                                      &   68.0                              \\
  \lettersystem{$W_{8\zzz}$}                                   & \lettercell{O}               & \lettercell{O}          & \lettercell{O}          & \lettercell{O}          & \lettercell{S}          & \lettercell{S}          & \lettercell{S}          & \letterreal{5.5}                                                                                                                                                                                                                                                                                                                      &   67.5                              \\
  \lettersystem{$W_{16}$}                                      & \lettercell{O}               & \lettercell{O}          & \lettercell{O}          & \lettercell{O}          & \lettercell{S}          & \lettercell{S}          & \lettercell{S}          & \letterreal{5.5}                                                                                                                                                                                                                                                                                                                      &   63.8                              \\
  \lettersystem{$W_{32}$}                                      & \lettercell{O}               & \lettercell{O}          & \lettercell{O}          & \lettercell{S}          & \lettercell{S}          & \lettercell{S}          & \lettercell{S}          & \letterreal{5.5}                                                                                                                                                                                                                                                                                                                      &   66.8                              \\
  \lettersystem{$W_{\infty}$}                                  & \lettercell{O}               & \lettercell{S}          & \lettercell{S}          & \lettercell{S}          & \lettercell{S}          & \lettercell{S}          & \lettercell{S}          & \letterreal{5.5}                                                                                                                                                                                                                                                                                                                      &   67.4                              \\
  \lettersystem{$W_{\text{--}\zzz}$}                           & \lettercell{O}               & \lettercell{S}          & \lettercell{S}          & \lettercell{S}          & \lettercell{S}          & \lettercell{S}          & \lettercell{S}          & \letterreal{5.5}                                                                                                                                                                                                                                                                                                                      &   65.3                              \\
  \end{oldtabular}%
}
\newcommand{\lettertablef}{%
  \begin{oldtabular}{@{}r*{12}{@{\hspace{1.00pt}}c}@{}}
  \multicolumn{13}{@{}l}{\letterbenchmarkname{MaxSAT}}                                                                                                                                                                                                                                                                                                                                                                                                                                                                                                                                                                                  \\
                                                               & \letternum{3}                & \letternum{1}           & \letternum{2}           & \letternum{1}           & \letternum{1}           & \letternum{1}           & \letternum{1}           & \letternum{5}           & \letternum{1}           & \letternum{4}           & \lettercons & \lettermancoosi                                                                                                                                                                                                                                                                 \\
  \cmidrule{2-13}
  \lettersystem{$\text{clasp\hspace{1.00pt}usc~1}_{\zzz\zzz}$} & \lettercell{\textbf{O}}      & \lettercell{S}          & \lettercell{\textbf{O}} & \lettercell{\textbf{O}} & \lettercell{\textbf{O}} & \lettercell{\textbf{O}} & \lettercell{\textbf{O}} & \lettercell{\textbf{O}} & \lettercell{\textbf{O}} & \lettercell{S}          & \letterreal{5.2}                                                                                                                                                                                                                                        &   \textbf{82.1}                     \\
  \lettersystem{$\text{clasp}/L_{0\zzz}$}                      & \lettercell{O}               & \lettercell{O}          & \lettercell{S}          & \lettercell{S}          & \lettercell{O}          & \lettercell{O}          & \lettercell{O}          & \lettercell{S}          & \lettercell{S}          & \lettercell{S}          & \letterreal{5.2}                                                                                                                                                                                                                                        &   73.2                              \\
  \lettersystem{$L_{4\zzz}$}                                   & \lettercell{O}               & \lettercell{O}          & \lettercell{O}          & \lettercell{S}          & \lettercell{O}          & \lettercell{O}          & \lettercell{O}          & \lettercell{S}          & \lettercell{S}          & \lettercell{S}          & \letterreal{5.3}                                                                                                                                                                                                                                        &   76.8                              \\
  \lettersystem{$L_{8\zzz}$}                                   & \lettercell{O}               & \lettercell{O}          & \lettercell{O}          & \lettercell{O}          & \lettercell{O}          & \lettercell{O}          & \lettercell{O}          & \lettercell{S}          & \lettercell{S}          & \lettercell{S}          & \letterreal{5.3}                                                                                                                                                                                                                                        &   76.4                              \\
  \lettersystem{$L_{16}$}                                      & \lettercell{O}               & \lettercell{O}          & \lettercell{O}          & \lettercell{O}          & \lettercell{O}          & \lettercell{O}          & \lettercell{O}          & \lettercell{S}          & \lettercell{S}          & \lettercell{S}          & \letterreal{5.4}                                                                                                                                                                                                                                        &   75.4                              \\
  \lettersystem{$L_{32}$}                                      & \lettercell{O}               & \lettercell{O}          & \lettercell{O}          & \lettercell{O}          & \lettercell{O}          & \lettercell{S}          & \lettercell{S}          & \lettercell{S}          & \lettercell{S}          & \lettercell{S}          & \letterreal{5.6}                                                                                                                                                                                                                                        &   59.6                              \\
  \lettersystem{$\text{F}/L_{\infty}/W_{1\zzz}$}               & \lettercell{O}               & \lettercell{O}          & \lettercell{O}          & \lettercell{O}          & \lettercell{S}          & \lettercell{S}          & \lettercell{S}          & \lettercell{S}          & \lettercell{S}          & \lettercell{S}          & \letterreal{5.9}                                                                                                                                                                                                                                        &   66.8                              \\
  \lettersystem{$W_{4\zzz}$}                                   & \lettercell{O}               & \lettercell{O}          & \lettercell{O}          & \lettercell{O}          & \lettercell{S}          & \lettercell{S}          & \lettercell{S}          & \lettercell{S}          & \lettercell{S}          & \lettercell{S}          & \letterreal{5.9}                                                                                                                                                                                                                                        &   68.2                              \\
  \lettersystem{$W_{8\zzz}$}                                   & \lettercell{O}               & \lettercell{O}          & \lettercell{O}          & \lettercell{O}          & \lettercell{S}          & \lettercell{S}          & \lettercell{S}          & \lettercell{S}          & \lettercell{S}          & \lettercell{S}          & \letterreal{5.9}                                                                                                                                                                                                                                        &   67.5                              \\
  \lettersystem{$W_{16}$}                                      & \lettercell{O}               & \lettercell{O}          & \lettercell{O}          & \lettercell{O}          & \lettercell{S}          & \lettercell{S}          & \lettercell{S}          & \lettercell{S}          & \lettercell{S}          & \lettercell{S}          & \letterreal{5.9}                                                                                                                                                                                                                                        &   64.3                              \\
  \lettersystem{$W_{32}$}                                      & \lettercell{O}               & \lettercell{O}          & \lettercell{O}          & \lettercell{O}          & \lettercell{S}          & \lettercell{S}          & \lettercell{S}          & \lettercell{S}          & \lettercell{S}          & \lettercell{S}          & \letterreal{5.9}                                                                                                                                                                                                                                        &   68.6                              \\
  \lettersystem{$W_{\infty}$}                                  & \lettercell{O}               & \lettercell{O}          & \lettercell{O}          & \lettercell{O}          & \lettercell{S}          & \lettercell{S}          & \lettercell{S}          & \lettercell{S}          & \lettercell{S}          & \lettercell{S}          & \letterreal{5.9}                                                                                                                                                                                                                                        &   71.8                              \\
  \lettersystem{$W_{\text{--}\zzz}$}                           & \lettercell{O}               & \lettercell{O}          & \lettercell{S}          & \lettercell{S}          & \lettercell{O}          & \lettercell{O}          & \lettercell{S}          & \lettercell{S}          & \lettercell{M}          & \lettercell{S}          & \letterreal{5.9}                                                                                                                                                                                                                                        &   56.8                              \\
  \end{oldtabular}%
}
\newcommand{\lettertableg}{%
  \begin{oldtabular}{@{}r*{20}{@{\hspace{1.00pt}}c}@{}}
  \multicolumn{21}{@{}l}{\letterbenchmarkname{Timetabling}}                                                                                                                                                                                                                                                                                                                                                                                                                                                                                                                                                                             \\
                                                               & \letternum{2}                & \letternum{1}           & \letternum{6}           & \letternum{4}           & \letternum{2}           & \letternum{4}           & \letternum{3}           & \letternum{2}           & \letternum{2}           & \letternum{1}           & \letternum{2}           & \letternum{7}           & \letternum{1}           & \letternum{11}          & \letternum{2}           & \letternum{1}           & \letternum{2}           & \letternum{4}         & \lettercons & \lettermancoosi                                                   \\
  \cmidrule{2-21}
  \lettersystem{$\text{clasp\hspace{1.00pt}usc~1}_{\zzz\zzz}$} & \lettercell{\textbf{O}}      & \lettercell{\textbf{O}} & \lettercell{\textbf{O}} & \lettercell{\textbf{O}} & \lettercell{\textbf{O}} & \lettercell{\textbf{O}} & \lettercell{\textbf{O}} & \lettercell{\textbf{O}} & \lettercell{\textbf{O}} & \lettercell{\textbf{O}} & \lettercell{\textbf{O}} & \lettercell{\textbf{O}} & \lettercell{\textbf{O}} & \lettercell{S}          & \lettercell{T}          & \lettercell{S}          & \lettercell{T}          & \lettercell{T}        & \letterreal{5.2}                          &   74.9                              \\
  \lettersystem{$\text{clasp}/L_{0\zzz}$}                      & \lettercell{O}               & \lettercell{O}          & \lettercell{O}          & \lettercell{O}          & \lettercell{O}          & \lettercell{S}          & \lettercell{O}          & \lettercell{O}          & \lettercell{S}          & \lettercell{S}          & \lettercell{S}          & \lettercell{S}          & \lettercell{S}          & \lettercell{S}          & \lettercell{S}          & \lettercell{S}          & \lettercell{S}          & \lettercell{S}        & \letterreal{5.2}                          &   80.2                              \\
  \lettersystem{$L_{4\zzz}$}                                   & \lettercell{O}               & \lettercell{O}          & \lettercell{O}          & \lettercell{O}          & \lettercell{O}          & \lettercell{O}          & \lettercell{O}          & \lettercell{S}          & \lettercell{S}          & \lettercell{O}          & \lettercell{S}          & \lettercell{S}          & \lettercell{S}          & \lettercell{S}          & \lettercell{S}          & \lettercell{S}          & \lettercell{S}          & \lettercell{M}        & \letterreal{5.5}                          &   70.7                              \\
  \lettersystem{$L_{8\zzz}$}                                   & \lettercell{O}               & \lettercell{O}          & \lettercell{O}          & \lettercell{O}          & \lettercell{O}          & \lettercell{O}          & \lettercell{O}          & \lettercell{O}          & \lettercell{O}          & \lettercell{O}          & \lettercell{S}          & \lettercell{S}          & \lettercell{S}          & \lettercell{S}          & \lettercell{S}          & \lettercell{S}          & \lettercell{S}          & \lettercell{M}        & \letterreal{5.6}                          &   78.1                              \\
  \lettersystem{$L_{16}$}                                      & \lettercell{O}               & \lettercell{O}          & \lettercell{O}          & \lettercell{O}          & \lettercell{O}          & \lettercell{O}          & \lettercell{O}          & \lettercell{O}          & \lettercell{O}          & \lettercell{O}          & \lettercell{O}          & \lettercell{S}          & \lettercell{S}          & \lettercell{S}          & \lettercell{S}          & \lettercell{S}          & \lettercell{S}          & \lettercell{M}        & \letterreal{5.8}                          &   \textbf{81.1}                     \\
  \lettersystem{$L_{32}$}                                      & \lettercell{O}               & \lettercell{O}          & \lettercell{O}          & \lettercell{O}          & \lettercell{S}          & \lettercell{O}          & \lettercell{S}          & \lettercell{S}          & \lettercell{O}          & \lettercell{S}          & \lettercell{O}          & \lettercell{S}          & \lettercell{S}          & \lettercell{S}          & \lettercell{S}          & \lettercell{S}          & \lettercell{S}          & \lettercell{M}        & \letterreal{6.0}                          &   74.2                              \\
  \lettersystem{$\text{F}/L_{\infty}/W_{1\zzz}$}               & \lettercell{O}               & \lettercell{O}          & \lettercell{S}          & \lettercell{S}          & \lettercell{S}          & \lettercell{S}          & \lettercell{M}          & \lettercell{S}          & \lettercell{S}          & \lettercell{S}          & \lettercell{S}          & \lettercell{S}          & \lettercell{M}          & \lettercell{S}          & \lettercell{S}          & \lettercell{M}          & \lettercell{M}          & \lettercell{M}        & \letterreal{6.6}                          &   25.8                              \\
  \lettersystem{$W_{4\zzz}$}                                   & \lettercell{O}               & \lettercell{O}          & \lettercell{S}          & \lettercell{S}          & \lettercell{S}          & \lettercell{S}          & \lettercell{M}          & \lettercell{S}          & \lettercell{S}          & \lettercell{S}          & \lettercell{S}          & \lettercell{S}          & \lettercell{M}          & \lettercell{S}          & \lettercell{S}          & \lettercell{M}          & \lettercell{M}          & \lettercell{M}        & \letterreal{6.6}                          &   24.3                              \\
  \lettersystem{$W_{8\zzz}$}                                   & \lettercell{O}               & \lettercell{S}          & \lettercell{S}          & \lettercell{S}          & \lettercell{S}          & \lettercell{S}          & \lettercell{M}          & \lettercell{S}          & \lettercell{S}          & \lettercell{S}          & \lettercell{S}          & \lettercell{S}          & \lettercell{M}          & \lettercell{S}          & \lettercell{S}          & \lettercell{M}          & \lettercell{M}          & \lettercell{M}        & \letterreal{6.6}                          &   24.2                              \\
  \lettersystem{$W_{16}$}                                      & \lettercell{O}               & \lettercell{S}          & \lettercell{S}          & \lettercell{S}          & \lettercell{S}          & \lettercell{S}          & \lettercell{M}          & \lettercell{S}          & \lettercell{S}          & \lettercell{S}          & \lettercell{S}          & \lettercell{S}          & \lettercell{M}          & \lettercell{S}          & \lettercell{S}          & \lettercell{M}          & \lettercell{M}          & \lettercell{M}        & \letterreal{6.6}                          &   19.8                              \\
  \lettersystem{$W_{32}$}                                      & \lettercell{O}               & \lettercell{S}          & \lettercell{S}          & \lettercell{S}          & \lettercell{S}          & \lettercell{S}          & \lettercell{M}          & \lettercell{S}          & \lettercell{S}          & \lettercell{S}          & \lettercell{S}          & \lettercell{S}          & \lettercell{M}          & \lettercell{S}          & \lettercell{S}          & \lettercell{M}          & \lettercell{M}          & \lettercell{M}        & \letterreal{6.6}                          &   20.8                              \\
  \lettersystem{$W_{\infty}$}                                  & \lettercell{O}               & \lettercell{O}          & \lettercell{S}          & \lettercell{S}          & \lettercell{S}          & \lettercell{S}          & \lettercell{M}          & \lettercell{S}          & \lettercell{S}          & \lettercell{S}          & \lettercell{S}          & \lettercell{S}          & \lettercell{M}          & \lettercell{S}          & \lettercell{S}          & \lettercell{M}          & \lettercell{M}          & \lettercell{M}        & \letterreal{6.6}                          &   34.8                              \\
  \lettersystem{$W_{\text{--}\zzz}$}                           & \lettercell{O}               & \lettercell{O}          & \lettercell{O}          & \lettercell{S}          & \lettercell{S}          & \lettercell{S}          & \lettercell{M}          & \lettercell{S}          & \lettercell{S}          & \lettercell{S}          & \lettercell{S}          & \lettercell{S}          & \lettercell{M}          & \lettercell{S}          & \lettercell{S}          & \lettercell{M}          & \lettercell{M}          & \lettercell{M}        & \letterreal{6.6}                          &   49.4                              \\
  \end{oldtabular}%
}
\newcommand{\lettertableh}{%
  \begin{oldtabular}{@{}r*{5}{@{\hspace{1.00pt}}c}@{}}
                                                               & \lettercell{Tsp2018range100} &                         &                         &                                                                                                                                                                                                                                                                                                                                                                                                                                                                                     \\
                                         & \letternum{57}        & \lettercons & \lettermancoosi                                                                                                                                                                                                                                                                                                                                                                                                                                                                                                                                        \\
  \cmidrule{2-6}
  \lettersystem{$\text{clasp\hspace{1.00pt}usc~1}_{\zzz\zzz}$} & \textbf{O}            & 3.2                                                                                                                                                                                                                                                                                                                                                                                                                                                                                                                                            \\
  $\text{clasp}/L_{0\zzz}$                   & \textbf{O}            & 3.2                                                                                                                                                                                                                                                                                                                                                                                                                                                                                                                                                              \\
  $L_{4\zzz}$                                & \textbf{O}            & 3.3                                                                                                                                                                                                                                                                                                                                                                                                                                                                                                                                                              \\
  $L_{8\zzz}$                                & \textbf{O}            & 3.4                                                                                                                                                                                                                                                                                                                                                                                                                                                                                                                                                              \\
  $L_{16}$                               & \textbf{O}            & 3.5                                                                                                                                                                                                                                                                                                                                                                                                                                                                                                                                                                  \\
  $L_{32}$                               & \textbf{O}            & 3.6                                                                                                                                                                                                                                                                                                                                                                                                                                                                                                                                                                  \\
  $\text{F}/L_{\infty}/W_{1\zzz}$            & \textbf{O}            & 3.6                                                                                                                                                                                                                                                                                                                                                                                                                                                                                                                                                              \\
  $W_{4\zzz}$                                & \textbf{O}            & 3.6                                                                                                                                                                                                                                                                                                                                                                                                                                                                                                                                                              \\
  $W_{8\zzz}$                                & \textbf{O}            & 3.6                                                                                                                                                                                                                                                                                                                                                                                                                                                                                                                                                              \\
  $W_{16}$                               & \textbf{O}            & 3.6                                                                                                                                                                                                                                                                                                                                                                                                                                                                                                                                                                  \\
  $W_{32}$                               & \textbf{O}            & 3.6                                                                                                                                                                                                                                                                                                                                                                                                                                                                                                                                                                  \\
  $W_{\infty}$                           & \textbf{O}            & 3.6                                                                                                                                                                                                                                                                                                                                                                                                                                                                                                                                                                  \\
  $W_{\text{--}\zzz}$                        & \textbf{O}            & 3.6                                                                                                                                                                                                                                                                                                                                                                                                                                                                                                                                                              \\
  \end{oldtabular}%
}
\newcommand{\lettertablei}{%
  \begin{oldtabular}{@{}r*{8}{@{\hspace{1.00pt}}c}@{}}
                                                               & \lettercell{Tsp2018range200} &                         &                         &                         &                         &                         &                                                                                                                                                                                                                                                                                                                                                                                                       \\
                                                               & \letternum{50}               & \letternum{2}           & \letternum{1}           & \letternum{2}           & \letternum{1}           & \letternum{1}           & \lettercons & \lettermancoosi                                                                                                                                                                                                                                                                                                                                                                         \\
  \cmidrule{2-9}
  \lettersystem{$\text{clasp\hspace{1.00pt}usc~1}_{\zzz\zzz}$} & \lettercell{S}               & \lettercell{S}          & \lettercell{S}          & \lettercell{S}          & \lettercell{S}          & \lettercell{S}          & \letterreal{3.2}                                                                                                                                                                                                                                                                                                                                                                                      \\
  \lettersystem{$\text{clasp}/L_{0\zzz}$}                      & \lettercell{O}               & \lettercell{O}          & \lettercell{O}          & \lettercell{S}          & \lettercell{S}          & \lettercell{S}          & \letterreal{3.2}                                                                                                                                                                                                                                                                                                                                                                                      \\
  \lettersystem{$L_{4\zzz}$}                                   & \lettercell{\textbf{O}}      & \lettercell{\textbf{O}} & \lettercell{\textbf{O}} & \lettercell{\textbf{O}} & \lettercell{\textbf{O}} & \lettercell{\textbf{O}} & \letterreal{3.4}                                                                                                                                                                                                                                                                                                                                                                                      \\
  \lettersystem{$L_{8\zzz}$}                                   & \lettercell{O}               & \lettercell{O}          & \lettercell{O}          & \lettercell{O}          & \lettercell{O}          & \lettercell{S}          & \letterreal{3.5}                                                                                                                                                                                                                                                                                                                                                                                      \\
  \lettersystem{$L_{16}$}                                      & \lettercell{\textbf{O}}      & \lettercell{\textbf{O}} & \lettercell{\textbf{O}} & \lettercell{\textbf{O}} & \lettercell{\textbf{O}} & \lettercell{\textbf{O}} & \letterreal{3.7}                                                                                                                                                                                                                                                                                                                                                                                      \\
  \lettersystem{$L_{32}$}                                      & \lettercell{\textbf{O}}      & \lettercell{\textbf{O}} & \lettercell{\textbf{O}} & \lettercell{\textbf{O}} & \lettercell{\textbf{O}} & \lettercell{\textbf{O}} & \letterreal{3.8}                                                                                                                                                                                                                                                                                                                                                                                      \\
  \lettersystem{$\text{F}/L_{\infty}/W_{1\zzz}$}               & \lettercell{\textbf{O}}      & \lettercell{\textbf{O}} & \lettercell{\textbf{O}} & \lettercell{\textbf{O}} & \lettercell{\textbf{O}} & \lettercell{\textbf{O}} & \letterreal{3.8}                                                                                                                                                                                                                                                                                                                                                                                      \\
  \lettersystem{$W_{4\zzz}$}                                   & \lettercell{O}               & \lettercell{O}          & \lettercell{O}          & \lettercell{O}          & \lettercell{O}          & \lettercell{S}          & \letterreal{3.8}                                                                                                                                                                                                                                                                                                                                                                                      \\
  \lettersystem{$W_{8\zzz}$}                                   & \lettercell{O}               & \lettercell{O}          & \lettercell{O}          & \lettercell{O}          & \lettercell{S}          & \lettercell{S}          & \letterreal{3.8}                                                                                                                                                                                                                                                                                                                                                                                      \\
  \lettersystem{$W_{16}$}                                      & \lettercell{O}               & \lettercell{O}          & \lettercell{S}          & \lettercell{S}          & \lettercell{S}          & \lettercell{S}          & \letterreal{3.8}                                                                                                                                                                                                                                                                                                                                                                                      \\
  \lettersystem{$W_{32}$}                                      & \lettercell{O}               & \lettercell{S}          & \lettercell{S}          & \lettercell{S}          & \lettercell{S}          & \lettercell{S}          & \letterreal{3.8}                                                                                                                                                                                                                                                                                                                                                                                      \\
  \lettersystem{$W_{\infty}$}                                  & \lettercell{O}               & \lettercell{S}          & \lettercell{S}          & \lettercell{S}          & \lettercell{S}          & \lettercell{S}          & \letterreal{3.8}                                                                                                                                                                                                                                                                                                                                                                                      \\
  \lettersystem{$W_{\text{--}\zzz}$}                           & \lettercell{O}               & \lettercell{O}          & \lettercell{O}          & \lettercell{S}          & \lettercell{S}          & \lettercell{S}          & \letterreal{3.8}                                                                                                                                                                                                                                                                                                                                                                                      \\
  \end{oldtabular}%
}
\newcommand{\lettertablej}{%
  \begin{oldtabular}{@{}r*{21}{@{\hspace{1.00pt}}c}@{}}
  \multicolumn{22}{@{}l}{\letterbenchmarkname{TSP}}                                                                                                                                                                                                                                                                                                                                                                                                                                                                                                                                                                                     \\
                                                               & \letternum{22}               & \letternum{1}           & \letternum{1}           & \letternum{1}           & \letternum{2}           & \letternum{1}           & \letternum{1}           & \letternum{1}           & \letternum{1}           & \letternum{1}           & \letternum{1}           & \letternum{3}           & \letternum{2}           & \letternum{1}           & \letternum{1}           & \letternum{1}           & \letternum{1}           & \letternum{1}         & \letternum{14}        & \lettercons & \lettermancoosi                           \\
  \cmidrule{2-22}
  \lettersystem{$\text{clasp\hspace{1.00pt}usc~1}_{\zzz\zzz}$} & \lettercell{S}               & \lettercell{S}          & \lettercell{S}          & \lettercell{S}          & \lettercell{S}          & \lettercell{S}          & \lettercell{S}          & \lettercell{S}          & \lettercell{S}          & \lettercell{S}          & \lettercell{S}          & \lettercell{S}          & \lettercell{S}          & \lettercell{S}          & \lettercell{S}          & \lettercell{S}          & \lettercell{S}          & \lettercell{S}        & \lettercell{S}        & \letterreal{3.2}  &   7.14                              \\
  \lettersystem{$\text{clasp}/L_{0\zzz}$}                      & \lettercell{O}               & \lettercell{S}          & \lettercell{O}          & \lettercell{O}          & \lettercell{O}          & \lettercell{O}          & \lettercell{O}          & \lettercell{O}          & \lettercell{S}          & \lettercell{S}          & \lettercell{S}          & \lettercell{S}          & \lettercell{S}          & \lettercell{S}          & \lettercell{S}          & \lettercell{S}          & \lettercell{S}          & \lettercell{S}        & \lettercell{S}        & \letterreal{3.2}  &   75.4                              \\
  \lettersystem{$L_{4\zzz}$}                                   & \lettercell{O}               & \lettercell{O}          & \lettercell{O}          & \lettercell{O}          & \lettercell{O}          & \lettercell{O}          & \lettercell{O}          & \lettercell{O}          & \lettercell{O}          & \lettercell{O}          & \lettercell{O}          & \lettercell{S}          & \lettercell{S}          & \lettercell{S}          & \lettercell{S}          & \lettercell{S}          & \lettercell{S}          & \lettercell{S}        & \lettercell{S}        & \letterreal{3.5}  &   81.3                              \\
  \lettersystem{$L_{8\zzz}$}                                   & \lettercell{O}               & \lettercell{O}          & \lettercell{O}          & \lettercell{O}          & \lettercell{O}          & \lettercell{O}          & \lettercell{O}          & \lettercell{O}          & \lettercell{O}          & \lettercell{O}          & \lettercell{O}          & \lettercell{O}          & \lettercell{S}          & \lettercell{O}          & \lettercell{S}          & \lettercell{S}          & \lettercell{S}          & \lettercell{S}        & \lettercell{S}        & \letterreal{3.6}  &   89.5                              \\
  \lettersystem{$L_{16}$}                                      & \lettercell{O}               & \lettercell{O}          & \lettercell{O}          & \lettercell{O}          & \lettercell{O}          & \lettercell{O}          & \lettercell{O}          & \lettercell{O}          & \lettercell{O}          & \lettercell{O}          & \lettercell{O}          & \lettercell{O}          & \lettercell{O}          & \lettercell{S}          & \lettercell{S}          & \lettercell{S}          & \lettercell{S}          & \lettercell{O}        & \lettercell{S}        & \letterreal{3.8}  &   87.7                              \\
  \lettersystem{$L_{32}$}                                      & \lettercell{\textbf{O}}      & \lettercell{\textbf{O}} & \lettercell{\textbf{O}} & \lettercell{\textbf{O}} & \lettercell{\textbf{O}} & \lettercell{\textbf{O}} & \lettercell{\textbf{O}} & \lettercell{\textbf{O}} & \lettercell{\textbf{O}} & \lettercell{\textbf{O}} & \lettercell{\textbf{O}} & \lettercell{\textbf{O}} & \lettercell{\textbf{O}} & \lettercell{\textbf{O}} & \lettercell{\textbf{O}} & \lettercell{\textbf{O}} & \lettercell{\textbf{O}} & \lettercell{S}        & \lettercell{S}        & \letterreal{4.0}  &   91.4                              \\
  \lettersystem{$\text{F}/L_{\infty}/W_{1\zzz}$}               & \lettercell{O}               & \lettercell{O}          & \lettercell{O}          & \lettercell{O}          & \lettercell{O}          & \lettercell{O}          & \lettercell{O}          & \lettercell{O}          & \lettercell{O}          & \lettercell{O}          & \lettercell{O}          & \lettercell{O}          & \lettercell{O}          & \lettercell{O}          & \lettercell{O}          & \lettercell{O}          & \lettercell{S}          & \lettercell{S}        & \lettercell{S}        & \letterreal{4.0}  &   \textbf{92.1}                     \\
  \lettersystem{$W_{4\zzz}$}                                   & \lettercell{O}               & \lettercell{O}          & \lettercell{O}          & \lettercell{O}          & \lettercell{O}          & \lettercell{S}          & \lettercell{O}          & \lettercell{O}          & \lettercell{O}          & \lettercell{O}          & \lettercell{S}          & \lettercell{S}          & \lettercell{S}          & \lettercell{S}          & \lettercell{O}          & \lettercell{S}          & \lettercell{S}          & \lettercell{S}        & \lettercell{S}        & \letterreal{4.0}  &   82.5                              \\
  \lettersystem{$W_{8\zzz}$}                                   & \lettercell{O}               & \lettercell{O}          & \lettercell{O}          & \lettercell{O}          & \lettercell{O}          & \lettercell{O}          & \lettercell{O}          & \lettercell{S}          & \lettercell{S}          & \lettercell{O}          & \lettercell{S}          & \lettercell{S}          & \lettercell{S}          & \lettercell{S}          & \lettercell{S}          & \lettercell{S}          & \lettercell{S}          & \lettercell{S}        & \lettercell{S}        & \letterreal{4.0}  &   71.4                              \\
  \lettersystem{$W_{16}$}                                      & \lettercell{O}               & \lettercell{O}          & \lettercell{O}          & \lettercell{S}          & \lettercell{S}          & \lettercell{S}          & \lettercell{S}          & \lettercell{S}          & \lettercell{O}          & \lettercell{S}          & \lettercell{O}          & \lettercell{S}          & \lettercell{S}          & \lettercell{S}          & \lettercell{S}          & \lettercell{S}          & \lettercell{S}          & \lettercell{S}        & \lettercell{S}        & \letterreal{4.0}  &   66.5                              \\
  \lettersystem{$W_{32}$}                                      & \lettercell{O}               & \lettercell{O}          & \lettercell{S}          & \lettercell{S}          & \lettercell{S}          & \lettercell{S}          & \lettercell{S}          & \lettercell{S}          & \lettercell{S}          & \lettercell{S}          & \lettercell{S}          & \lettercell{S}          & \lettercell{S}          & \lettercell{S}          & \lettercell{S}          & \lettercell{S}          & \lettercell{S}          & \lettercell{S}        & \lettercell{S}        & \letterreal{4.0}  &   63.2                              \\
  \lettersystem{$W_{\infty}$}                                  & \lettercell{O}               & \lettercell{O}          & \lettercell{S}          & \lettercell{O}          & \lettercell{S}          & \lettercell{S}          & \lettercell{S}          & \lettercell{S}          & \lettercell{S}          & \lettercell{S}          & \lettercell{S}          & \lettercell{S}          & \lettercell{S}          & \lettercell{S}          & \lettercell{S}          & \lettercell{S}          & \lettercell{S}          & \lettercell{S}        & \lettercell{S}        & \letterreal{4.0}  &   64.8                              \\
  \lettersystem{$W_{\text{--}\zzz}$}                           & \lettercell{O}               & \lettercell{O}          & \lettercell{O}          & \lettercell{O}          & \lettercell{O}          & \lettercell{O}          & \lettercell{S}          & \lettercell{O}          & \lettercell{O}          & \lettercell{S}          & \lettercell{S}          & \lettercell{S}          & \lettercell{S}          & \lettercell{S}          & \lettercell{S}          & \lettercell{S}          & \lettercell{S}          & \lettercell{S}        & \lettercell{S}        & \letterreal{4.0}  &   73.2                              \\
  \end{oldtabular}%
}

\newcommand{\letterbenchmarkname}[1]{\textbf{\footnotesize #1}\vspace{3pt}}
\newcommand{\letternum}[1]{\hspace{4pt}\clap{#1}\hspace{4pt}}
\newcommand{\lettercell}[1]{\clap{#1}}
\newcommand{\lettersystem}[1]{#1\hspace{20pt}}
\newcommand{\letterreal}[1]{\hspace{4pt}#1}
\newcommand{\lettercons}{\letternum{cons}}
\newcommand{\lettermancoosi}{\hspace{8pt}$S_1$\hspace{16pt}}
\newcommand{\zzz}{\phantom{0}}
\begin{table} 
  \caption{\label{table:letter}%
    Solving performance of \system{clasp} using core-guided optimization
    (\pipeline{usc}), branch-and-bound (\pipeline{clasp}), and branch-and-bound
    after optimization rewriting.
    The rows correspond to pipelines, the numbered columns $n$ to subsets of
    instances, and the letters O,S,M, and T to a classification of results.
    In addition, $S_1$ indicates solution quality scores computed with the same
    formula as in the Seventh ASP Competition as well as the Mancoosi
    International Solver Competition.
    Note however, that the $S_1$ score of a single solver depends on its
    performance relative to the other pipelines in the comparison, and
    therefore the scores here are not directly comparable to those in the
    competitions.
    Best pipelines per benchmark are highlighted in view of both
    classifications and $S_1$ scores.
    Moreover, the ``cons'' columns give the average base $10$ logarithms
    of the numbers of constraints remaining after rewriting.
    Rewriting is based on either
    (\pipeline{$L_{d}$}) depth $d$ comparator networks,
    (\pipeline{$F$}) full sorting networks,
    (\pipeline{$W_{k}$}) full sorting networks and decompositions $D_{k}$
    with sparseness factors $k$ that limit numbers of weights produced
    by weight propagation roughly to fractions $1 / k$,
    or (\pipeline{$W_{\text{--}}$}) full sorting networks without weight
    propagation.
    The pipelines \pipeline{clasp} and \pipeline{$L_{0}$} coincide, as do the pipelines \pipeline{$F$}, \pipeline{$L_{\infty}$},
    and \pipeline{$W_{1}$}.
  }%
  {
    \scriptsize
    \begin{tabular}{@{}lr@{}}
      \toprule
      \renewcommand{\lettersystem}[1]{\hspace{20pt}\llap{#1}}      
      \lettertablea
      &
      \renewcommand{\lettersystem}[1]{}               
      \lettertableb \\
      \midrule
      \renewcommand{\lettersystem}[1]{\hspace{20pt}\llap{#1}}      
      \lettertablec
      &
      \renewcommand{\lettersystem}[1]{}               
      \lettertablee \\
      \midrule
      \renewcommand{\lettersystem}[1]{\hspace{20pt}\llap{#1}}      
      \lettertablef
      &
      \renewcommand{\lettersystem}[1]{}               
      \lettertableg \\
      \bottomrule
    \end{tabular}%
  }%
\end{table}

\begin{table}
  \caption{\label{table:letter-continued}%
    Further results in the same form as in Table~\ref{table:letter}.
  }%
  {
    \scriptsize
    \begin{tabular}{@{}lr@{}}
      \toprule
      \renewcommand{\lettersystem}[1]{\llap{#1}}      
      \lettertabled
      &
      \renewcommand{\lettersystem}[1]{}               
      \lettertablej \\
      \bottomrule
    \end{tabular}%
  }%
\end{table}

Tables~\ref{table:letter} and \ref{table:letter-continued} show the results.
After running each \emph{pipeline-instance} combination with a 10 minute time
limit and a 3GB memory limit on Linux machines with Intel Xeon CPU E5-2680 v3
2.50GHz processors, each run was classified as
(O) finished and solved with a confirmed optimal solution,
(S) unfinished, but with some solutions found,
(T) unfinished without any solutions found in time, or
(M) aborted due to memory excess.
In the table, rows represent pipelines and columns represent disjoint sets of
instances with mutually identical run classifications.
The column numbers give the counts of instances in these sets.
Generally, the better a pipeline is, the higher is the sum of instance counts
related to its ``O'' entries.
Moreover, if a pipeline has an ``O'' entry in a column where another pipeline does
not, then it solves optimally at least some instances the other one does not.
If this holds mutually for a pair of rows, then the respective pipelines are
complementary in the sense that a \emph{virtual best solver} (VBS) combining
them would perform better than either one alone.
In order to complement this thorough view of the classification results, the
tables additionally show solution quality scores $S_1$ following a scheme
from the Mancoosi International Solver Competition\footnote{%
  http://www.mancoosi.org/misc/%
}
also used in the Sixth \cite{gemari15a} and Seventh ASP Competitions
\cite{seventh_asp_competition_design}.
The score for a pipeline $S$ among $M$ pipelines over a domain $D$ with $N$
benchmark instances is computed as $S_1 = \frac{100}{MN} \sum_{I \in D} M_S(I)$, where
$M_S(I)$ is 0, if $S$ did not find a single solution; or otherwise the number of
pipelines that found no solutions of higher quality, where a confirmed optimal
solution is preferred over an unconfirmed one.
The Seventh ASP Competition ranks solvers also based on an alternative score,
which awards points based on only the number of confirmed optimal solutions,
and the scores are scaled to a range of 0-100 per benchmark.
These scores are not presented in the tables, but the corresponding unscaled
scores can be found out by computing the weighted sums of ``O'' letters in each
row.
Hence, the winners by this alternative score coincide with the winners by the
``O'' letters that are highlighted.

The results for the pipelines
\pipeline{$L_d$}
illustrate the impact of tuning the depth limit $d$.
Of these pipelines,
\pipeline{$L_0$} and \pipeline{$L_{\infty}$} correspond to \pipeline{clasp} and
\pipeline{$F$}, which use no rewriting and full depth rewriting, respectively.
The results for these extremes are mixed, so that pipeline \pipeline{$F$}
improves performance over \pipeline{clasp} on some benchmarks and deteriorates
performance on others.
On the contrary, the intermediate depth limits $d = 8$ and $d = 16$ yield
robust performance.
Namely, both of the respective pipelines \pipeline{$L_8$} and
\pipeline{$L_{16}$} solve optimally all the instances that \pipeline{clasp} and
\pipeline{$F$} do, and more, and this holds over all the benchmarks.
The benefit of depth limits is strong enough so that on multiple benchmarks,
namely Bayes Hailfinder, MaxSAT, and Timetabling, rewriting with depth limits
as in \pipeline{$L_8$} and \pipeline{$L_{16}$} accelerates solving, even though
the use of full depth rewriting in pipeline \pipeline{$F$} decelerates it.
The benefit of depth limits appears strongest on Bayes Alarm, Bayes Hailfinder
and Timetabling.
The instances in these benchmarks are on the larger end among the considered
benchmarks in terms of optimization statement sizes after rewriting, shown in
Table~\ref{table:weight-count-blowup}.
Especially Timetabling stands out in this respect, and indeed the impact of
depth limits is greatest on it as well.
Overall, the use of these depth limits mitigates the size increase caused by
rewriting by up to an order of magnitude, as measured by logarithmic numbers of
constraints shown in the rightmost columns in the tables.
As regards pipelines with other depth limits, \pipeline{$L_{32}$} is sometimes
better and sometimes worse than the pipelines \pipeline{$L_d$} with lower depth
limits $d < 32$.

In light of these observations, the significance of depth limits is likely due
to their strong and direct connection to how much optimization rewriting
increases instance size.
Namely, network depth is a factor of network size
and therefore also of the size of the corresponding ASP translations.
On the other hand, the choice of weight propagation does not affect translation
size.
Regarding the magnitude of the depth factor, with full rewriting as in
\pipeline{$F$} that is based on odd-even merge sorters, the depth is
$O(\log^2 n)$ in the length $n$ of an optimization statement.
In practice, the implementation in \system{pbtranslate} employs some micro
optimizations and manages to produce networks
with depths in the range $\range{9}{105}$ for $n \in \eset{10}{10\,000}$.
These ranges are relevant since the size of the most substantial optimization
statements in the considered benchmarks are in the thousands, except on
Timetabling where they range from hundreds up to over a million, and on TSP
where they range between one and two hundred.
Therefore, even though the size of full depth optimization rewriting is
only polynomial, the increase is considerable in practice.
Indeed, the tables indicate instance size increases from $0.7$ to $1.9$ orders
of magnitude on different benchmarks with an average of around $1.2$.
The range corresponds to a $5$ to $80$ fold increase.
One may obtain these numbers by deducting the logarithmic-scale numbers of constraints
shown in the ``cons'' column
for pipeline \pipeline{$F$} from those for \pipeline{clasp}.
A more manageable size growth is achieved via
the use of depth limits $d$ in pipelines \pipeline{$L_d$},
which yield linearly growing network sizes.
This is reflected in the ``cons'' values of the respective rows, which lie in
between those for the extreme cases of \pipeline{clasp} and \pipeline{$F$}.
Based on the results, these more modest instance sizes appear to yield
generally fruitful tradeoffs between the benefits and costs of optimization
rewriting.

The results for the pipelines
\pipeline{$W_k$}
illustrate the impact of tuning weight propagation from fine to coarse grained
propagation.
At one end, \pipeline{$W_1$} propagates weights as much as possible, producing a
high number of generally low weights to be optimized.
At the other end, \pipeline{$W_{\infty}$} propagates weights as little as
possible, producing weights only on the first and last layers.
The results indicate a clear gradual trend in favor of fine grained propagation
among the different Bayes classes, Markov Network, Fastfood, and TSP classes.
Interestingly, these classes are separated from the remaining MaxSAT and
Timetabling classes in having more heterogeneous weights in their optimization
statements, which may have a connection with the trend.
The heterogeneity of weights can be quantified by measuring the proportional
increase in atoms being optimized caused by rewriting, shown in
Table~\ref{table:weight-count-blowup}, as this is an indication of how many
nonzero weights remain after weight propagation, which is strongly dependent on
the heterogeneity of weights.
The observed trend makes intuitive sense, since as the ratio approaches $1$,
the different pipelines $W_{k}$ converge, so any differences ought to manifest
with higher ratios.
Formally analyzing the differences in the performance impact of different weight
propagation methods on a finer level is challenging.
In contrast to the formal analysis carried out in
Section~\ref{section:formal}, such an investigation would be most meaningful in
the context of heterogeneous weights.
Moreover, the abstraction level would have to be detailed enough to capture
differences specifically due to the representation of the optimization
statement as opposed to the encoding of the network, as the latter is
independent of weight propagation.
Hence such an analysis would be involved, and given the lack of experimental
evidence in favor of sparseness factors other than $k = 1$, also weakly motivated at
present.
Thus it is left for potential future work.

\begin{table}[t]
  \caption{\label{table:weight-count-blowup}%
    Average numbers of atoms of atoms in optimization statements before and
    after full rewriting, and the average ratios of the latter over the former.
  }%
  {
    \small
    \begin{tabular}{lrrr}
      \toprule
                       & \multicolumn{1}{l}{atoms before}  & \multicolumn{1}{l}{atoms after}  & \multicolumn{1}{l}{ratio} \\
      \midrule
      Bayes Alarm      & 3,016.1                           & 20,783.1                         & 6.0                       \\
      Bayes Hailfinder & 2,102.6                           & 13,605.1                         & 6.1                       \\
      Bayes Water      & 470.2                             & 2,342.2                          & 4.5                       \\
      Markov Networks  & 1,559.9                           & 8,060.1                          & 4.6                       \\
      MaxSAT           & 3,531.1                           & 10,570.3                         & 2.3                       \\
      Timetabling      & 29,314.5                          & 123,178.1                        & 3.8                       \\
      Fastfood         & 3,147.3                           & 16,393.4                         & 5.2                       \\
      TSP              & 115.6                             & 499.6                            & 4.3                       \\
      \bottomrule
    \end{tabular}
  }
\end{table}

Furthermore, we note that the performance of core-guided optimization
is clearly different from the branch-and-bound based strategy used in the other
pipelines.
On the different Bayes classes, Markov Network, Fastfood, and TSP classes, core-guided optimization
falls behind.
On the remaining benchmarks, MaxSAT and Timetabling, the situation is different
and in fact the core-guided pipeline constitutes a VBS on its own if we overlook a
single MaxSAT instance.
However, on these classes, the most highly performing rewriting pipelines
improve on branch-and-bound
\pipeline{clasp} and almost halve the performance gap that separates it from
core-guided optimization.

Finally, the $S_1$ scores that measure solution quality in a more sensitive
manner give a similar picture overall with a few specific differences.
Namely, \pipeline{clasp} fares better in this light on some of the benchmarks.
This means that within the scope of those benchmarks, \pipeline{clasp} manages
to find higher quality solutions than the rewriting pipelines in cases where
neither reach an optimality proof.
Hence, it appears that some of the strength of the rewriting pipelines lies in
their ability to provide optimality proofs, which would be in accordance with
the formal analysis in Section~\ref{section:formal}, and that when those proofs
are nevertheless out of reach, this strength is weakened.
Moreover, in the results on Timetabling, the $S_1$ score penalizes core-guided
solving, as it runs into several timeouts on this class.

It is naturally difficult to predict the impact of optimization
rewriting based on the syntactic properties and structure of the instances and
their optimization statements when dealing with heterogeneous application
problems.
On the practical level, an investigation of such connections is best left for
dedicated work such as that behind the portfolio solver \system{me-asp}
\cite{me_asp_multi_level}.
\system{me-asp} is equipped with a number of black-box solvers and it uses
machine learning to decide which one to apply to each given instance
based on various problem features.
Nevertheless, some expectations for performance can be set here.
To this end, recall the setting from the formal analysis in
Section~\ref{section:formal} where
(i) there is a large number of non-optimal answer sets that need to be
rejected,
(ii) the rejection necessitates a large number of nogoods on the original
atoms, and
(iii) the rejection is possible with a small number of nogoods on the output
atoms of a sorting network.
Benefits similar to those witnessed in the analysis may occur if all three
of these items apply to a given problem.
Moreover, the increase in instance size must be reasonable so as not to
outweight these benefits.
Even for diverse application problems, Item (i) represents the typical case and
it indeed holds for all of the benchmarks considered here.
However, Items (ii-iii) are not as easily satisfied and may be hard to detect
from syntactic features.
Moreover, the number of nogoods required to reject non-optimal answer sets can
depend entirely on the encoding of a problem.
For example, it is possible to apply encoding techniques that are analogous to
optimization rewriting, in which case any benefit from subsequent optimization
rewriting is reduced and potentially even nullified.
To understand the practical scenarios where Items (ii-iii) might apply, let us
consider the task of formulating an optimization statement from two angles.
From a \emph{declarative angle}, the task is to declare the intent to optimize
some desired criteria and the formulation is successful if it correctly
declares ones intent.
From a \emph{number system angle}, the task is to design a weighted number
system in which to represent the optimization value by referring to the atoms
of the program.
In this case, a formulation is successful if it satisfies properties such as
lack of ambiguity and ease of comparability.
Regarding these properties, in case of ASP optimization, an ideal
representation enables to impose any single bound on the optimization statement
with a single nogood.
For example, the rewritten optimization statements analyzed in the formal
analysis in Section~\ref{section:formal} embody a unary number system that
enables this.
Moreover, a number system such as the binary number system would be
unambiguous, but would not allow the expression of bounds with a single nogood.
An encoding may be written from either of these two perspectives and the
declarative angle is likely leave more room for improvement due to optimization
rewriting since in contrast the number system angle is likely to lead to more
efficient encodings for optimization, and thus also to reduce the improvement
potential accessible via optimization rewriting.
In the considered benchmarks, the used ASP encodings fall more on the
declarative side, and in this respect, all of the benchmarks appear to be
potentially fruitful targets for optimization rewriting.
In summary, all of the considered benchmarks show basic promise for optimization
rewriting.
The remaining question is then whether that promise realizes in benefits that
outweight the increase in instance size.
Unfortunately, this is hard to predict based on the structure of the instances.

Figure~\ref{fig:performance} shows cactus plots of solving time over optimally
solved instances and optimization values over the same
benchmark classes considered earlier in Tables~\ref{table:letter} and
\ref{table:letter-continued}.
The included pipelines are the same as before with one added pipeline described
shortly, \pipeline{$L_{8} W_{\text{--}}$}.
The pipelines are grouped into three partially overlapping categories:
one with pipelines \pipeline{$L_{d}$},
one with pipelines \pipeline{$W_{k}$},
and one with selected few representatives of these and other pipelines.
The solving times, which are shown in the three plots on the left column, form a
picture that is in line with the results described previously.
In more detail, at the top left, pipelines \pipeline{$L_{8}$} and
\pipeline{$L_{16}$} emerge as the overall best performing ones among those
using depth limits.
In the middle left, pipelines using finer grained weight propagation consistently
outperform those using coarser grained propagation.
At the bottom left, rewriting pipelines with depth limits lead in overall
performance, followed by \system{clasp} with branch-and-bound optimization but
without rewriting (\pipeline{clasp}), and finally, by \system{clasp} using
core-guided optimization (\pipeline{usc}).
This bottom left plot also includes the one rewriting based pipeline not present
in
the prior results, \pipeline{$L_{8} W_{\text{--}}$}, which uses depth $8$
networks without weight propagation so that all weights are kept on
the input layer.
This pipeline \pipeline{$L_{8} W_{\text{--}}$} is included here in order to
gauge whether a combination of the techniques behind
$L_{d}$
and
$W_{\text{--}}$
improves upon the individual pipelines.
A single parameter value $d = 8$ was fixed for simplicity.
No such improvement is seen, however, as \pipeline{$L_{8} W_{\text{--}}$}
fares significantly worse than \pipeline{$L_8$}.

The respective plots of optimization values on the right have been normalized to
a range from 0 to 1 corresponding to the best and worst values achieved by the
pipelines.
Regarding differences between pipelines,
the pipelines \pipeline{$W_k$} perform similarly with different sparseness
factors $k$, except with $k = \infty$, which stands out and gives the best
results.
Moreover, as seen in the bottom right plot, \pipeline{clasp} without rewriting yields
overall lowest optimization values.
This reveals that although \pipeline{clasp} without rewriting does not achieve
the greatest numbers of optimally solved instances,
it is exceptionally often close in optimization values to
whichever pipeline solves each instance at hand most optimally.
In comparison, the best pipelines with rewriting solve more instance optimally,
but fall behind slightly more clearly on the other instances.

\begin{figure}
  \includegraphics{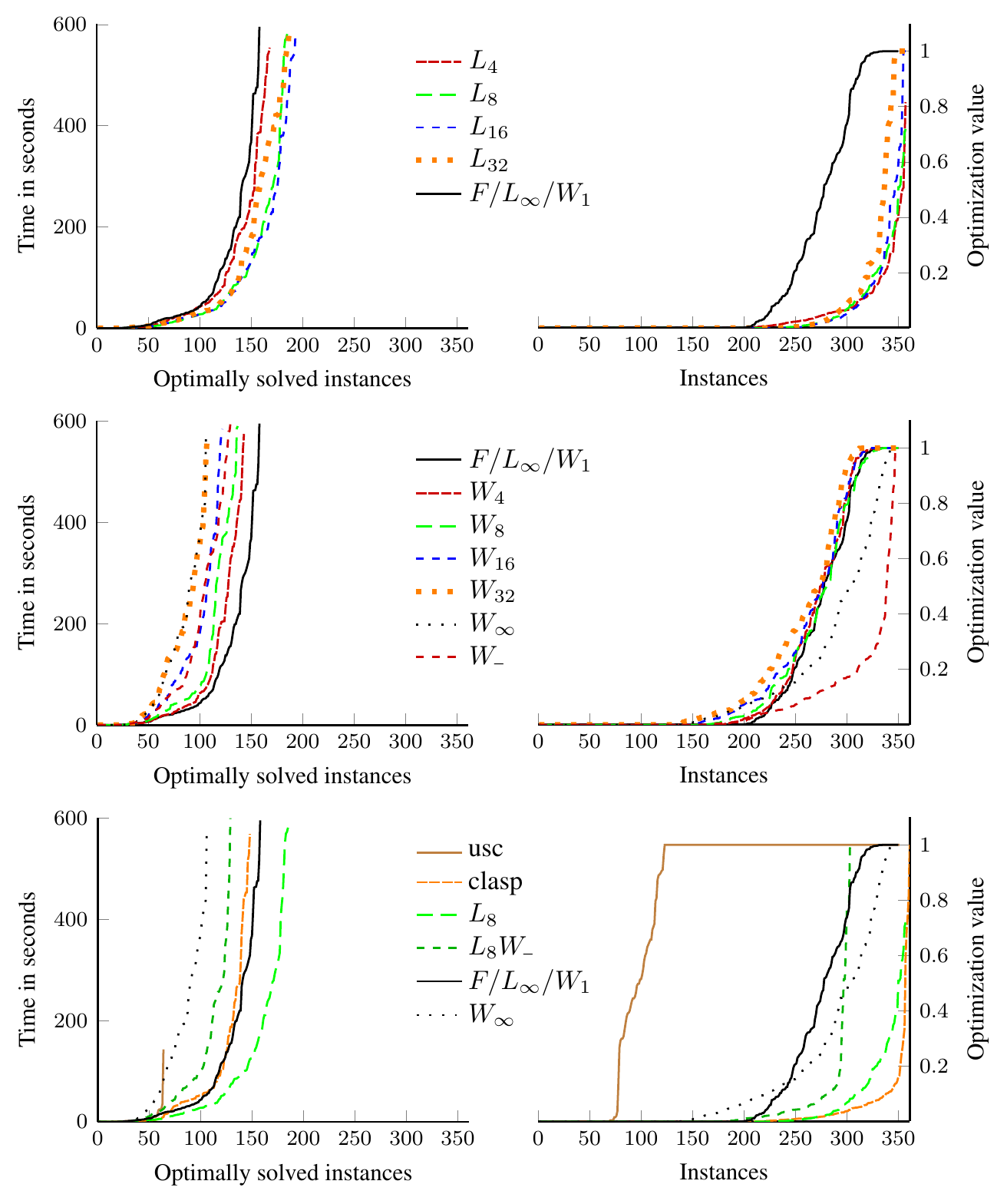}
  \caption{%
    Cactus plots of time
    and optimization value for runs
    in Tables~\ref{table:letter} and \ref{table:letter-continued}.
    Closeness to the bottom right is better in both cases.
    For each instance, the best final optimization value reported by a system is
    mapped to 0 and the worst to 1.
    From top to bottom,
    the horizontal pairs of plots concern
    systems \pipeline{$L_d$} using rewriting with depth $d$ networks,
    systems \pipeline{$W_k$} using rewriting with weights on every $k$th layer,
    and a selection of these systems together with \system{clasp}
    using core-guided optimization (usc) and branch-and-bound (clasp).%
    \label{fig:performance}}
\end{figure}

In the process of these experiments, additional preliminary screening
was performed on various aspects of the benchmark setting.
The results are only briefly described here due to lacking significance in the
outcomes.
For one, a significantly extended timeout of one hour had little impact on the
relative standing of the pipelines.
Likewise,
use of a stratification heuristic for handling weights in \system{clasp}
when employing core-guided optimization made little difference.
Moreover, the combination of rewriting and core-guided optimization performed
otherwise similarly to plain core-guided optimization, but resulted in more
timeouts.

Further benchmark classes were also screened for an impact due to optimization
rewriting.
To this end, Steiner Tree, Valves Location, and Video Streaming from the ASP
Competitions were considered, but none showed clear signs of improvements to
report.
These benchmarks are generally large, and have geometric averages 5.7, 5.5, and
4.5 of constraints on a logarithmic scale before rewriting, respectively.
On Valves Location, the results were neutral and on the other two, performance
was reduced.
On Valves Location, the explanation may be in that the optimization statements
are minuscule relative to the entire instances, which may reflect their
significance as sources of nogoods.
On Video Streaming, the performance reduction is minor.
On Steiner Tree, the reduction is greater, and the reason may simply be that
the size blowup is particularly costly given the exceptionally large size of
the instances already before rewriting.
In general, an exceptionally large input size for a solvable instance is a sign
of the instance being ``large but easy'' rather than ``small but hard'', and it
is likely that its difficulty does not stem as heavily from a combinatorial
explosion as with smaller instances.
This may hinder the usefulness of the proposed optimization rewriting since
based on the formal analysis, optimization rewriting yields benefits in cases
where the task of optimization presents a clear combinatorial challenge.

In summary, our experiments provide support in favor of different solving pipelines on
different benchmark classes. Thus, an ideal portfolio solver incorporating the
proposed optimization rewriting techniques and native optimization approaches
would surpass either approach standing alone. In particular, our optimization
rewriting pipelines improve in several cases over plain \system{clasp} as well as \system{clasp}
with core-guided optimization. Based on the results, we recommend
considering optimization rewriting based on depth limited comparator networks as
in \pipeline{$L_{d}$} where $d$ is ideally tuned based on the benchmark set at
hand or fixed to a modest value such as $d = 8$, which yielded robust
performance in these experiments.
Limiting network depth limits rewriting size while compromising some of the
benefits of rewriting and generally, although not always, this tradeoff is
worthwhile.
Moreover, due to differences in the result statuses and the $S_1$ scores, we
hypothesize that the proposed optimization rewriting techniques are
particularly suitable for solving instances all the way to optimality.

\section{Related Work}
\label{section:related}

The odd-even merge sort scheme \cite{Batcher68:afips} is a widely used scheme
for generating practical and useful sorting networks.
Sorting networks have existing applications in, e.g., SAT encodings of
pseudo-Boolean constraints \cite{ES06:jsat}.
These SAT encodings simulate summation in a binary system with digits encoded in
a unary number system.
The number of sorting networks used is proportional to the maximum bit width of
the weights in the pseudo-Boolean constraint being encoded.
Moreover, the number of inputs to each sorting network is proportional to the
number of inputs to the constraint.
We have carried these techniques over to ASP in our previous work on
the normalization of cardinality rules
\cite{BJ13:lpnmr} and weight rules \cite{BGJ14:jelia}.
For weight rules, a set of structurally shared sorters
is used to compensate for a bit-width induced blowup factor
originating from sorting network based SAT encodings \cite{ES06:jsat}.
This line of research led to optimization rewriting
\cite{BGJ16:iclp}, which is also the focus of
Section~\ref{section:application}. In contrast to normalization, where special
constraints are compiled away altogether, the goal of rewriting
is more relaxed and concerns the reformulation of optimization
statements received as input with the help of additional atoms and
rules.
Section~\ref{section:application} furnishes the weight propagation concept
introduced in this paper to provide a novel optimization rewriting technique.
The technique is distinguished by the fact
that it always requires only a single comparator network regardless of the bit
width of input weights.
Moreover, the network needs to be only a comparator network,
as opposed to a sorting network.
These properties make it easier to find a network
that meets any desired depth and size parameters,
which can be chosen to maximize performance on a given class of problems.

In the context of optimization, sorting networks have also been used to
express sets of mutually similar cardinality constraints generated during
solving by the (unsatisfiable) \emph{core-guided} Maximum Satisfiability
(MaxSAT) solver \system{MSCG} \cite{modoma14}.
MaxSAT is closely related to ASP optimization, witnessed by the fact that
the mentioned solver builds on the algorithm OLL originally devised
for ASP optimization \cite{ankamasc12a} in the ASP solver \system{clasp}.
Core-guided optimization methods such as these start from the unfeasible
region of the search space, solving progressively relaxed, unsatisfiable
subproblems until a solution is found.
In this solving process, reasons for unsatisfiability are characterized by
so-called unsatisfiable
cores that form the basis of analysis in the OLL algorithm, which yields
cardinality constraints used to guide the relaxation steps.
In the more traditional, \emph{model-guided} optimization strategy, such as
branch-and-bound, search begins from the
feasible region of the search space, as models of successively improved value
are sought until no improvement is possible
\cite{wasp_optimization}.
In experimental evaluations presented in
Section~\ref{section:experimental}, optimization rewriting and model-guided
optimization are compared against core-guided optimization,
and the techniques are found to excel at different benchmarks.
However, in contrast to core-guided methods,
optimization rewriting
works in principle with any solving
approach
by virtue of being a preprocessing technique.
This includes model-guided optimization, to which it contributes
the benefits of comparator network encodings.

Another recent development \cite{implicit_hitting_set_asp} in ASP optimization
is to isolate all arithmetic reasoning into a separate \emph{implicit hitting
set (IHS)} problem \cite{implicit_hitting_set_journal}.
The approach stems from MaxSAT~\cite{implicit_hitting_set_maxsat}, and can be
seen as a variation of core-guided optimization where the relaxation steps are
performed differently in order to cope better with non-unit weights.
Namely, each core is encoded together with the optimization criteria in an IHS
problem that can be solved via integer linear programming (ILP).
The solution to the IHS problem is used in relaxing the problem in a way that
requires no added rules.
The use of an ILP solver for this task results in a hybrid approach for ASP
optimization that alternates between ASP decision solving for core extraction
and IHS solving via ILP for problem relaxation \cite{implicit_hitting_set_asp}.
The specific avoidance of new rules and atoms, and the outsourcing of
arithmetic computations to an external solver are in stark contrast with this
work and indicate that the approach seeks performance benefits via a highly
orthogonal manner.
Indeed, the approach does not empower the ASP computation with the potential
benefits of new auxiliary atoms discussed in Section~\ref{section:application}.
Moreover, a combination of the approaches is possible and may prove fruitful,
but such a study is left out of the scope of this work.

There is existing work on formally analyzing abstract solvers for SAT and ASP.
The analysis in Section~\ref{section:formal} is distinguished from existing work
that we are aware of in that it shows an exponential improvement in propagator
based solving of optimization problems in answer set programming.
Regarding the related work, Anger et al.~\citeyear{AGJS06:ecai} demonstrated that
simple program transformations that add structure to an answer set program can
exponentially reduce the search space explored by a state-of-the-art answer set
solver of the time.
The added structure was in the form of new intermediate atoms and rules used
to explicitly express rule bodies.
A formal proof system was provided by Gebser and
Schaub~\citeyear{asp_tableau_calculi_journal} and used to prove exponentially
different best-case computation lengths between different ASP algorithms.
The proof system was a form of tableaux calculi.
It was extended by J{\"{a}}rvisalo and Oikarinen
\citeyear{DBLP:journals/tplp/JarvisaloO08} to form Extended ASP Tableaux, which
further defines an extension rule based on the addition of redundant structure
in the form of added rules.
This addition of redundant structure was proven to enable polynomial length
proofs on a family of normal logic programs on which proofs in the original
proof system were of exponential length at minimum.
In the field of SAT and SMT, an abstract framework was put forth by J{\"{a}}rvisalo
and Oikarinen \citeyear{DBLP:journals/jacm/NieuwenhuisOT06} for describing
standard search procedures for SAT.
In the framework, graphs are used to capture the behavior of solving
algorithms.
In the graphs, nodes represent solver states and directed edges represent
various actions that the algorithms may perform to move between states.
The graphs facilitate formally precise description of algorithms
as well as analysis of their properties.
Based on this framework for SAT and SMT, a similar framework was developed for
describing, analyzing and comparing various ASP solving algorithms
\cite{DBLP:journals/tplp/Lierler11}.
More recently,
a framework was developed for integrating multiple reasoning formalisms, such as
SAT and ASP, together as equal standing components
\cite{DBLP:journals/ai/LierlerT16}.
The framework is able to incorporate the semantics of both propositional
theories and logic programs, and moreover SMT can be translated into it.

\section{Conclusion}
\label{section:conclusion}

In this paper, we present a novel technique for rewriting pseudo-Boolean
expressions deployed as objective functions in ASP and other constraint-based
paradigms.
The technique is based on the novel idea of connecting an encoding of a
comparator network to the literals of an objective function and redistributing
the coefficients of the objective function systematically over the structure of
the network as weights.
When translated into a target formalism, such as ASP, auxiliary atoms used to
express the structure of the network offer the underlying back-end solver
additional branching points and concepts to learn about.
In this paper, we formally analyzed and experimentally evaluated the idea in the
context of ASP.
As part of this, we provide a formal analysis that highlights an exponential
separation in solving performance on an example family of answer set programs.
We implemented the approach in a tool called \system{pbtranslate},
which we evaluated in computational experiments.
In the experiments, we obtained positive experimental results using
\system{pbtranslate} for rewriting optimization statements and \system{clasp}
(v.~3.3.3) as the back-end ASP solver.
We found several benchmark problems where the search for an optimal answer set
is significantly accelerated using designs based on sorting networks.
This holds in comparison to both branch-and-bound and core-guided
optimization strategies of \system{clasp}.
Although rewritten optimization statements have an increased size, the
introduction of useful auxiliary variables and the redistribution of weights
more than compensates for this cost on these benchmark problems.
The idea is moreover completely general and we anticipate further
applications of this technique in neighboring paradigms in addition to ASP.
As regards future work, we believe it is worthwhile to consider
similar techniques using more general types of networks, such as
\emph{permutation networks} \cite{Waksman68:jacm}.

\section*{Acknowledgements}

We would like to thank the anonymous reviewers and Dr.~Martin Gebser
for valuable comments and suggestions.  This work has been supported
in part by the Finnish centre of excellence in
\emph{Computational Inference Research} (COIN)
(Academy of Finland, project \#251170).
Moreover, Jori Bomanson has been supported by Helsinki Doctoral
Network in Information and Communication Technology (HICT) and Tomi
Janhunen partially by the Academy of Finland project
\emph{Ethical AI for the Governance of Society}
(ETAIROS, grant \#327352).

\bibliographystyle{acmtrans}


\end{document}